\let\oldtabular\tabular 
\renewcommand{\tabular}{\normalsize\oldtabular}
\DeclareMathOperator{\sgn}{sgn}
\newcommand{\allzeros}{\ensuremath{\mathbf{0}}}
\newcommand{\abs}[1]{\ensuremath{\mathopen\lvert #1 \mathclose\rvert}}
\newcommand{\Abs}[1]{\ensuremath{\left| #1 \right|}}
\newcommand{\Ball}{\ensuremath{\mathrm{B}}}
\newcommand{\Simplex}{\ensuremath{\Delta}}
\newcommand{\CornerSimplex}{\ensuremath{\Delta_\mathrm{c}}}
\newcommand{\eps}{\ensuremath{\varepsilon}}
\newcommand{\RR}{\ensuremath{\mathbb{R}}}
\newcommand{\QQ}{\ensuremath{\mathbb{Q}}}
\newcommand{\NP}{\ensuremath{\mathrm{NP}}}
\newcommand{\PSPACE}{\ensuremath{\mathrm{PSPACE}}}
\newcommand{\EXPSPACE}{\ensuremath{\mathrm{EXPSPACE}}}
\newcommand{\cETR}{\ensuremath{\exists\RR}}
\newcommand{\cETQ}{\ensuremath{\exists\QQ}}
\newcommand{\PPAD}{\ensuremath{\mathrm{PPAD}}}
\newcommand{\FIXP}{\ensuremath{\mathrm{FIXP}}}
\newcommand{\Exp}{\operatorname*{E}}
\newtheorem{theorem}{Theorem}
\newtheorem{proposition}{Proposition}
\newtheorem{lemma}{Lemma}
\newtheorem{definition}{Definition}
\newtheorem{remark}{Remark}
\newcommand{\theory}{\ensuremath{\mathrm{Th}}}
\newcommand{\Etheory}{\ensuremath{\theory_\exists}}
\newcommand{\ETR}{\ensuremath{\Etheory(\RR)}}
\newcommand{\ETQ}{\ensuremath{\Etheory(\QQ)}}
\newcommand{\QUAD}{\textsc{Quad}}
\newcommand{\NewExistsCC}[1]{
  \expandafter\newcommand\csname Exists#1\endcsname{\ensuremath{\exists\textsc{#1}}}}
\newcommand{\allbot}{\ensuremath{(\bot,\bot,\bot)}}
\newcommand{\allG}{\ensuremath{(G,G,G)}}
\newcommand{\calS}{\ensuremath{\mathcal{S}}}
\newcommand{\calG}{\ensuremath{\mathcal{G}}}
\newcommand{\calH}{\ensuremath{\mathcal{H}}}
\newcommand{\calD}{\ensuremath{\mathcal{D}}}
\newcommand{\support}{\ensuremath{\mathrm{Supp}}}
\date{January 15, 2020}
\title{On the Computational Complexity of Decision Problems about
  Multi-Player Nash Equilibria\thanks{This paper forms an extension of parts of the master's thesis of the first author and has appeared previously in a preliminary form~\cite{SAGT:BerthelsenH19}. The second author is supported by the Independent Research Fund Denmark under grant no. 9040-00433B.}}
\author{Marie Louisa Tølbøll Berthelsen}
\author{Kristoffer Arnsfelt Hansen}
\affil{Aarhus University}
\begin{document}
\maketitle
\begin{abstract}
  We study the computational complexity of decision problems about
  Nash equilibria in $m$-player games. Several such problems have
  recently been shown to be computationally equivalent to the decision
  problem for the existential theory of the reals, or stated in terms
  of complexity classes, $\cETR$-complete, when $m\geq 3$. We show
  that, unless they turn into trivial problems, they are $\cETR$-hard
  even for 3-player \emph{zero-sum} games.

  We also obtain new results about several other decision problems.
  We show that when $m\geq 3$ the problems of deciding if a game has a
  Pareto optimal Nash equilibrium or deciding if a game has a strong
  Nash equilibrium are $\cETR$-complete. The latter result rectifies a
  previous claim of $\NP$-completeness in the literature. We show that
  deciding if a game has an irrational valued Nash equilibrium is
  $\cETR$-hard, answering a question of Bilò and Mavronicolas, and
  address also the computational complexity of deciding if a game has
  a rational valued Nash equilibrium. These results also hold for
  3-player zero-sum games.

  \begin{sloppypar}
  Our proof methodology applies to corresponding decision problems
  about symmetric Nash equilibria in symmetric games as well, and in
  particular our new results carry over to the symmetric
  setting. Finally we show that deciding whether a symmetric
  $m$-player games has a \emph{non-symmetric} Nash equilibrium is
  $\cETR$-complete when $m\geq 3$, answering a question of Garg,
  Mehta, Vazirani, and Yazdanbod.
\end{sloppypar}
\end{abstract}

\section{Introduction}
Given a finite strategic form $m$-player game the most basic
algorithmic problem is to compute a Nash equilibrium, shown always to
exist by Nash~\cite{AM:Nash51}. The computational complexity of this
problem was characterized in seminal work by Daskalakis, Goldberg, and
Papadmitriou~\cite{SICOMP:DaskalakisGP09} and Chen and
Deng~\cite{FOCS:ChenDeng06} as $\PPAD$-complete for 2-player games and
by Etessami and Yannakakis~\cite{SICOMP:EtessamiY10} as
$\FIXP$-complete for $m$-player games, when $m \geq 3$. Any 2-player
game may be viewed as a 3-player \emph{zero-sum} game by adding a
dummy player, thereby making the class of 3-player zero-sum games a
natural class of games intermediate between 2-player and 3-player
games. The problem of computing a Nash equilibrium for a 3-player
zero-sum game is clearly $\PPAD$-hard and belongs to $\FIXP$, but its
precise complexity appears to be unknown.

Rather than settling for \emph{any} Nash equilibrium, one might be
interested in a Nash equilibrium that satisfies a given property,
e.g.\ giving each player at least a certain payoff. Such a Nash
equilibrium might of course not exist and therefore results in the
basic computational problem of deciding existence. In the setting of
2-player games, the computational complexity of several such problems
was proved to be $\NP$-complete by Gilboa and
Zemel~\cite{GEB:GilboaZemel89}. Conitzer and
Sandholm~\cite{GEB:ConitzerS08} revisited these problems and showed
them, together with additional problems, to be $\NP$-complete even for
symmetric games.

Only recently was the computational complexity of analogous problems
in $m$-player games determined, for $m\geq 3$. Schaefer and
Štefankovič~\cite{TOCS:SchaeferS15} obtained the first such result by
proving $\cETR$-completeness of deciding existence of a Nash
equilibrium in which no action is played with probability larger
than~$\tfrac{1}{2}$ by any player. Garg, Mehta, Vazirani, and
Yazdanbod~\cite{TEAC:GargMVY18} used this to also show
$\cETR$-completeness for deciding if a game has more than one Nash
equilibrium, whether each player can ensure a given payoff in a Nash
equilibrium, and for the two problems of deciding whether the support
sets of the mixed strategies of a Nash equilibrium can belong to given
sets or contain given sets. In addition, by a symmetrization
construction, they show that the analogue to the latter two problems
for symmetric Nash equilibria are $\cETR$-complete as well. Bilò and
Mavronicolas~\cite{STACS:BiloM16,STACS:BiloM17} subsequently extended
the results of Garg~et~al.\ to further problems both about Nash
equilibria and about symmetric Nash equilibria. They show
$\cETR$-completeness of deciding existence of a Nash equilibrium where
all players receive at most a given payoff, where the total payoff of
the players is at least or at most a given amount, whether the size of
the supports of the mixed strategies all have a certain minimum or
maximum size, and finally whether a Nash equilibrium exists that is
\emph{not} Pareto optimal or that is \emph{not} a strong Nash
equilibrium. All the analogous problems about symmetric Nash
equilibria are shown to be $\cETR$-complete as well.

\subsection{Our Results}
We revisit the problems about existence of Nash equilibria in
$m$-player games, with $m\geq 3$, considered by Garg~et~al.\ and Bilò
and Mavronicolas. In a zero-sum game the total payoff of the players
in any Nash equilibrium is of course~0, and any Nash equilibrium is
Pareto optimal. This renders the corresponding decision problems
trivial in the case of zero-sum games. We show except for these, all
the problems considered by Garg~et~al.\ and Bilò and Mavronicolas
remain $\cETR$-hard for 3-player zero-sum games. We obtain our results
building on a recent more direct and simple proof of $\cETR$-hardness
of the initial $\cETR$-complete problem of Schaefer and Štefankovič
due to Hansen~\cite{TCS:Hansen19}. For completeness we give also
comparably simpler proofs of $\cETR$-hardness for the problems about
total payoff and existence of a non Pareto optimal Nash
equilibrium.

We next show that deciding existence of a strong Nash equilibrium in
an $m$-player game with $m\geq 3$ is $\cETR$-complete, and likewise
for the similar problem of deciding existence of a Pareto optimal Nash
equilibrium. Gatti, Rocco, and Sandholm~\cite{AAMAS:GattiRS13} proved
earlier that deciding if a given (rational valued) strategy profile
$x$ is a strong Nash equilibrium can be done in polynomial time. They
then erroneously concluded that the problem of deciding existence of a
strong Nash equilibrium is, as a consequence $\NP$-complete. A problem
with this reasoning is that if a strong Nash equilibrium exists, there
is no guarantee that a rational valued strong Nash equilibrium
exists. Even if one disregards a concern about irrational valued
strong Nash equilibria, it is possible that even when a rational
valued strong Nash equilibrium exists, any rational valued strong Nash
equilibrium could require \emph{exponentially} many bits to describe
in standard binary notation the numerators and denominators of the
probabilities of the equilibrium strategy profile. Nevertheless, our
proof of $\cETR$-membership build on the idea behind the polynomial
time algorithm of Gatti~et~al. Our reduction for proving
$\cETR$-hardness produces non-zero-sum games.  The case of deciding
existence of a Pareto optimal Nash equilibrium is, as already noted,
trivial for the case of 3-player zero-sum games. We leave the
complexity of the deciding existence of a strong Nash equilibrium in
3-player zero-sum games an open problem.

In another work, Bilò and Mavronicolas~\cite{TOCS:BiloM14} considered
the problems of deciding whether an irrational valued Nash equilibrium
exists and whether a rational valued Nash equilibrium exists, proving
both problems to be $\NP$-hard.  Bilò and Mavronicolas asked if the
problem about existence of an irrational valued Nash equilibria is 
hard for the so-called square-root-sum problem. We confirm this,
showing the problem to be $\cETR$-hard. We relate the problem about
existence of rational valued Nash equilibria to the existential theory
of the rationals.

We next use a symmetrization construction similar to Garg~et~al.\ to
translate all problems considered to the analogous setting of decision
problems about symmetric Nash equilibria. Here we do not obtain
qualitative improvements on existing results, but give for
completeness the simple proofs of these results in addition to our new
results.

A final problem we consider is of deciding existence of a
\emph{nonsymmetric} Nash equilibrium of a given symmetric game. Mehta,
Vazirani, and Yazdanbod~\cite{SAGT:MehtaVY15} proved that this problem
is $\NP$-complete for 2-player games, and
Garg~et~al.~\cite{TEAC:GargMVY18} raised the question of the
complexity for $m$-player games with $m\geq 3$. We show this problem to
be $\cETR$-complete.

Our hardness proofs are presented for the special case of 3-player
games, but extend to $m$-player games for any fixed $m > 3$, in a
similar way to previous
works~\cite{TEAC:GargMVY18,STACS:BiloM16,STACS:BiloM17}. For the case
of nonsymmetric games this is achieved by adding $m-3$ dummy players
with suitably chosen actions sets and payoff functions
(cf.~\cite{STACS:BiloM16}). Zero-sum games are, of course, mainly
interesting for 3-player games.  For the case of symmetric games, the
$m-3$ dummy players can be introduced prior to the symmetrization
construction and this together with the reductions that follow are
easily generalized to $m$ players.

\section{Preliminaries}

\subsection{Existential Theory of the Reals and Rationals}

The existential theory of the reals $\ETR$ is the set of all true
sentences over $\RR$ of the form
$\exists x_1,\dots,x_n \in \RR : \phi(x_1,\dots,x_n)$, where $\phi$ is
a quantifier free Boolean formula of equalities and inequalities of
polynomials with integer coefficients. The complexity class $\cETR$ is
defined~\cite{TOCS:SchaeferS15} as the closure of $\ETR$ under
polynomial time many-one reductions. Equivalently, $\cETR$ is the
constant-free Boolean part of the class $\NP_\RR$~\cite{FOCM:BurgisserC09}, which is the
analogue class to $\NP$ in the Blum-Shub-Smale model of
computation~\cite{BAMS:BlumSS89}. It is straightforward to see that
$\ETR$ is \NP-hard (cf.~\cite{JCSS:BussFS99}) and the decision
procedure by Canny~\cite{STOC:Canny88} shows that $\ETR$ belongs to
$\PSPACE$. Thus it follows that
$\NP \subseteq \cETR \subseteq \PSPACE$.

We may similarly consider the existential theory  over the
rationals $\ETQ$ and likewise form the complexity class $\cETQ$ as the
closure of $\ETQ$ under polynomial time many-one reductions. While it
is a long-standing open problem whether $\ETQ$ is decidable,
Koenigsmann~\cite{AnnMath:Koenigsmann16} recently showed that already
$\theory_{\forall\exists}(\QQ)$, consisting of true sentences in
prenex form with a single block of universal quantifiers followed by a
single block of existential quantifiers, is undecidable. In contrast,
the entire first order theory $\theory(\RR)$ of the reals is decidable
in \EXPSPACE~\cite{JSC:Renegar92}. Schaefer and
Štefankovič~\cite{TOCS:SchaeferS15} show that the problem of deciding
feasibility of a system of \emph{strict} inequalities is complete for
$\cETR$. Since a system of strict inequalities that is feasible over
$\RR$ is also feasible over $\QQ$, it follows that
$\cETR \subseteq \cETQ$.

The basic complete problem for $\cETR$ and for $\cETQ$, is the problem
of deciding whether a system of quadratic equations with integer
coefficients has a solution over $\RR$ and over $\QQ$,
respectively~\cite{BAMS:BlumSS89}. We denote this problem over $\RR$
as $\QUAD$ and the problem over $\QQ$ as $\QUAD_\QQ$.

\subsection{Strategic Form Games and Nash Equilibrium}
A finite strategic form game $\calG$ with $m$ players is given by sets
$S_1,\dots,S_m$ of actions (\emph{pure strategies}) together with
\emph{utility functions}
$u_1,\dots,u_m : S_1 \times \dots \times S_m \rightarrow \RR$. A
choice of an action $a_i\in S_i$ for each player together form a pure
strategy profile $a=(a_1,\dots,a_m)$.

\begin{sloppypar}
The game $\calG$ is \emph{symmetric} if $S_1=\dots=S_m$ and for every
permutation $\pi$ on $[m]$, every $i \in [m]$ and every
$(a_1,\dots,a_m) \in S_1\times\dots\times S_m$ it holds that
$u_i(a_1,\dots,a_m)= u_{\pi(i)}(a_{\pi(1)},\dots,a_{\pi(m)})$. In other
words, a game is symmetric if the players share the same set of
actions and the utility function of a player depends only on the
action of the player together with the \emph{multiset} of actions of
the other players.
\end{sloppypar}

Let $\Delta(S_i)$ denote the set of probability distributions on
$S_i$.  A \emph{(mixed) strategy} for Player~$i$ is an element
$x_i \in \Delta(S_i)$. The \emph{support} $\support(x_i)$ is the set
of actions given strictly positive probability by $x_i$. We say that
$x_i$ is \emph{fully mixed} if $\support(x_i)=S_i$. A strategy $x_i$
for each player $i$ together form a strategy profile
$x=(x_1,\dots,x_m)$. The utility functions  extend to
strategy profiles by letting
$u_i(x)=\Exp_{a \sim x}u_i(a_1,\dots,a_m)$. We shall also refer to
$u_i(x)$ as the \emph{payoff} of Player~$i$.

Given a strategy profile $x$ we let
$x_{-i}=(x_1,\dots,x_{i-1},x_{i+1},\dots,x_m)$ denote the strategies
of all players except Player~$i$. Given a strategy $y \in S_i$ for
Player~$i$, we let $(x_{-i};y)$ denote the strategy profile
$(x_1,\dots,x_{i-1},y,x_{i+1},\dots,x_m)$ formed by $x_{-i}$ and
$y$. We may also denote $(x_{-i};y)$ by $x\setminus y$. We say that
$y$ is a \emph{best reply} for Player~$i$ to $x$ (or to $x_{-i}$) if
$u_i(x\setminus y) \geq u_i(x\setminus y')$ for all
$y'\in\Delta(S_i)$.

A \emph{Nash equilibrium} (NE) is a strategy profile $x$ where each
individual strategy $x_i$ is a best reply to $x$. As shown by
Nash~\cite{AM:Nash51}, every finite strategic form game $\calG$ has a
Nash equilibrium. In a symmetric game $\calG$, a \emph{symmetric Nash
  equilibrium} (SNE) is a Nash equilibrium where the strategies of all
players are identical. Nash also proved that every symmetric game has
a symmetric Nash equilibrium.

A strategy profile $x$ is \emph{Pareto optimal} if there is no
strategy profile $x'$ such that $u_i(x)\leq u_i(x')$ for all $i$, and
$u_j(x) < u_j(x')$ for some $j$. A Nash equilibrium strategy profile
need not be Pareto optimal and a Pareto optimal strategy profile need
not be a Nash equilibrium. A strategy profile that is both a Nash
equilibrium and is Pareto optimal is called a Pareto optimal Nash
equilibrium. The existence of a Pareto optimal Nash equilibrium is not
guaranteed.

A \emph{strong Nash equilibrium}~\cite{PJM:Aumann60} (strong NE) is a
strategy profile $x$ for which there is no non-empty set
$B \subseteq [m]$ for which \emph{all} players $i \in B$ can increase
their payoff by different strategies assuming players
$j \in [m]\setminus B$ play according to $x$. Equivalently, $x$ is a
strong Nash equilibrium if for every strategy profile $x'\neq x$ there
exist $i$ such that $x_i \neq x'_i$ and $u_i(x') \leq u_i(x)$.  The
existence of a strong Nash equilibrium is not guaranteed.

\section{Decision Problems about Nash Equilibria}
\label{SEC:DecisionProblemsNE}

Below we define the decision problems under consideration with names
generally following Bilò and Mavronicolas~\cite{STACS:BiloM16}. The given input is a finite
strategic form game $\calG$, together with auxiliary input depending
on the particular problem. We let $u$ denote a rational number, $k$ an
integer, and $T_i \subseteq S_i$ a set of actions of Player~$i$, for
every $i$. We describe the decision problem by stating the property a
Nash equilibrium $x$ whose existence is to be determined should
satisfy. The problems are grouped together in four groups each of
which are covered in a separate subsection below.

Except for the last four problems, it is straightforward to
prove membership in $\cETR$ by an explicit existentially quantified
first-order formula. We prove $\cETR$ membership of
$\ExistsParetoOptimalNE$ and $\ExistsStrongNE$ in
subsection~\ref{SEC:ParetoStrongNE} and discuss decidability of
$\ExistsIrrationalNE$ and $\ExistsRationalNE$ in
subsection~\ref{SEC:IrrationalRationNE}.
\begin{center}
\begin{tabular}{ll}\toprule
  Problem & Condition\\\midrule
\ExistsNEWithLargePayoffs & $u_i(x)\geq u$ for all~$i$.\\
\ExistsNEWithSmallPayoffs & $u_i(x)\leq u$ for all~$i$.\\
\ExistsNEWithLargeTotalPayoff & $\sum_i u_i(x)\geq u$.\\
\ExistsNEWithSmallTotalPayoff & $\sum_i u_i(x)\leq u$.\\\midrule
\ExistsNEInABall & $x_i(a_i)\leq u$ for all~$i$ and $a_i\in S_i$.\\
\ExistsSecondNE & $x$ is not the only NE.\\
\ExistsNEWithLargeSupports & $\abs{\support(x_i)}\geq k$ for all~$i$.\\
\ExistsNEWithSmallSupports &  $\abs{\support(x_i)}\leq k$ for all~$i$.\\
\ExistsNEWithRestrictingSupports &  $T_i\subseteq \support(x_i)$ for all~$i$.\\
  \ExistsNEWithRestrictedSupports & $\support(x_i) \subseteq T_i$ for all~$i$.\\\midrule
\ExistsNonParetoOptimalNE & $x$ is not Pareto optimal.\\
\ExistsNonStrongNE & $x$ is not a strong NE.\\
\ExistsParetoOptimalNE & $x$ is Pareto optimal.\\
\ExistsStrongNE & $x$ is a strong NE.\\\midrule
\ExistsIrrationalNE & $x_i(a_i) \not\in \QQ$ for some $i$ and $a_i\in S_i$.\\
  \ExistsRationalNE & $x_i(a_i) \in \QQ$ for all $i$ and $a_i \in S_i$.\\\bottomrule  
\end{tabular}
\end{center}

A key step (implicitly present) in the proof of the first
$\cETR$-hardness result about Nash equilibrium in 3-player games by
Schaefer and Štefankovič is a result due to
Schaefer~\cite{GD:Schaefer09} that \QUAD\ remains $\cETR$-hard under
the \emph{promise} that either the given quadratic system has no
solutions or a solution exists in the unit ball
$\Ball(\allzeros,1)$. For our purposes the following variation
\cite[Proposition~2]{TCS:Hansen19} will be more directly applicable
(which may easily be proved from the latter,
cf.~Section~\ref{SEC:IrrationalRationNE}). Here we denote by
$\CornerSimplex^n$ the standard corner $n$-simplex
$\{x \in \RR^n \mid x\geq 0 \wedge \sum_{i=1}^n x_i \leq 1\}$.
\begin{proposition}
  \label{PROP:QUAD-CornerSimplex} 
  It is $\cETR$-hard to decide if a given system of quadratic
  equations in~$n$ variables and with integer coefficients has a
  solution under the promise that either the system has no solutions
  or a solution $z$ exists that is in the interior of
  $\CornerSimplex^n$ and also satisfies $z_i\leq \frac{1}{2}$ for all
  $i$ and that $\sum_{i=1}^n z_i \geq \frac{1}{2}$.
\end{proposition}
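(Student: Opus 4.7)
The plan is to reduce from Schaefer's result~\cite{GD:Schaefer09} stating that $\QUAD$ is $\cETR$-hard under the promise that either the input quadratic system has no solution or a solution exists in $\Ball(\allzeros,1)$; in the latter case every coordinate satisfies $|x_i|\leq 1$. Given such an instance $Q_1(x)=\dots=Q_k(x)=0$ in $N$ variables with integer coefficients, I will construct an equivalent quadratic system in $n = N+2$ variables whose promised solution lies precisely where Proposition~\ref{PROP:QUAD-CornerSimplex} requires.

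The construction is a shift-and-scale on each original variable together with two pinned slack variables. Perform the substitution $x_i \mapsto 8N z_i - 2$ for $i=1,\dots,N$, which transforms each $Q_j$ into a polynomial of degree at most two in $z_1,\dots,z_N$ with integer coefficients. Then introduce two fresh variables $z_{N+1}, z_{N+2}$ pinned by the (linear, hence quadratic) equations $4z_{N+1}-1=0$ and $4z_{N+2}-1=0$. The combined system has integer coefficients and polynomial size, so this is a polynomial-time reduction within $\QUAD$.

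For the verification, any solution of the new system yields a solution of the original via $x_i = 8N z_i - 2$, so unsatisfiability is preserved. Conversely, given an original solution $x \in \Ball(\allzeros,1)$, set $z_i = (x_i + 2)/(8N)$ for $i \leq N$ and $z_{N+1} = z_{N+2} = \tfrac{1}{4}$. Then $z_i \in [1/(8N), 3/(8N)]$ for $i \leq N$, so every coordinate of $z$ is strictly positive and at most $\tfrac{3}{8}$ (or equals $\tfrac{1}{4}$), giving $z_i \leq \tfrac{1}{2}$. The total $\sum_{i=1}^n z_i$ lies in $[\tfrac{5}{8}, \tfrac{7}{8}]$, so $z$ is in the interior of $\CornerSimplex^n$ and satisfies $\sum_{i=1}^n z_i \geq \tfrac{1}{2}$, as required.

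The only design choice requiring thought is the scaling factor $8N$ (linear in the input size). It must be large enough that $\sum_{i=1}^N z_i$ stays strictly below $\tfrac{1}{2}$, so that the two fixed slack values $\tfrac{1}{4}$ can simultaneously push the total sum above $\tfrac{1}{2}$ while keeping it strictly below $1$ and each individual coordinate at most $\tfrac{1}{2}$; beyond this tuning, no real obstacle arises.
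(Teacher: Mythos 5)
Your proposal is correct and follows essentially the same route as the paper, which obtains the proposition from Schaefer's promise version of \QUAD\ (solution in $\Ball(\allzeros,1)$ or none) by a simple scaling and translation, citing Hansen's Proposition~2 rather than spelling it out. Your explicit shift-and-scale $x_i \mapsto 8Nz_i-2$ with two pinned slack variables is a valid instantiation of exactly that argument, and the constants do place the promised solution in the interior of $\CornerSimplex^n$ with $z_i\leq\frac{1}{2}$ and $\sum_i z_i\geq\frac{1}{2}$.
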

Schaefer and Štefankovič showed that $\ExistsNEInABall$ is
$\cETR$-hard for 3-player ga\-mes by first proving that the following
problem is $\cETR$-hard: Given a continuous function
$f : \Ball(\allzeros,1) \rightarrow \Ball(\allzeros,1)$ mapping the
unit ball to itself, where each coordinate function $f_i$ is given as
a polynomial, and given a rational number $r$, is there a fixed point
of $f$ in the ball $\Ball(\allzeros,r)$? The proof was then concluded
by a transformation of Brouwer functions into 3-player games by
Etesammi and Yannakakis~\cite{SICOMP:EtessamiY10}. This latter
reduction is rather involved and goes though an intermediate
construction of 10-player games. More recently,
Hansen~\cite{TCS:Hansen19} gave a simple and direct reduction from the
above promise version of $\QUAD$ to $\ExistsNEInABall$.

The first step of this as well as our reductions is to transform 
the given quadratic system over the corner simplex $\CornerSimplex^n$
into a homogeneous bilinear system over the standard $n$-simplex
$\{x \in \RR^{n+1} \mid x\geq 0 \wedge \sum_{i=1}^{n+1} x_i = 1\}$
which we denote by $\Simplex^n$. In short, this is done by introducing
a set of new variables $y_i$ and new equations $x_i-y_i=0$, replacing
quadratic terms $x_ix_j$ by bilinear quadratic terms $x_iy_j$, and
finally homogenizing the entire system using the two equations
$\sum_{i=1}^{n+1} x_i=1$ and $\sum_{i=1}^{n+1} y_i=1$ where $x_{n+1}$ and
$y_{n+1}$ are new \emph{slack variables}. Doing this we arrive at the
following statement (cf.~\cite[Proposition~3]{TCS:Hansen19}).
\begin{proposition}
\label{PROP:BilinearSystem}
It is $\cETR$-complete to decide if a system of homogeneous bilinear
equations $q_k(x,y)=0$, $k=1,\dots,\ell$ with integer coefficients has
a solution $x,y \in \Simplex^n$. It remains $\cETR$-hard under the
promise that either the system has no such solution or a solution
$(x,x)$ exists where $x$ belong to the relative interior of
$\Simplex^n$ and further satisfies $x_i \leq \frac{1}{2}$ for all $i$.
\end{proposition}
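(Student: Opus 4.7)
Membership in $\cETR$ is direct: the existence of $x,y \in \Simplex^n$ satisfying a bilinear system is expressible as an existentially quantified first-order sentence over $\RR$ using only polynomial equalities and inequalities, which yields a polynomial-time many-one reduction to $\ETR$.

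For hardness, the plan is to reduce from the promise version of $\QUAD$ established in Proposition~\ref{PROP:QUAD-CornerSimplex}. Given a quadratic system $p_1(z)=\cdots=p_\ell(z)=0$ in variables $z=(z_1,\dots,z_n)$, write $p_k(z)=\sum_{i,j=1}^n a^{(k)}_{ij}z_iz_j+\sum_{i=1}^n b^{(k)}_i z_i+c^{(k)}$. I would introduce fresh variables $x_1,\dots,x_{n+1}$ and $y_1,\dots,y_{n+1}$, treating $x_{n+1}$ and $y_{n+1}$ as slack variables, and output the bilinearized and homogenized equations
\[
  q_k(x,y) := \sum_{i,j=1}^n a^{(k)}_{ij}x_i y_j + \Bigl(\sum_{i=1}^n b^{(k)}_i x_i\Bigr)\Bigl(\sum_{j=1}^{n+1}y_j\Bigr) + c^{(k)}\Bigl(\sum_{i=1}^{n+1}x_i\Bigr)\Bigl(\sum_{j=1}^{n+1}y_j\Bigr) = 0
\]
for $k=1,\dots,\ell$, together with the copy-enforcing equations $r_i(x,y) := x_i\sum_{j=1}^{n+1}y_j - y_i\sum_{j=1}^{n+1}x_j = 0$ for $i=1,\dots,n$. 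Each polynomial is homogeneous of bidegree $(1,1)$ with integer coefficients, as required.

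The correctness argument rests on the observation that when $x,y\in\Simplex^n$ the simplex sums collapse to $1$, so $q_k(x,y)=0$ becomes $\sum_{i,j}a^{(k)}_{ij}x_iy_j+\sum_i b^{(k)}_i x_i+c^{(k)}=0$ and $r_i(x,y)=0$ becomes $x_i=y_i$. Thus any solution must have $x=y$ (equality at the slack coordinate is forced by the simplex constraint), and substituting $y=x$ into $q_k$ recovers $p_k(x_1,\dots,x_n)=0$. Conversely, a solution $z\in\CornerSimplex^n$ of the original system lifts to the bilinear solution $x_i=y_i=z_i$ for $i\leq n$ with $x_{n+1}=y_{n+1}=1-\sum_{i=1}^n z_i$, giving the claimed bijection.

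The main bookkeeping step, and the only place where the enriched promise from Proposition~\ref{PROP:QUAD-CornerSimplex} is invoked, is checking that the promise transfers. If $z$ lies in the interior of $\CornerSimplex^n$ with each $z_i\leq\tfrac12$ and $\sum_i z_i\geq\tfrac12$, then $x_i=z_i>0$ for $i\leq n$ and the slack coordinate satisfies $x_{n+1}=1-\sum_i z_i\in(0,\tfrac12]$; interiority gives the positivity and the hypothesis $\sum_i z_i\geq\tfrac12$ is exactly what forces $x_{n+1}\leq\tfrac12$. Hence $x$ lies in the relative interior of $\Simplex^n$ with all $x_i\leq\tfrac12$, and the reduction meets the promise.
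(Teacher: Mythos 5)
Your proposal is correct and follows essentially the same route as the paper: the paper (citing Hansen's Proposition~3) performs exactly this bilinearization with copy equations $x_i-y_i=0$ and homogenization via the simplex sums with slack variables $x_{n+1},y_{n+1}$, and the promise transfer from Proposition~\ref{PROP:QUAD-CornerSimplex} is handled just as you describe. The only cosmetic remark is that your ``bijection'' claim should be read as the two implications actually needed (a simplex solution of the bilinear system yields some solution of the quadratic system, and a promised corner-simplex solution lifts to a promised simplex solution), which is what your argument establishes.
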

\subsection{Payoff Restricted Nash Equilibria}
\label{sec:payoff-restr-nash}
For proving the $\cETR$-hardness results we start by showing that it
is $\cETR$-hard to decide if a given zero-sum game has a Nash
equilibrium in which each player receives payoff~$0$. This is in
contrast to the earlier work of Garg~et~al.~\cite{TEAC:GargMVY18} and
Bilò and Mavronicolas~\cite{STACS:BiloM16,STACS:BiloM17} that reduce
from the $\ExistsNEInABall$ problem. On the other hand we do show
$\cETR$-hardness even under the promise that the Nash equilibrium also
satisfies the condition of $\ExistsNEInABall$. The construction and
proof below are modifications of proofs by Hansen~\cite[Theorem~1 and
Theorem~2]{TCS:Hansen19}.

\begin{definition}[The 3-player zero-sum game $\calG_0$]
  \begin{sloppypar}
   Let $\calS$ be a system of homogeneous bilinear polynomials
  $q_1(x,y),\dots,q_\ell(x,y)$ with integer coefficients in variables
  $x=(x_1,\dots,x_{n+1})$ and $y=(y_1,\dots,y_{n+1})$,
  \[
    q_k(x,y)=\sum_{i=1}^{n+1}\sum_{j=1}^{n+1}a_{ij}^{(k)}x_iy_j \enspace .
  \]
\end{sloppypar}
We define the 3-player game $\calG_0(\calS)$ as follows. The
  strategy set of Player~1 is the set
  $S_1=\{1,-1\} \times \{1,2,\dots,\ell\}$.  The strategy sets of
  Player~2 and Player~3 are $S_2=S_3=\{1,2,\dots,n+1\}$.  The
  (integer) utility functions of the players are defined by
  \[
    \tfrac{1}{2}u_1((s,k),i,j) = -u_2((s,k),i,j) = -u_3((s,k),i,j) =
    sa_{ij}^{(k)} \enspace .
  \]
  \label{DEF:G0}
\end{definition}
When the system $\calS$ is understood by the context, we simply write
$\calG_0=\calG_0(\calS)$. We think of the strategy $(s,k)$ of Player~1
as corresponding to the polynomial $q_k$ together with a sign $s$, the
strategy $i$ of Player~2 as corresponding to $x_i$ and the strategy
$j$ of Player~3 as corresponding to $y_j$. We may thus identify mixed
strategies of Player~2 and Player~3 with assignments to variables
$x,y \in \Simplex^n \subseteq \RR^{n+1}$.

The following observation is immediate from the definition of
$\calG_0$.
\begin{lemma}
  Any strategy profile $(x,y)$ of Player~2 and Player~3 satisfies for
  every $(s,k)\in S_1$ the equation
\begin{equation}
  \tfrac{1}{2}u_1((s,k),x,y)= -u_2((s,k),x,y) = -u_3((s,k),x,y) =sq_k(x,y) \enspace .
  \label{EQ:G0payoff}
\end{equation}
Hence $u_1(z,x,y)=u_2(z,x,y)=u_3(z,x,y)=0$ when $z$ is the uniform
distribution on $S_1$. Consequentially, any Nash equilibrium payoff
profile is of the form $(2u,-u,-u)$, where $u\geq 0$.
\label{LEM:G0payoff}
\end{lemma}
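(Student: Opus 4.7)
The lemma has three assertions and each is a short, essentially mechanical calculation; the plan is to dispatch them in order, with the third part using a small observation about symmetry between Players~2 and~3.

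First, to establish equation~\eqref{EQ:G0payoff}, I would just extend the utility functions to mixed strategies on the last two coordinates by expectation. Since Player~2 and Player~3 are independent, $u_1((s,k),x,y)=\sum_{i,j} x_i y_j\, u_1((s,k),i,j)$, and plugging in the definition from Definition~\ref{DEF:G0} gives $\sum_{i,j} x_i y_j\cdot 2sa_{ij}^{(k)} = 2s\sum_{i,j} a_{ij}^{(k)} x_i y_j = 2sq_k(x,y)$. The identical computation for $u_2$ and $u_3$ produces the other two equalities.

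Next, for the uniform distribution $z$ on $S_1=\{1,-1\}\times\{1,\dots,\ell\}$, I would write $u_1(z,x,y)=\sum_{(s,k)} z(s,k)\cdot u_1((s,k),x,y)=\tfrac{1}{2\ell}\sum_{k=1}^{\ell}(2q_k(x,y)-2q_k(x,y))=0$, because the $s=1$ and $s=-1$ contributions cancel for every $k$. The same cancellation gives $u_2(z,x,y)=u_3(z,x,y)=0$.

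Finally, for the Nash equilibrium payoff profile, observe two structural facts. First, the game is zero-sum: summing the three equalities in \eqref{EQ:G0payoff} shows that $u_1+u_2+u_3$ vanishes on every pure profile $((s,k),i,j)$, hence on every mixed profile by linearity. Second, the definition makes $u_2$ and $u_3$ literally identical as functions, so $u_2=u_3$ at any strategy profile, in particular at any NE. Thus an NE payoff profile has the form $(2u,-u,-u)$ for some $u\in\RR$. To see $u\geq 0$, note that by the previous paragraph Player~1 can always guarantee payoff~$0$ by deviating to the uniform strategy $z$; hence in any NE the payoff of Player~1 is at least~$0$, i.e.\ $2u\geq 0$. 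No step here is a real obstacle — the only point worth being careful about is to notice that $u_2=u_3$ holds identically (not only in equilibrium), so that the payoff profile automatically takes the claimed two-parameter form with no additional argument needed.
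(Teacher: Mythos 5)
Your proof is correct and follows exactly the argument the paper treats as immediate from Definition~\ref{DEF:G0}: multilinearity of expected utilities gives \eqref{EQ:G0payoff}, the $s=\pm1$ cancellation gives payoff $0$ under the uniform strategy $z$, and zero-sum plus $u_2=u_3$ combined with Player~1's ability to secure $0$ by deviating to $z$ yields the $(2u,-u,-u)$ form with $u\geq 0$. Nothing to add.
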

Next we relate solutions to the system $\calS$ to Nash equilibria in
$\calG_0$.
\begin{proposition}
  Let $\calS$ be a system of homogeneous bilinear polynomials
  $q_k(x,y)$, $k=1,\dots,\ell$. If $\calS$ has a solution
  $(x,y) \in \Simplex^n\times\Simplex^n$, then letting $z$ be the
  uniform distribution on $S_1$, the strategy profile $\sigma=(z,x,y)$
  is a Nash equilibrium of $\calG_0$ in which every player receives
  payoff~$0$. If in addition $(x,y)$ satisfies the promise of
  Proposition~\ref{PROP:BilinearSystem}, then $\sigma$ is fully mixed,
  Player~2 and Player~3 use identical strategies, and no action is
  chosen with probability more than~$\frac{1}{2}$ by any player.
  Conversely, if $(z,x,y)$ is a Nash equilibrium of $\calG_0$ in which
  every player receives payoff~$0$, then $(x,y)$ is a solution to
  $\calS$.
\label{PROP:G0NE}
\end{proposition}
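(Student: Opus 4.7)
The plan is to handle the three claims in turn, with most of the real work in the forward direction (which needs a genuine best-reply argument) and a short contradiction argument for the converse.

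For the forward direction, assume $(x,y)\in\Simplex^n\times\Simplex^n$ is a solution of $\calS$. By Lemma~\ref{LEM:G0payoff}, for every pure Player~1 action $(s,k)$ we have $u_1((s,k),x,y)=2sq_k(x,y)=0$ and $u_2((s,k),x,y)=u_3((s,k),x,y)=-sq_k(x,y)=0$. Hence Player~1's payoff against $(x,y)$ is $0$ no matter what mixed strategy they play, so Player~1 cannot profit by deviating. For Players~2 and~3 this is not automatic, so I would compute $u_2(z,i,y)$ for each pure action $i\in S_2$ with $z$ uniform on $\{1,-1\}\times\{1,\dots,\ell\}$: using $z(s,k)=\tfrac{1}{2\ell}$, the inner sum pairs the contribution from $(+1,k)$ with that from $(-1,k)$, and they cancel, giving $u_2(z,i,y)=0$ for every $i$. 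The symmetric computation yields $u_3(z,x,j)=0$ for every $j\in S_3$. Since every pure action of Players~2 and~3 is a best reply, every mixed action is too, so $\sigma=(z,x,y)$ is a Nash equilibrium with all-zero payoffs.

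The additional properties under the promise of Proposition~\ref{PROP:BilinearSystem} are then immediate to read off. If $(x,y)=(x,x)$ with $x$ in the relative interior of $\Simplex^n$ and $x_i\leq\tfrac12$ for all $i$, then Players~2 and~3 play the identical fully mixed strategy $x$; Player~1 plays the fully mixed uniform $z$ by construction; and the probability bound holds because $x_i\leq\tfrac12$ and $z(s,k)=\tfrac{1}{2\ell}\leq\tfrac12$ (assuming $\ell\geq 1$, which we may as the empty system is trivial).

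For the converse, assume $(z,x,y)$ is a Nash equilibrium with all payoffs equal to $0$. The pure Player~1 deviation to $(s,k)$ yields payoff $2sq_k(x,y)$ by Lemma~\ref{LEM:G0payoff}. If $q_k(x,y)\neq 0$ for some $k$, then taking $s=\sgn(q_k(x,y))$ gives Player~1 strictly positive payoff, contradicting the Nash condition that $u_1(z,x,y)=0$ is already a best reply. Hence $q_k(x,y)=0$ for every $k$, so $(x,y)$ solves $\calS$.

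The only real subtlety is the forward direction: one must verify the best-reply condition for Players~2 and~3 and not merely that their payoffs at $\sigma$ are zero. This is precisely where the choice of $z$ as uniform over $\{1,-1\}\times\{1,\dots,\ell\}$ — which was built into $\calG_0$ so that the $+s,-s$ contributions cancel pointwise — does the work.
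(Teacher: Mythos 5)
Your proposal is correct and follows essentially the same route as the paper: the forward direction rests on the fact that a solution $(x,y)$ zeroes every payoff of Player~1 while the uniform $z$ zeroes all payoffs regardless of what Players~2 and~3 do (you re-derive this $\pm s$ cancellation explicitly, where the paper simply invokes Lemma~\ref{LEM:G0payoff}), and the converse uses the same deviation to $(\sgn(q_k(x,y)),k)$ to contradict the equilibrium condition. The promise part is read off exactly as in the paper, so there is nothing to add.
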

\begin{proof}
  Suppose first that $(x,y) \in \Simplex^n\times\Simplex^n$ is a
  solution to $\calS$ and let $z$ be the uniform distribution on
  $S_1$. By Equation~\eqref{EQ:G0payoff} the strategy profile $(x,y)$
  of Player~2 and Player~3 ensures that all players receive payoff~$0$
  regardless of which strategy is played by Player~1, and likewise the
  strategy $z$ of Player~1 ensures that all players receive payoff~$0$
  regardless of the strategies of Player~2 and Player~3. This shows
  that $\sigma$ is a Nash equilibrium of $\calG_0$, in which by
  Lemma~\ref{LEM:G0payoff} every player receives payoff~$0$. If $(x,y)$
  in addition satisfies that the promise of
  Proposition~\ref{PROP:BilinearSystem} we have
  $0<x_i=y_i\leq \frac{1}{2}$. From this and our choice of $z$, we
  have that $\sigma$ is a fully mixed and that no action is chosen by
  a strategy of $\sigma$ with probability more than~$\frac{1}{2}$.

  Suppose on the other hand that $\sigma=(z,x,y)$ is a Nash
  equilibrium of $\calG_0$ in which every player receives payoff~$0$.
  Suppose that $q_k(x,y) \neq 0$ for some $k$. Then by
  Equation~\eqref{EQ:G0payoff} we get that
  $u_1((\sgn(q_k(x,y)),k),x,y) = \abs{2q_k(x,y)}>0$, contradicting that
  $\sigma$ is a Nash equilibrium. Thus $(x,y)$ is a solution to $\calS$.
\end{proof}

\begin{theorem}
  \label{THM:NEWithLargePayoffs}
  \begin{sloppypar}
$\ExistsNEWithLargePayoffs$ and 
$\ExistsNEWithSmallPayoffs$ are $\cETR$-complete, even for 3-player
zero-sum games.
\end{sloppypar}
\end{theorem}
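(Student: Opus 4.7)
The plan is to handle membership and hardness separately, and to get both sides essentially for free from the infrastructure already assembled.

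For membership in $\cETR$, I would write down the obvious existentially quantified formula over the reals: existentially quantify the probabilities $x_i(a_i)$ for every player $i$ and action $a_i \in S_i$; impose that each $x_i$ is a probability distribution (nonnegativity and sum one); impose the Nash equilibrium condition by asserting, for every $i$ and every $a_i \in S_i$, that $u_i(x_{-i}; a_i) \leq u_i(x)$ (here both sides are multilinear polynomials in the probabilities); and finally impose $u_i(x) \geq u$ (respectively $\leq u$) for every $i$. This is a polynomial-size $\ETR$ sentence, giving $\cETR$-membership for both problems.

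For $\cETR$-hardness, I would reduce directly from the promise version of the bilinear system problem in Proposition~\ref{PROP:BilinearSystem}. Given an instance $\calS$, take the 3-player zero-sum game $\calG_0(\calS)$ of Definition~\ref{DEF:G0} and set the threshold $u = 0$. The key observation is Lemma~\ref{LEM:G0payoff}: every Nash equilibrium payoff profile of $\calG_0$ has the form $(2u', -u', -u')$ for some $u' \geq 0$. Therefore an NE satisfies ``$u_i(x) \geq 0$ for all $i$'' exactly when $u' = 0$, i.e.\ when every player receives payoff $0$; and similarly an NE satisfies ``$u_i(x) \leq 0$ for all $i$'' exactly when $2u' \leq 0$, which combined with $u' \geq 0$ again forces $u' = 0$. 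In both cases, the target Nash equilibria coincide with those where every player receives payoff exactly $0$, and by Proposition~\ref{PROP:G0NE} these exist in $\calG_0(\calS)$ if and only if $\calS$ has a solution in $\Simplex^n \times \Simplex^n$.

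Since this reduction is from the \emph{promise} version of the bilinear problem, a yes-instance of $\calS$ yields an NE that is fully mixed with $x_i = y_i \leq \tfrac{1}{2}$, so the same construction will be reusable later for problems combining payoff conditions with support or ball restrictions. There is no real obstacle here: Lemma~\ref{LEM:G0payoff} has already pinned down the NE payoff profile so rigidly that both inequalities collapse to equality with zero, and Proposition~\ref{PROP:G0NE} then transfers the $\cETR$-hardness of Proposition~\ref{PROP:BilinearSystem} to the game-theoretic side. The game $\calG_0(\calS)$ is a 3-player zero-sum game by construction, establishing hardness in that restricted class.
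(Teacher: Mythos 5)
Your proposal is correct and follows essentially the same route as the paper: membership via a direct existentially quantified formula, and hardness by reducing the promise problem of Proposition~\ref{PROP:BilinearSystem} through the game $\calG_0(\calS)$ of Definition~\ref{DEF:G0} with threshold $0$, using Proposition~\ref{PROP:G0NE}. The only cosmetic difference is that you derive the collapse of both payoff conditions to ``every player receives payoff exactly $0$'' from the equilibrium payoff profile $(2u',-u',-u')$ of Lemma~\ref{LEM:G0payoff}, whereas the paper notes it directly from the zero-sum property; both are valid.
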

\begin{proof}
  For a strategy profile $x$ in a zero-sum game $\calG$ we have that
  $u_i(x)=0$, for all~$i$, if and only if $u_i(x)\geq 0$, for all~$i$,
  if and only if $u_i(x)\leq 0$, for all~$i$.

  Thus Proposition~\ref{PROP:G0NE} gives a reduction from the promise
  problem of Proposition~\ref{PROP:BilinearSystem}, thereby
  establishing $\cETR$-hardness of the problems
  $\ExistsNEWithLargePayoffs$ and \linebreak
  $\ExistsNEWithSmallPayoffs$.
\end{proof}
\begin{sloppypar}
  A simple change to the game $\calG_0$ give
$\cETR$-hardness for the two problems
$\ExistsNEWithLargeTotalPayoff$ and
$\ExistsNEWithSmallTotalPayoff$. Naturally  we must give up the
zero-sum property of the game.
\end{sloppypar}
\begin{theorem}[Bilò and Mavronicolas~\cite{STACS:BiloM16}]
  \begin{sloppypar}
    $\ExistsNEWithLargeTotalPayoff$ and
  $\ExistsNEWithSmallTotalPayoff$ are $\cETR$-complete, even for
  3-player games.
\end{sloppypar}
\end{theorem}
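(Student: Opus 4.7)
The plan is to prove both hardness results via a small adjustment of the game $\calG_0$ that breaks the zero-sum structure in a controlled way, so that the common magnitude $v$ of the payoffs in Lemma~\ref{LEM:G0payoff} determines the total payoff rather than being constantly zero. For $\ExistsNEWithSmallTotalPayoff$, I would define the 3-player game $\calG_1(\calS)$ with the same action sets as $\calG_0(\calS)$ but with integer utilities
\[
  u_1((s,k),i,j) = 4sa_{ij}^{(k)},\qquad
  u_2((s,k),i,j) = u_3((s,k),i,j) = -sa_{ij}^{(k)},
\]
so that the three pure-profile payoffs sum to $2sa_{ij}^{(k)}$ instead of $0$. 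The two key features behind Lemma~\ref{LEM:G0payoff} are preserved: each player's payoff is still a scalar multiple of $sq_k(x,y)$, and the uniform distribution $z^*$ on $S_1$ still cancels the $s=\pm 1$ contributions, giving every player payoff $0$ regardless of $(x,y)$. Hence every Nash equilibrium of $\calG_1(\calS)$ has payoff profile $(4v,-v,-v)$ for some $v\geq 0$ (because Player~1 always has $z^*$ as a deviation guaranteeing payoff $0$), and in particular total payoff $2v\geq 0$.

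The best-response argument of Proposition~\ref{PROP:G0NE} now applies verbatim to show that $\calG_1(\calS)$ has a Nash equilibrium with total payoff $\leq 0$---equivalently $v=0$, equivalently payoff $0$ for every player---if and only if $\calS$ has a solution $(x,y)\in\Simplex^n\times\Simplex^n$: the forward direction is witnessed by $(z^*,x,y)$, and the converse observes that if $q_k(x,y)\neq 0$ for some $k$ then Player~1 could deviate to $(\sgn(q_k(x,y)),k)$ and gain payoff $4\abs{q_k(x,y)}>0$, contradicting $v=0$. Composing with Proposition~\ref{PROP:BilinearSystem} yields $\cETR$-hardness of $\ExistsNEWithSmallTotalPayoff$ for 3-player games, and $\cETR$-membership is immediate from the general remark in Section~\ref{SEC:DecisionProblemsNE}, since the constraint $\sum_i u_i(x)\leq u$ is a single polynomial inequality appended to the usual existentially quantified formula expressing that $x$ is a Nash equilibrium.

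For $\ExistsNEWithLargeTotalPayoff$ I would use the mirror game $\calG_2(\calS)$ obtained from $\calG_0$ by halving Player~1's payoff, i.e., $u_1((s,k),i,j)=sa_{ij}^{(k)}$ with $u_2=u_3=-sa_{ij}^{(k)}$ unchanged. The analysis now yields NE payoff profile $(v,-v,-v)$ with $v\geq 0$ and total payoff $-v\leq 0$ equal to $0$ exactly when $\calS$ is feasible; hence deciding whether $\calG_2(\calS)$ has a NE with total payoff $\geq 0$ is $\cETR$-hard by the same reduction. I do not anticipate any substantive obstacle: every step reduces to the bookkeeping already carried out for $\calG_0$, and the only novelty is choosing the rescaled coefficients so that the sum of payoffs tracks $v$ with the desired sign.
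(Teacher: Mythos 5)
Your proposal is correct and follows essentially the same route as the paper: both rescale individual players' utilities in $\calG_0$ by positive factors (preserving the Nash equilibrium set) so that the total payoff becomes a positive or negative multiple of Player~1's payoff, which is nonnegative at every equilibrium and vanishes exactly at equilibria corresponding to solutions of the bilinear system. The only cosmetic difference is that you scale Player~1's utility while the paper scales Players~2 and~3's, which amounts to the same argument.
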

\begin{proof}
  Define the game $\calG'_0$ from $\calG_0$ with new utility functions
  $u'_1(x)=u_1(x)$ and $u'_2(x)=u'_3(x)=-u_1(x)$, and thus also
  $u'_2(x)=u'_3(x)=2u_2(x)=2u_3(x)$, where $u_1$,$u_2$, and $u_3$ are
  the utility functions of $\calG_0$. Clearly $\calG'_0$ has the same
  set of Nash equilibria as $\calG_0$. Now
  $u'_1(x)+u'_2(x)+u'_3(x) = -u_1(x)$ and it follows that
  $u'_1(x)+u'_2(x)+u'_3(x)\geq 0$ if and only if $u_1(x)\leq 0$. By
  Lemma~\ref{LEM:G0payoff}, any Nash equilibrium $x$ must satisfy the
  inequality $u_1(x)\geq 0$. Thus, a Nash equilibrium $x$ satisfies
  the inequality $u'_1(x)+u'_2(x)+u'_3(x)\geq 0$ if and only if
  $u_1(x)=u_2(x)=u_3(x)=0$. We conclude that
  Proposition~\ref{PROP:G0NE} gives a reduction from the promise
  problem of Proposition~\ref{PROP:BilinearSystem} to
  $\ExistsNEWithLargeTotalPayoff$ thereby showing $\cETR$-hardness.

  Similarly, define the game $\calG''_0$ from $\calG_0$ with new
  utility functions $u''_1(x)=3u_1(x)$ and
  $u''_2(x)=u''_3(x)=-u_1(x)$. Again, $\calG''_0$ clearly has the same
  set of Nash equilibria as $\calG_0$. Now
  $u''_1(x)+u''_2(x)+u''_3(x) = u_1(x)$ and it follows that
  $u''_1(x)+u''_2(x)+u''_3(x)\leq~0$ if and only if $u_1(x)\leq
  0$. Analogously to above we then obtain $\cETR$-hardness for
  $\ExistsNEWithSmallTotalPayoff$.
\end{proof}
\subsection{Probability Restricted Nash Equilibria}
\label{SEC:ProbabilityRestrictedNE}
A key property of the game $\calG_0$ is that Player~1 may ensure all
players receive payoff~$0$. We now give \emph{all} players this choice
by playing a new additional action~$\bot$. We then design the utility
functions involving~$\bot$ in such a way that the pure strategy
profile $\allbot$ is always a Nash equilibrium, and every other Nash
equilibrium is a Nash equilibrium in $\calG_0$ in which all players
receive payoff~$0$.
\begin{definition}
  For $u\geq0$, let $\calH_1=\calH_1(u)$ be the 3-player zero-sum game where each player has the action set $\{G,\bot\}$ and the payoff vectors are given by the
  entries of the following two matrices, where Player~1 selects the
  matrix, Player~2 selects the row, Player~3 selects the column.
  \begin{table}[H]
  \captionsetup[subtable]{labelformat=empty}
  \begin{subtable}[t]{0.49\textwidth}
  \centering
  \begin{tabular}{c|c|c|}
    \multicolumn{1}{c}{} & \multicolumn{1}{c}{$G$} & \multicolumn{1}{c}{$\bot$}\\\cline{2-3}
    $G$ & $(2u,-u,-u)$ & $(\phantom{-}1,-1,0)$\\\cline{2-3}
    $\bot$ & $(\phantom{2u}\mathllap{1},\phantom{-u}\mathllap{0},\phantom{-u}\mathllap{-1})$ & $(-4,\phantom{-}2,2)$\\\cline{2-3}
  \end{tabular}
  \caption{$G$}
\end{subtable}
\begin{subtable}[t]{0.49\textwidth}
  \centering
  \begin{tabular}{c|c|c|}
    \multicolumn{1}{c}{} & \multicolumn{1}{c}{$G$} & \multicolumn{1}{c}{$\bot$}\\\cline{2-3}
    $G$ & $(0,0,\phantom{-}0)$ & $(\phantom{-}2,-3,1)$\\\cline{2-3}
    $\bot$ & $(2,1,-3)$ & $(-2,\phantom{-}1,1)$\\\cline{2-3}
  \end{tabular}
  \caption{$\bot$}
\end{subtable}
\end{table}
\end{definition}
It is straightforward to determine the Nash equilibria of $\calH_1$.
\begin{lemma}
  When $u>0$, the only Nash equilibrium of $\calH_1(u)$ is the pure
  strategy profile $\allbot$. When $u=0$ the only Nash
  equilibria of $\calH_1(u)$ are the pure strategy profiles $\allG$
  and $\allbot$.
  \label{LEM:H1NE}
\end{lemma}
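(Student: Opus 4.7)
The plan is to parameterise every strategy profile by $(p,q,r)\in[0,1]^3$, where $p,q,r$ denote the probabilities Players~1, 2, 3 assign to $G$, and for each player to compute the scalar $\delta_i(p,q,r)$, the change in Player~$i$'s expected payoff when they switch their own action from $\bot$ to $G$ while the others' mixing is held fixed. A direct bilinear expansion of the two payoff matrices gives
\begin{align*}
\delta_1(p,q,r) &= 2uqr + q + r - 2,\\
\delta_2(p,q,r) &= -upr + p + 3r - 4,\\
\delta_3(p,q,r) &= -upq + p + 3q - 4.
\end{align*}
A profile is a Nash equilibrium iff for each $i$ the probability of $G$ is positive only when $\delta_i\geq 0$ and strictly less than $1$ only when $\delta_i\leq 0$.

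Substituting $p=q=r=0$ gives $\delta_1=-2$ and $\delta_2=\delta_3=-4$, so every player strictly prefers $\bot$, which already shows that $\allbot$ is a Nash equilibrium for every $u\geq 0$. For the uniqueness analysis the key observation is this: Player~2 can place positive weight on $G$ only if $\delta_2\geq 0$, i.e.\ $p+3r \geq upr+4$; since $u\geq 0$ and $p,r\in[0,1]$, the left side is at most $4$ and the right side is at least $4$, so equality must hold throughout, forcing $p=r=1$ and $u=0$. The analogous statement for $\delta_3$ shows that Player~3 plays $G$ with positive probability only when $p=q=1$ and $u=0$.

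When $u>0$ this observation immediately yields $q=r=0$, whereupon $\delta_1(p,0,0)=-2<0$ forces $p=0$, and $\allbot$ is the unique Nash equilibrium. When $u=0$, either $q=0$, in which case the analogous argument rules out $r>0$ and then $\delta_1=-2$ again forces $p=0$, so we recover $\allbot$; or $q>0$, which by the observation forces $p=r=1$, and then applying the observation to $\delta_3$ (legitimate because $r>0$) forces $q=1$, yielding $\allG$. A direct substitution at $(1,1,1)$ with $u=0$ gives $\delta_1=\delta_2=\delta_3=0$, confirming that $\allG$ is indeed a Nash equilibrium. The only mildly delicate step is the tight inequality argument isolating the degenerate case $p=r=1$, $u=0$; everything else is mechanical.
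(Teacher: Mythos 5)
Your proof is correct, and I verified your expressions $\delta_1=2uqr+q+r-2$, $\delta_2=-upr+p+3r-4$, $\delta_3=-upq+p+3q-4$ against the payoff matrices. The approach is essentially the same as the paper's: your tight-inequality observation that $\delta_2\geq 0$ forces $p=r=1$ and $u=0$ is exactly the content of the paper's argument that $\bot$ strictly dominates $G$ for Players~2 and~3 when $u>0$ (or when $p<1$ at $u=0$), followed by the same short case analysis.
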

\begin{proof}
  Let $p_i$ be the probability of Player~$i$ choosing the action
  $G$. Consider first the case of $u>0$. Then the action $\bot$ is
  strictly dominating the action $G$ for both Player~2 and Player~3.
  Hence any Nash equilibrium would require $p_2=p_3=0$. The only best
  reply for Player~1 is then $p_1=0$ as well. Consider next the case
  of $u=0$. In case $p_1<1$, again the action $\bot$ is strictly
  dominating the action $G$ for both Player~2 and Player~3, and we
  conclude that $p_1=p_2=p_3=0$ as before. Suppose now that
  $p_1=1$. In a Nash equilibrium we would have either $p_2=p_3=1$ or
  $p_2=p_3=0$. The former clearly gives a Nash equilibrium whereas for
  $p_2=p_3=0$ the only best reply for Player~1 is $p_1=0$.
\end{proof}

We use the game $\calH_1(u)$ to extend the game $\calG_0$. The action
$G$ of $\calH_1$ represents selecting an action from $\calG_0$, and
the payoff vector $(2u,-u,-u)$ that is the result of all players
playing the action $G$ is precisely of the form of the Nash
equilibrium payoff profile of $\calG_0$.
\begin{definition}[The 3-player zero-sum game $\calG_1$]
  Let $\calG_1=\calG_1(\calS)$ be the game obtained from
  $\calG_0(\calS)$ as follows. Each player is given an additional
  action $\bot$. When no player plays the action $\bot$, the payoffs
  are the same as in $\calG_0$. When at least one player is playing
  the action $\bot$ the payoff are the same as in $\calH_1$, where
  each action different from $\bot$ is translated to action $G$.
  \label{DEF:G1}
\end{definition}
We next characterize the Nash equilibria in $\calG_1$.
\begin{proposition}
  The pure strategy profile $\allbot$ is a Nash equilibrium
  of $\calG_1$. Any other Nash equilibrium $x$ in $\calG_1$ is also a
  Nash equilibrium of $\calG_0$ and is such that every player receives
  payoff~$0$.
  \label{PROP:G1NE}
\end{proposition}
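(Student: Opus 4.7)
The plan is to handle the two halves of the claim separately. The first half, that $\allbot$ is a Nash equilibrium, is a direct inspection of $\calH_1$: at $\allbot$ each player's payoff is read off the $(\bot,\bot,\bot)$ entry, giving $(-2,1,1)$, and since every non-$\bot$ action is treated as $G$ under the $\calH_1$ rule in Definition~\ref{DEF:G1}, any unilateral deviation by Player~$i$ is captured by one of the entries $(G,\bot,\bot)$, $(\bot,G,\bot)$, $(\bot,\bot,G)$, in each of which the deviating player receives strictly smaller payoff ($-4$, $-3$, $-3$, respectively).

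For the converse, let $x\neq \allbot$ be a Nash equilibrium of $\calG_1$, set $p_i=\Pr[x_i\neq\bot]$ and $q_i=1-p_i$, and let $\tilde x_i$ be the conditional distribution of $x_i$ on non-$\bot$ actions (when $p_i>0$). The backbone observation is that all non-$\bot$ actions of a given player contribute identically to the $\calH_1$-part of the expected payoff, since they all look like $G$ there. Consequently, within the non-$\bot$ portion of the support of $x_i$, the best-reply conditions in $\calG_1$ collapse to the best-reply conditions in $\calG_0$, scaled by $\prod_{j\neq i}p_j$.

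The first substantive step is to rule out $p_i=0$ for any $i$; I expect this to be the main technical obstacle. Arguing by contradiction and case-splitting on the index: if $p_1=0$, then Players~2 and~3 face only the $\bot$-matrix of $\calH_1$, in which $\bot$ strictly dominates $G$ for each of them, forcing $p_2=p_3=0$; but then Player~1 also strictly prefers $\bot$ to $G$, so $x=\allbot$. The cases $p_2=0$ and $p_3=0$ are handled analogously by extracting the appropriate slice of $\calH_1$ and verifying the corresponding dominances. Each sub-case reduces to a handful of inequalities between the eight $\calH_1$ entries.

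Once $p_1,p_2,p_3>0$ is established, the backbone observation gives that $\tilde x$ is a Nash equilibrium of $\calG_0$, whose payoff vector is $(2u,-u,-u)$ for some $u\geq 0$ by Lemma~\ref{LEM:G0payoff}. Coarsen every player's strategy to the binary choice ``$G$'' (meaning play $\tilde x_i$) versus $\bot$; the resulting $2\times 2\times 2$ game is exactly $\calH_1(u)$, because the all-$G$ entry becomes precisely $(2u,-u,-u)$ by the choice of $u$ and the remaining entries are inherited from $\calH_1$ by construction of $\calG_1$. Therefore $(p_i,q_i)$ is a Nash equilibrium of $\calH_1(u)$ different from $\allbot$, so Lemma~\ref{LEM:H1NE} forces $u=0$ and the profile to be $\allG$. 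Hence $q_i=0$ for every $i$, and $x=\tilde x$ is a Nash equilibrium of $\calG_0$ at which every player's payoff is $0$, as required.
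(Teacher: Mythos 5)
Your proof is correct and follows essentially the same route as the paper's: collapse each player's non-$\bot$ probability mass to the single action $G$, observe via Lemma~\ref{LEM:G0payoff} that the induced $2\times2\times2$ game is $\calH_1(u)$ for the equilibrium payoff parameter $u\geq 0$ of the conditional profile in $\calG_0$, and invoke Lemma~\ref{LEM:H1NE} to force $u=0$ and the all-$G$ profile. You merely make explicit two steps the paper leaves implicit, namely the direct check that $\allbot$ is an equilibrium of $\calG_1$ and the dominance argument ruling out equilibria in which some player puts probability~$1$ on $\bot$ without all doing so.
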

\begin{proof}
  By Lemma~\ref{LEM:G0payoff} a Nash equilibrium of $\calG_1$
  induces a Nash equilibrium of $\calH_1(u)$, where $(2u,-u,-u)$ is a
  Nash equilibrium payoff profile of $\calG_0$, by letting each player
  play the action $G$ with the total probability of which the actions
  of $\calG_0$ are played. By Lemma~\ref{LEM:H1NE}, any Nash
  equilibrium in $\calG_1$ different from $\allbot$ must then be a
  Nash equilibrium of $\calG_0$ with Nash equilibrium payoff profile
  $(0,0,0)$ as claimed.  \end{proof}
\begin{theorem}
  \begin{sloppypar}
  The following problems are $\cETR$-complete, even for 3-player
  zero-sum games: \ExistsNEInABall, \ExistsSecondNE, \ExistsNEWithLargeSupports,
  \ExistsNEWithRestrictingSupports, and \ExistsNEWithRestrictedSupports.
\end{sloppypar}
\label{THM:G1NE}
\end{theorem}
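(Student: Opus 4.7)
The plan is to reduce the promise version of bilinear system solvability from Proposition~\ref{PROP:BilinearSystem} to each of the five problems, using a single game $\calG_1=\calG_1(\calS)$ in every case and only varying the auxiliary input. For the matching $\cETR$ upper bound I rely on the observation, already noted in this section, that each of the five conditions is expressible together with the Nash equilibrium inequalities by a polynomial-size existentially quantified first-order formula over~$\RR$.

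The central point is the dichotomy obtained from Propositions~\ref{PROP:G1NE} and~\ref{PROP:G0NE}: the pure profile $\allbot$ is always a Nash equilibrium of $\calG_1$, and any other Nash equilibrium must be a Nash equilibrium of $\calG_0$ in which every player receives payoff~$0$, and hence corresponds to a solution of~$\calS$. Under the promise, if $\calS$ has a solution then a resulting second Nash equilibrium $\sigma=(z,x,x)$ exists in which Player~$1$ plays the uniform distribution on $S_1$ (whose size $2\ell$ is at least~$2$), Players~$2$ and~$3$ both play a strategy fully mixed on $S_i\setminus\{\bot\}$ with every action probability at most~$\tfrac{1}{2}$, and no player puts positive probability on~$\bot$.

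For each problem I will choose the auxiliary input so that $\allbot$ fails the condition while $\sigma$ satisfies it. For $\ExistsNEInABall$ I take $u=\tfrac{1}{2}$: the profile $\allbot$ puts probability $1>\tfrac{1}{2}$ on $\bot$, whereas $\sigma$ obeys the bound by the promise. For $\ExistsSecondNE$ I supply $\allbot$ itself as the given Nash equilibrium, since a second one exists iff $\sigma$ does. For $\ExistsNEWithLargeSupports$ I set $k=2$, using that $\allbot$ has singleton supports while the supports of $\sigma$ have sizes $2\ell$ and $n+1$, both at least~$2$. For $\ExistsNEWithRestrictingSupports$ I set each $T_i$ to any singleton subset of $S_i\setminus\{\bot\}$; this rules out $\allbot$ but is automatic for~$\sigma$, which is fully mixed on $S_i$. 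For $\ExistsNEWithRestrictedSupports$ I set $T_i=S_i\setminus\{\bot\}$, which again excludes $\allbot$ but is automatic for~$\sigma$.

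The hardest part is not any of these individual reductions, which amount to choosing auxiliary input, but rather ensuring that a single game construction simultaneously carries all the structural information the five conditions require. This is exactly what the balanced fully mixed equilibrium guaranteed by the promise of Proposition~\ref{PROP:BilinearSystem} provides: uniformity on $S_1$, probability bound~$\tfrac{1}{2}$, full mixing on $S_i$, and avoidance of~$\bot$. With that in hand all five reductions are immediate and the claimed $\cETR$-completeness for 3-player zero-sum games follows.
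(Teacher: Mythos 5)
Your proposal is correct and follows essentially the same route as the paper: the dichotomy from Propositions~\ref{PROP:G0NE} and~\ref{PROP:G1NE} applied to the single game $\calG_1(\calS)$, with the five problems handled purely by choice of auxiliary input ($u=\tfrac{1}{2}$, $k=2$, and support sets avoiding $\bot$; your singleton choice of $T_i$ for the restricting-supports case is an inessential variant of the paper's $T_i=S_i\setminus\{\bot\}$).
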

\begin{proof}
  Proposition~\ref{PROP:G0NE} and Proposition~\ref{PROP:G1NE} together
  gives a reduction from the promise problem of
  Proposition~\ref{PROP:BilinearSystem} to all of the problems under
  consideration when setting the additional parameters as follows.
  For $\ExistsNEInABall$ we let $u=\tfrac{1}{2}$, we let $k=2$ for
  $\ExistsNEWithLargeSupports$, and lastly we let $T_i$ be the set of
  all actions of Player~$i$ except~$\bot$ for both of the problems
  $\ExistsNEWithRestrictingSupports$ and
  $\ExistsNEWithRestrictedSupports$. 
\end{proof}
\begin{remark}
  Except for the case of $\ExistsSecondNE$, the results of
  Theorem~\ref{THM:G1NE} can also be proved with the slightly simpler
  construction of adding an additional action $\bot$ to the players
  in $\calG_0$ which when played by at least one player results in all
  players receiving payoff~$0$.
\end{remark}
To adapt the reduction of Theorem~\ref{THM:G1NE} to
$\ExistsNEWithSmallSupports$ we need to replace the trivial Nash
equilibrium $\allbot$ by a Nash equilibrium with large support.
\begin{definition}
  Define the 2-player zero-sum game $\calH_2(k)$ as follows. The two
  players, which we denote Player~2 and Player~3, have the same set of pure strategies
  $S_2=S_3=\{0,1,\dots,k-1\}$. The utility functions are defined by
  \[
    u_2(a_2,a_3) = -u_3(a_2,a_3) = \begin{cases}\hphantom{-}1 & \text{if } a_2=a_3\\-1 & \text{if } a_2 \equiv a_3+1 \pmod k\\\hphantom{-}0 & \text{otherwise}
    \end{cases}
  \]
\end{definition}
We omit the easy analysis of the game $\calH_2(k)$.
\begin{lemma}
  For any $k\geq 2$, in the game $\calH_2(k)$ the strategy profile
  in which each action is played with probability $\tfrac{1}{k}$ is the
  unique Nash equilibrium and yields payoff~$0$ to both players.
  \label{LEM:Hbot}
\end{lemma}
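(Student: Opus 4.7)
The plan is to exploit that $\calH_2(k)$ is a finite zero-sum game and apply the minimax theorem: the value is well-defined, and every Nash equilibrium strategy of a player is a maxmin strategy guaranteeing that player payoff at least the value against every pure reply of the opponent.

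First I would verify existence and pin down the value. Against a uniformly mixing opponent each pure action yields expected payoff $\tfrac{1}{k}\cdot 1 + \tfrac{1}{k}\cdot(-1) = 0$, so every action (and hence every mixed reply) is a best reply. Thus the uniform profile is a Nash equilibrium in which both players receive payoff $0$, and the value of the game equals $0$.

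For uniqueness, let $(x_2,x_3)$ be any Nash equilibrium. Because $x_2$ must be a maxmin strategy for Player~2, for every pure action $a_3$ of Player~3 I need
\[
\Exp_{a_2 \sim x_2}\bigl[u_2(a_2,a_3)\bigr] \;=\; x_2(a_3) - x_2(a_3 + 1 \bmod k) \;\geq\; 0 .
\]
Summing these $k$ inequalities over $a_3 \in \{0,\dots,k-1\}$, the left-hand side telescopes cyclically to $0$, so every inequality must in fact be an equality. Hence $x_2(0)=x_2(1)=\cdots=x_2(k-1)$ and $x_2$ is the uniform distribution. Applying the symmetric argument to Player~3 (whose expected payoff against a pure $a_2$ works out to $x_3(a_2-1 \bmod k) - x_3(a_2)$) forces $x_3$ to be uniform as well.

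The only subtle point — and arguably the reason for the $\bmod k$ wrap-around in the definition of $\calH_2(k)$ — is precisely that the cyclic indexing is what makes the telescoping sum vanish; without it the uniqueness argument would not close. With the cyclic structure in place, no further obstacle arises.
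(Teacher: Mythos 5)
Your proof is correct. Note that the paper itself omits the argument (``We omit the easy analysis of the game $\calH_2(k)$''), so there is no official proof to compare against; your write-up actually supplies the missing details. The route you take is the natural one: the uniform profile makes every pure reply yield expected payoff $0$, pinning the value at $0$; then the standard fact that in a two-player zero-sum game every Nash equilibrium strategy is a maxmin strategy guaranteeing the value against every pure reply gives the inequalities $x_2(a_3)-x_2(a_3+1 \bmod k)\geq 0$ (and the analogous ones for $x_3$), and the cyclic telescoping sum forces all of them to be equalities, hence uniformity. All computations of the expected payoffs against pure actions are consistent with the definition of $u_2=-u_3$, and your remark that the wrap-around modulo $k$ is exactly what makes the telescoping close is the right observation. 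No gaps.
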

\begin{definition}[The 3-player zero-sum game $\calG_2$]
  Let $\calG_2=\calG_2(\calS)$ be the game obtained from $\calG_1$ as
  follows. The action $\bot$ of Player~2 and Player~3 are replaced by
  the set of actions $(\bot,i)$, $i\in \{0,1,\dots,k-1\}$, where $k$
  is the maximum number of actions of a player in $\calG_1$. The
  payoff vector of the pure strategy profile
  $(\bot,(\bot,a_2),(\bot,a_3))$ is
  $(-2,1+u_2(a_2,a_3),1+u_3(a_2,a_3))$, where $u_2$ and $u_3$ are the
  utility functions of the game $\calH_2(k)$. Otherwise, when at least
  one player plays the action $G$, the payoff is as in $\calH_1$,
  where actions of the form $(\bot,i)$ are translated to the action
  $\bot$.
\end{definition}
\begin{theorem}
  \ExistsNEWithSmallSupports\ is $\cETR$-complete, even for 3-player
  zero-sum games.
  \label{THM:NEWithSmallSupports}
\end{theorem}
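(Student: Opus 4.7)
The plan is to carry out the reduction of Theorem~\ref{THM:G1NE} with $\calG_2$ in place of $\calG_1$. Replacing the single $\bot$-action of Players~2 and~3 by the $\calH_2(k)$-gadget blows up the trivial equilibrium $\allbot$ of $\calG_1$ into an equilibrium of $\calG_2$ whose supports on Players~2 and~3 have size exactly~$k$; the $\calG_0$-derived equilibria still have strictly smaller supports, and a threshold of $k-1$ separates them.

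First I would prove a $\calG_2$-analog of Proposition~\ref{PROP:G1NE}: the profile $\sigma^\star$ in which Player~1 plays $\bot$ purely and Players~2,~3 each play the uniform distribution over $\{(\bot,0),\ldots,(\bot,k-1)\}$ is a Nash equilibrium, and every other Nash equilibrium of $\calG_2$ is also a Nash equilibrium of $\calG_0$ with payoff profile $(0,0,0)$. To verify that $\sigma^\star$ is an equilibrium: against $\sigma^\star_{-1}$ the translation to $\calH_1$ puts Player~1 in the $(\bot,\bot)$ corner, where $\bot$ strictly beats any $G$-type action; and once Player~1 plays $\bot$ purely, the sub-game between Players~2 and~3 restricted to their $\bot$-type actions is $\calH_2(k)$ shifted by an additive constant, whose unique equilibrium is uniform by Lemma~\ref{LEM:Hbot}, while any $G$-type deviation is strictly worse than any $\bot$-type action by the $(\bot,\cdot,\cdot)$ payoff block of $\calH_1$. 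For the converse I would split on the probability $\alpha$ that Player~1 plays $\bot$: the case $\alpha = 1$ reduces to the analysis above, forcing $\sigma^\star$; the case $\alpha = 0$ places us in a $\calG_0$-subgame, and using Lemma~\ref{LEM:G0payoff} together with the fact that no $\bot$-type deviation is profitable forces payoff $(0,0,0)$; the intermediate case $0 < \alpha < 1$ is ruled out by equating best-response payoffs on Player~1's mixed support and deriving a contradiction with the $\calH_1$ payoff structure.

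Second, I set the small-supports parameter of $\ExistsNEWithSmallSupports$ to $k' = k - 1$. In $\sigma^\star$, Players~2 and~3 have supports of size $k$, so $\sigma^\star$ fails the small-support condition. A Nash equilibrium of $\calG_2$ coming from a solution $(x,y)$ of $\calS$ satisfying the promise of Proposition~\ref{PROP:BilinearSystem} has Player~1 playing uniform over $S_1$ (support $2\ell$) and Players~2,~3 both fully mixed over $\{1,\ldots,n+1\}$ (support $n+1$); since $k = \max(2\ell+1, n+2)$, all supports are bounded by $k-1$. Together with Proposition~\ref{PROP:G0NE} this yields a many-one reduction from the promise version of $\QUAD$ in Proposition~\ref{PROP:BilinearSystem} to $\ExistsNEWithSmallSupports$ on 3-player zero-sum games, giving $\cETR$-hardness; $\cETR$-membership is immediate from the existential first-order characterization.

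The main obstacle is the converse half of the equilibrium characterization, specifically the intermediate case $0 < \alpha < 1$, which cannot be reduced directly to Proposition~\ref{PROP:G1NE}: the $\calH_2$-contribution in the all-$\bot$ corner depends on the specific distributions of Players~2 and~3 over their $\bot$-type actions and therefore prevents a clean projection back to $\calG_1$ that preserves payoffs. Handling it requires a direct best-response computation in $\calG_2$ showing that Players~2 and~3 cannot simultaneously equilibrate across the $G$/$\bot$-type boundary when Player~1 properly mixes, forcing the $\alpha \in (0,1)$ case to collapse to either $\sigma^\star$ or a $\calG_0$-equilibrium.
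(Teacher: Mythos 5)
Your setup matches the paper's reduction exactly: the same game $\calG_2$, the same replacement equilibrium $\sigma^\star$ (Player~1 pure on $\bot$, Players~2 and~3 uniform over the $(\bot,i)$), and the same support threshold $k-1$, with the forward direction via Proposition~\ref{PROP:G0NE}. The problem is the converse characterization — that every Nash equilibrium of $\calG_2$ other than $\sigma^\star$ is a Nash equilibrium of $\calG_0$ with payoff profile $(0,0,0)$ — which is the heart of the matter and which you do not actually establish. You split on the probability $\alpha$ that Player~1 plays $\bot$, defer the case $0<\alpha<1$ to an unspecified ``direct best-response computation,'' and explicitly flag it as the main unresolved obstacle; as written this is a genuine gap, since without ruling out equilibria in which Player~1 properly mixes (or in which Players~2,~3 straddle the $G$/$\bot$-type boundary) the dichotomy, and hence the correctness of the threshold $k-1$, is unproved.

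Moreover, the idea you dismiss is precisely the one that closes the gap. You claim the $\calH_2$-contribution in the all-$\bot$ corner ``prevents a clean projection back to $\calG_1$ that preserves payoffs,'' but exact preservation is not needed: by construction the payoff vector of $(\bot,(\bot,a_2),(\bot,a_3))$ differs from the $\allbot$ entry $(-2,1,1)$ of $\calH_1$ by at most~$1$ in each coordinate, and the proofs of Lemma~\ref{LEM:H1NE} and Proposition~\ref{PROP:G1NE} go through verbatim under any such perturbation of that single entry (a deviation to $\bot$ in the perturbed $\calG_1$ is realized in $\calG_2$ by deviating with one's own equilibrium conditional over the $(\bot,i)$, so the induced collapsed profile really is an equilibrium of the perturbed game). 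This handles all values of $\alpha$ at once, with no case analysis: the induced $\calG_1$-profile is either $\allbot$ — in which case, since the corner has positive probability, optimality among the $(\bot,i)$ actions forces the conditional distributions of Players~2 and~3 to be the unique equilibrium of $\calH_2(k)$ (Lemma~\ref{LEM:Hbot}), i.e.\ uniform, giving supports of size exactly $k$ — or a Nash equilibrium of $\calG_0$ with payoff~$0$, whose supports have size at most $k-1$. Supplying this bounded-perturbation robustness argument (or, equivalently, actually carrying out your postponed computation, which amounts to redoing Lemma~\ref{LEM:H1NE} with the perturbed corner) is what your proposal is missing.
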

\begin{proof}
  In $\calG_2$, the strategy profile where Player~1 plays $\bot$ and
  Player~2 and Player~3 play $(\bot,i)$, with $i$ chosen uniformly at
  random, is a Nash equilibrium that takes the role of the Nash
  equilibrium $\allbot$ in $\calG_1$. Consider now an arbitrary Nash
  equilibrium in $\calG_2$. In case all players play the action $G$
  with probability less than~$1$, Player~2 and Player~3 must chose
  each action of the form $(\bot,i)$ with the same probability, since
  $\calH_2$ has a unique Nash equilibrium. The Nash equilibrium
  induces a strategy profile in $\calG_1$, letting Player~2 and
  Player~3 play the action~$\bot$ with the total probability each
  player placed on the actions~$(\bot,i)$.  By definition of
  $\calH_2(k)$ the payoff vector of $\allbot$ in $\calG_1$ differs by
  at most~1 in each entry from the payoff vectors of
  $(\bot,(\bot,a_2),(\bot,a_3))$. The proof of Lemma~\ref{LEM:H1NE}
  and Proposition~\ref{PROP:G1NE} still holds when changing the payoff
  vector of $\allbot$ by at most~1 in each coordinate.  The strategy
  profile induced in $\calG_1$ must therefore be a Nash equilibrium in
  $\calG_1$. We conclude that in a Nash equilibrium $x$ of $\calG_2$,
  either Player~2 and Player~3 use strategies with support of size $k$
  or $x$ is a Nash equilibrium of $\calG_0$, where every player uses a
  strategy of support size strictly less than~$k$ and where every
  player receives payoff~0. Proposition~\ref{PROP:G0NE} thus gives a
  reduction showing $\cETR$-hardness.
\end{proof}

\subsection{Pareto Optimal and Strong Nash Equilibria}
\label{SEC:ParetoStrongNE}
For showing $\cETR$-hardness for $\ExistsNonStrongNE$ we first analyze
the Strong Nash equilibria in the game $\calH_1$. 
\begin{lemma}
  For $u\geq 0$, the Nash equilibrium $\allbot$ of $\calH_1(u)$ is a
  strong Nash equilibrium. For $u=0$, the Nash equilibrium $\allG$ of
  $\calH_1(u)$ is not a strong Nash equilibrium.
  \label{LEM:H1StrongNE}
\end{lemma}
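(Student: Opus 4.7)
The plan is to verify, for each non-empty coalition $B \subseteq \{1,2,3\}$, that no joint deviation by the players in $B$ yields a strict improvement for all of them from $\allbot$, whose payoff vector is $(-2, 1, 1)$. A useful preliminary observation is that $\calH_1(u)$ is a zero-sum game, i.e.\ $u_1(x) + u_2(x) + u_3(x) \equiv 0$.

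The singleton coalitions $B = \{i\}$ are immediate from $\allbot$ being a Nash equilibrium by Lemma~\ref{LEM:H1NE}. For the grand coalition $B = \{1,2,3\}$, the zero-sum identity does the job at once: a simultaneous strict improvement would give $u_1(x') + u_2(x') + u_3(x') > -2 + 1 + 1 = 0$, contradicting $u_1 + u_2 + u_3 \equiv 0$. For $B = \{2,3\}$, Player~1 stays at $\bot$, so the deviators face the payoffs of matrix $\bot$. Inspecting that matrix, action $\bot$ strictly dominates action $G$ for Player~2 (guarantee $1$ versus $0$ or $-3$) and symmetrically for Player~3; hence any deviation placing positive mass on $G$ strictly lowers the deviator's payoff.

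The only coalition that requires actual computation is $B = \{1,2\}$ (and $B = \{1,3\}$ by the symmetry between Players 2 and 3). With Player~3 fixed at $\bot$, expanding the bilinear payoff functions yields $u_1(p_1, p_2, 0) = -2 - 2 p_1 + 4 p_2 + p_1 p_2$ and $u_2(p_1, p_2, 0) = 1 + p_1 - 4 p_2 + p_1 p_2$ (independent of~$u$). Strict joint improvement would require $p_1(2 - p_2) < 4 p_2 < p_1(1 + p_2)$, which forces $p_2 > 1/2$ and then $p_1 > 4 p_2/(1 + p_2) \geq 4/3$, contradicting $p_1 \leq 1$. This case is the main technical step, though it reduces to a short algebraic check.

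For the second claim, that $\allG$ is not a strong Nash equilibrium when $u = 0$, it suffices to exhibit a single strictly improving coalitional deviation. Take $B = \{2,3\}$ deviating to the pure profile $(G, \bot, \bot)$: reading this entry off matrix~$G$ gives payoff $(-4, 2, 2)$, so Players 2 and 3 each strictly improve from $0$ (their payoffs at $\allG$ when $u = 0$) to $2$, witnessing the failure of the strong Nash property.
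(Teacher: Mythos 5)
Your proof is correct and follows essentially the same route as the paper: dispose of singleton and grand coalitions via the equilibrium and zero-sum properties, observe that Players 2 and 3 cannot gain against Player~1's $\bot$, and settle the $\{1,2\}$ (hence $\{1,3\}$) coalition by the same bilinear payoff computation, with only a cosmetic difference in how the two inequalities are combined. For non-strongness of $\allG$ at $u=0$ you use the witness $(G,\bot,\bot)$ with payoffs $(-4,2,2)$ instead of the paper's $(\bot,\bot,G)$, which is equally valid.
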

\begin{proof}
  Consider first $u=0$ and the Nash equilibrium $\allG$. This is not a
  strong Nash equilibrium, since for instance Player~1 and Player~2
  could both increase their payoff by playing the strategy profile
  $(\bot,\bot,G)$. Consider next $u\geq 0$ and the Nash equilibrium
  $\allbot$. Since $\calH_1$ is a zero-sum game it is sufficient to
  consider possible coalitions of two players. Player~2 and Player~3
  are already receiving the largest possible payoff given that
  Player~1 is playing the strategy~$\bot$, and hence they do not have
  a profitable deviation. Consider then, by symmetry, the coalition
  formed by Player~1 and Player~2, and let them play $G$ with
  probabilities $p_1$ and $p_2$. A simple calculation shows that to
  increase the payoff of Player~1 requires $p_1p_2+4p_2-2p_1>0$ and to
  increase the payoff of Player~2 requires $p_1p_2-4p_2+p_1>0$. Adding
  these gives $p_1(2p_2-1)>0$ which implies $p_2>\tfrac{1}{2}$. But
  then $p_1p_2-4p_2+p_1 < 0$. Thus $\allbot$ is a strong Nash equilibrium.  
\end{proof}
\pagebreak
\begin{theorem}
$\ExistsNonStrongNE$ is $\cETR$-complete, even for 3-player zero-sum games.
\end{theorem}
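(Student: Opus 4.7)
Membership in $\cETR$ is immediate from an explicit existential first-order formula: $x$ being a Nash equilibrium that fails to be strong is captured by first writing the best-reply conditions defining $x$ as a Nash equilibrium, and then asserting the existence of a strategy profile $x'$ together with a choice of non-empty coalition $B \subseteq [m]$ (enumerated over the finitely many possibilities) such that $x_i \neq x'_i$ and $u_i(x') > u_i(x)$ for all $i \in B$, while $x'_j = x_j$ for $j \notin B$.

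For hardness, my plan is to reuse the 3-player zero-sum game $\calG_1(\calS)$ from Definition~\ref{DEF:G1} and argue that $\calG_1(\calS)$ has a non-strong Nash equilibrium if and only if $\calS$ has a solution; combined with Proposition~\ref{PROP:BilinearSystem} this yields $\cETR$-hardness even for 3-player zero-sum games. The reduction has two parts. First, I would verify that the trivial Nash equilibrium $\allbot$ of $\calG_1$ is in fact a strong Nash equilibrium: at $\allbot$ every player plays $\bot$, and by the definition of $\calG_1$ the payoffs under any coalitional deviation from $\allbot$ remain governed entirely by the $\calH_1$ payoff table, with each player's total probability on non-$\bot$ actions aggregating into the action~$G$. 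Lemma~\ref{LEM:H1StrongNE} (applied to any $u \geq 0$) then rules out any improving coalitional deviation.

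Second, I would show that every Nash equilibrium $x$ of $\calG_1$ distinct from $\allbot$ fails to be strong. By Proposition~\ref{PROP:G1NE} such an $x$ is a Nash equilibrium of $\calG_0$ in which every player receives payoff~$0$; in particular no player places any probability on $\bot$. Consider the coalition $B = \{1,2\}$ deviating to the profile $x'$ in which Players~1 and 2 play pure $\bot$ while Player~3 continues to play the $\calG_0$-strategy from $x$. Since Player~1 now plays $\bot$, the payoffs come from the $\calH_1$ table, specifically the cell $(\bot,\bot,G)$, yielding payoff vector $(2,1,-3)$. Players~1 and 2 strictly improve (from $0$ to $2$ and from $0$ to $1$), exhibiting $x$ as non-strong. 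Hence $\calG_1(\calS)$ has a non-strong Nash equilibrium precisely when $\calS$ is solvable, completing the reduction.

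The main subtlety I expect, and therefore the step I would write out most carefully, is the first one: $\allbot$ must be shown to remain a strong Nash equilibrium of $\calG_1$ rather than merely of $\calH_1$, even though in $\calG_1$ Player~1 has $2\ell$ non-$\bot$ actions and Players~2 and 3 each have $n+1$ of them. The required aggregation lemma, that whenever at least one player plays $\bot$ with positive probability the payoffs of $\calG_1$ depend only on each player's pair (probability of $\bot$, probability of non-$\bot$), reduces any candidate coalitional deviation in $\calG_1$ to one in $\calH_1$, after which Lemma~\ref{LEM:H1StrongNE} applies verbatim.
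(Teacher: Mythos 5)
Your proof takes essentially the same route as the paper's: the same game $\calG_1(\calS)$, with Lemma~\ref{LEM:H1StrongNE} certifying that $\allbot$ is a strong Nash equilibrium, Propositions~\ref{PROP:G0NE} and~\ref{PROP:G1NE} characterizing every other equilibrium as a payoff-$0$ equilibrium of $\calG_0$, and the same improving deviation of the coalition $\{1,2\}$ to the cell $(\bot,\bot,G)$ with payoffs $(2,1,-3)$ witnessing non-strongness. One nit on your ``aggregation lemma'': the payoffs of $\calG_1$ collapse to those of $\calH_1$ only when some player plays $\bot$ with probability \emph{one} (which does hold for the player outside any proper coalition deviating from $\allbot$), not merely with positive probability, and the grand coalition is ruled out because $\calG_1$ is zero-sum --- the same observation the paper's proof of Lemma~\ref{LEM:H1StrongNE} makes for $\calH_1$.
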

\begin{proof}
  Proposition~\ref{PROP:G0NE} and Proposition~\ref{PROP:G1NE} together
  give a reduction establishing $\cETR$-hardness, since by
  Lemma~\ref{LEM:H1StrongNE} the Nash equilibrium $\allbot$ is a
  strong Nash equilibrium, and a Nash equilibrium of $\calG_0$ where
  every player receives payoff~$0$ is not a strong Nash equilibrium.
\end{proof}

In a zero-sum game, every strategy profile is Pareto optimal. Thus for
showing $\cETR$-hardness of $\ExistsNonParetoOptimalNE$ we consider
non-zero-sum games.
\begin{definition}
  For $u\geq0$, let $\calH_3=\calH_3(u)$ be the 3-player game given by
  the following matrices, where Player~1 selects the matrix, Player~2
  selects the row, Player~3 selects the column.
  \begin{table}[H]
  \captionsetup[subtable]{labelformat=empty}
  \begin{subtable}[t]{0.49\textwidth}
  \centering
  \begin{tabular}{c|c|c|}
    \multicolumn{1}{c}{} & \multicolumn{1}{c}{$G$} & \multicolumn{1}{c}{$\bot$}\\\cline{2-3}
    $G$ & $(2u,-u,-u)$ & $(0,0,0)$\\\cline{2-3}
    $\bot$ & $(\phantom{2u}\mathllap{0},\phantom{-u}\mathllap{0},\phantom{-u}\mathllap{0})$ & $(1,1,1)$\\\cline{2-3}
  \end{tabular}
  \caption{$G$}
\end{subtable}
\begin{subtable}[t]{0.49\textwidth}
  \centering
  \begin{tabular}{c|c|c|}
    \multicolumn{1}{c}{} & \multicolumn{1}{c}{$G$} & \multicolumn{1}{c}{$\bot$}\\\cline{2-3}
    $G$ & $(0,0,0)$ & $(1,1,1)$\\\cline{2-3}
    $\bot$ & $(1,1,1)$ & $(2,2,2)$\\\cline{2-3}
  \end{tabular}
  \caption{$\bot$}
\end{subtable}
\end{table}
\end{definition}
\begin{lemma}
  When $u>0$, the only Nash equilibrium of $\calH_3(u)$ is the pure
  strategy profile $\allbot$. When $u=0$ the only Nash equilibria of
  $\calH_3(u)$ are the pure strategy profiles $\allG$ and
  $\allbot$. For $u\geq 0$, $\allbot$ is Pareto optimal. For $u=0$,
   $\allG$ is not Pareto optimal.
  \label{LEM:H3ParetoNE}
\end{lemma}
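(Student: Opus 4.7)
The plan is to prove four assertions --- unique Nash equilibrium for $u>0$, exactly two Nash equilibria $\allG$ and $\allbot$ for $u=0$, Pareto optimality of $\allbot$ for every $u\geq 0$, and the failure of Pareto optimality for $\allG$ at $u=0$. Throughout, let $p_i\in[0,1]$ denote the probability that Player~$i$ plays $G$ and let $q_i=1-p_i$. The game $\calH_3$ is symmetric under the swap of Player~2 and Player~3, so any argument proved for one of them transfers to the other.

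First I would compute the incentive of Player~2 to play $G$ versus $\bot$ given $p_1$ and $p_3$: the conditional expected payoffs are $-up_1p_3+q_1q_3$ and $p_1q_3+q_1p_3+2q_1q_3$, and using the identity $q_1q_3+p_1q_3+q_1p_3=1-p_1p_3$ their difference simplifies to $p_1p_3(1-u)-1$. For $u>0$ this quantity is strictly negative on $[0,1]^2$, so Player~2 (and symmetrically Player~3) must play $\bot$ in any Nash equilibrium. Once $p_2=p_3=0$, Player~1's payoff is $1$ from $G$ and $2$ from $\bot$, so Player~1 also plays $\bot$, giving $\allbot$ as the unique Nash equilibrium.

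For $u=0$ the same difference reduces to $p_1p_3-1$, which vanishes only when $p_1=p_3=1$ and is negative otherwise. A short case split then leaves two possibilities. Either $p_2=0$, in which case the analogous condition for Player~3 forces $p_3=0$ and then Player~1's linear payoff $2-p_1$ forces $p_1=0$, yielding $\allbot$; or $p_2>0$, which forces $p_1=p_3=1$, and then Player~3's indifference forces $p_2=1$, yielding $\allG$ (at which Player~1's payoff is identically zero in $p_1$, so indifference is automatic).

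For the Pareto statements I would tabulate the total payoff $u_1+u_2+u_3$ at each pure profile. It equals $0$ at $\allG$ and at each of the three profiles with exactly one $\bot$, equals $3$ at each of the three profiles with exactly two $\bot$s, and equals $6$ at $\allbot$. Linearity of the total in the mixed strategies then bounds it by $6$ with equality only at $\allbot$; hence any strategy profile weakly dominating $\allbot$ (whose payoff vector is $(2,2,2)$) must itself be $\allbot$, and $\allbot$ is Pareto optimal for every $u\geq 0$. Finally, the payoff $(0,0,0)$ at $\allG$ with $u=0$ is strictly dominated coordinatewise by $(2,2,2)$ at $\allbot$, so $\allG$ is not Pareto optimal. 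The only step requiring any real care is the case split for the $u=0$ Nash equilibrium analysis; the best-response identity and the Pareto argument via pure-profile sums are each a one-line observation.
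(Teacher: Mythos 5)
Your proposal is correct and takes essentially the same elementary route as the paper: a dominance/best-response analysis of the $2\times 2\times 2$ game for the equilibrium claims (your expected-payoff difference $p_1p_3(1-u)-1$ is just the strict dominance of $\bot$ over $G$ for Players~2 and~3 when $u>0$), together with a direct payoff comparison showing $\allbot$ Pareto optimal and $\allG$ dominated by $\allbot$ when $u=0$. If anything you are more thorough than the paper's proof, which disposes of the $u=0$ uniqueness and the Pareto optimality of $\allbot$ with a ``clearly''; your total-payoff bound of $6$, attained only at $\allbot$, is a clean justification that also covers $u>1$, where a naive coordinatewise-maximum argument would not directly apply.
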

\begin{proof}
  When $u=0$, clearly $\allG$ is a Nash equilibrium, which is Pareto
  dominated by $\allbot$. Likewise, clearly $\allbot$ is always a
  Pareto optimal Nash equilibrium. When $u>0$, the action $G$ is
  strictly dominated by the action~$\bot$ for Player~2 and Player~3,
  and hence they play~$\bot$ with probability~$1$ in a Nash
  equilibrium. The only best reply of Player~1 is to play~$\bot$ with
  probability~$1$ as well.
\end{proof}
Analogously to Definition~\ref{DEF:G1} we define the game
$\calG_3=\calG_3(\calS)$ to be the game extending $\calG_0$ with
$\calH_3$ replacing the role of $\calH_1$ and analogously to
Proposition~\ref{PROP:G1NE} any Nash equilibrium in $\calG_3$
different from $\allbot$, which is Pareto optimal, must by
Lemma~\ref{LEM:H3ParetoNE} be a Nash equilibrium of $\calG_0$ with
payoff profile $(0,0,0)$, which is not Pareto optimal. This gives the
$\cETR$-hardness part of the following theorem.
\begin{theorem}[Bilò and Mavronicolas~\cite{STACS:BiloM16}]
  $\ExistsNonParetoOptimalNE$ is $\cETR$- complete, even for 3-player
  games.
  \label{THM:NonParetoOptimalNE}
\end{theorem}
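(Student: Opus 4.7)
My plan is to adapt the construction and analysis used for $\ExistsNonStrongNE$, replacing the auxiliary gadget $\calH_1$ with $\calH_3$. First I would formally define $\calG_3 = \calG_3(\calS)$ in the style of Definition~\ref{DEF:G1}: each player gets an extra action $\bot$, and the payoff at a pure profile where no one plays $\bot$ is inherited from $\calG_0(\calS)$, while the payoff at a profile in which at least one player plays $\bot$ is given by $\calH_3(u)$, where actions of $\calG_0$ are identified with the action $G$. (As in $\calG_1$, the value of $u$ is determined implicitly by Lemma~\ref{LEM:G0payoff}: any Nash equilibrium payoff profile inside the $\calG_0$-block has the form $(2u,-u,-u)$ with $u \geq 0$.)

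Next I would prove the analogue of Proposition~\ref{PROP:G1NE}: the pure profile $\allbot$ is always a Nash equilibrium of $\calG_3$, and every other Nash equilibrium $x$ of $\calG_3$ is in fact a Nash equilibrium of $\calG_0$ in which every player receives payoff~$0$. The argument is the same projection argument as for $\calG_1$: from any Nash equilibrium of $\calG_3$, collapsing the $\calG_0$-actions of each player to $G$ and keeping $\bot$ as $\bot$ produces a Nash equilibrium of $\calH_3(u)$, where $u$ is determined from the payoff to Player~1. By Lemma~\ref{LEM:H3ParetoNE}, the only possibilities are $\allbot$ and (when $u=0$) $\allG$; the second case corresponds precisely to a Nash equilibrium of $\calG_0$ with all-zero payoffs.

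With this in hand, the Pareto-optimality analysis is immediate. Lemma~\ref{LEM:H3ParetoNE} guarantees that $\allbot$ in $\calH_3$ yields payoff $(2,2,2)$ and is Pareto optimal, and this lifts to $\allbot$ being Pareto optimal in $\calG_3$. On the other hand, any Nash equilibrium of $\calG_0$ with payoff profile $(0,0,0)$ is strictly Pareto-dominated in $\calG_3$ by the profile $\allbot$, which yields $(2,2,2)$, so such an equilibrium witnesses a non-Pareto-optimal Nash equilibrium of $\calG_3$. Combining this with Proposition~\ref{PROP:G0NE} gives a polynomial-time reduction from the promise problem of Proposition~\ref{PROP:BilinearSystem} to $\ExistsNonParetoOptimalNE$, establishing $\cETR$-hardness.

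For membership in $\cETR$, I would simply write an explicit existentially quantified first-order formula over $\RR$: assert the existence of a strategy profile $x$ satisfying the standard multilinear Nash equilibrium conditions, together with the existence of a second strategy profile $x'$ with $u_i(x') \geq u_i(x)$ for all $i$ and $u_j(x') > u_j(x)$ for some $j$. The main subtlety in the whole proof is really just the bookkeeping in the projection argument for $\calG_3$, which is why I would state and prove the analogue of Proposition~\ref{PROP:G1NE} carefully before invoking Lemma~\ref{LEM:H3ParetoNE}; everything else is a direct transfer from the $\calG_1$ setting.
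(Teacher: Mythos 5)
Your proposal is correct and follows essentially the same route as the paper: the paper likewise defines $\calG_3$ by letting $\calH_3$ replace $\calH_1$ in the $\calG_1$ construction, notes that $\allbot$ (with payoff $(2,2,2)$) is a Pareto optimal Nash equilibrium while any other Nash equilibrium is a Nash equilibrium of $\calG_0$ with payoff profile $(0,0,0)$ and hence Pareto dominated by $\allbot$, and then invokes Proposition~\ref{PROP:G0NE} for the reduction, with $\cETR$-membership via an explicit existential formula. The only detail worth making explicit (which the paper is equally terse about) is that lifting the Pareto optimality of $\allbot$ from $\calH_3$ to $\calG_3$ uses that $\calG_0$ is zero-sum, so no profile inside the $\calG_0$-block can give all three players payoff at least~$2$.
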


We next consider the problems $\ExistsStrongNE$ and
$\ExistsParetoOptimalNE$. We first outline a proof of membership in
$\cETR$, building on ideas of Gatti~et~al~\cite{AAMAS:GattiRS13} and
Hansen, Hansen, Miltersen, and
Sørensen~\cite{WINE:HansenHMS08}. Gatti~et~al.~proved that deciding
whether a given strategy profile $x$ of an $m$-player game $\calG$ is
a strong Nash equilibrium can be done in polynomial time. The crucial
insight behind this result that the question of whether a coalition of
$k\leq m$ players may all improve their payoff by together changing
their strategies can be recast into a question in a derived game about
the minmax value of an additional \emph{fictitious} player that has
only $k$ strategies. Hansen~et~al.~proved that in such a game, the
minmax value may be achieved by strategies of the other players that
are of support at most $k$.
\begin{lemma}[Hansen~et~al.~\cite{WINE:HansenHMS08}]
  Let $\calG$ be a $m+1$ player game and let $k=\Abs{S_{m+1}}$.  If
  there exists a strategy profile $x$ of the first $m$~players such
  that $u_{m+1}(x;a) \leq 0$ for all $a \in S_{m+1}$ then there also
  exists a strategy profile $x'$ of the first~$m$ players in which each
  strategy has support size at most $k$ and $u_{m+1}(x';a) \leq 0$ for
  all $a \in S_{m+1}$.
\label{LEM:SmallSupportMinmax}
\end{lemma}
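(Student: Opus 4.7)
The plan is to reduce the support of one player's strategy at a time, by a basic-feasible-solution argument in linear programming. Fix a player $i \in \{1,\dots,m\}$ and hold the other players' strategies $x_{-i}$ fixed. Since the expected payoff is multilinear, for each $a \in S_{m+1}$ there are constants $c_j^{(a)} := u_{m+1}(x_{-i}; e_j; a)$ with $u_{m+1}(x_{-i}; y; a) = \sum_j c_j^{(a)} y_j$ for every $y \in \Delta(S_i)$; the hypothesis says that $y = x_i$ satisfies $\sum_j c_j^{(a)} y_j \le 0$ for all $a$.

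I then consider the linear program
\begin{equation*}
  \min\, t \ \text{ subject to } \ \sum_j c_j^{(a)} y_j - t \le 0 \ \ (a \in S_{m+1}), \ \ \sum_j y_j = 1, \ \ y \ge 0,
\end{equation*}
in the variables $(y,t) \in \RR^{\abs{S_i}+1}$. The original $x_i$, paired with $t = \max_a \sum_j c_j^{(a)}(x_i)_j \le 0$, is feasible, and $t$ is bounded below on the simplex, so the LP attains its minimum $t^* \le 0$ at a vertex $(y^*, t^*)$ of the feasible polyhedron. A vertex in $\RR^{\abs{S_i}+1}$ is cut out by $\abs{S_i}+1$ linearly independent tight constraints; the equality $\sum_j y_j = 1$ accounts for one, and of the remaining $\abs{S_i}$, say $\alpha$ are of the form $\sum_j c_j^{(a)} y_j = t$ and $\beta$ of the form $y_j = 0$, with $\alpha + \beta = \abs{S_i}$. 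Hence $\abs{\support(y^*)} \le \abs{S_i} - \beta = \alpha \le k$, while $u_{m+1}(x_{-i}; y^*; a) \le t^* \le 0$ for every $a \in S_{m+1}$. Replacing $x_i$ by $y^*$ preserves the hypothesis, so iterating for $i = 1,2,\dots,m$ yields the desired $x'$.

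The main obstacle is the vertex counting: it is essential to keep the auxiliary variable $t$ in the LP rather than working directly in $\Delta(S_i)$ with the $k$ constraints $\sum_j c_j^{(a)} y_j \le 0$. The latter polytope has vertices of support size up to $k+1$, so a naive Carathéodory-style argument gives only the weaker bound $k+1$; adding the minimized variable $t$ enlarges the ambient dimension by one while contributing one extra tight constraint at the optimum, which sharpens the bound from $k+1$ to the claimed $k$.
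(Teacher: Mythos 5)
Your argument is correct, and it is worth noting that the paper itself offers nothing to compare it against: Lemma~\ref{LEM:SmallSupportMinmax} is imported from Hansen, Hansen, Miltersen, and S{\o}rensen without proof, so what you have written is a genuine, self-contained proof of the cited result rather than a variant of an in-paper argument. The one-player-at-a-time reduction is sound: after replacing $x_i$ by $y^*$ the hypothesis $u_{m+1}(\cdot\,;a)\leq 0$ for all $a\in S_{m+1}$ is restored for the updated profile, the strategies of the already-processed players are untouched, and multilinearity makes each single-player problem a linear one, so the induction over $i=1,\dots,m$ goes through. The vertex count is also right: the feasible set of your LP in $(y,t)\in\RR^{\abs{S_i}+1}$ is pointed (the $y$-part lies in the compact simplex and the only recession direction increases $t$), so the bounded LP attains its optimum at a vertex, where one can choose $\abs{S_i}+1$ linearly independent tight constraints including the equality $\sum_j y_j=1$; since at most $k$ of the remaining ones are payoff constraints, at least $\abs{S_i}-k$ of them are constraints $y_j=0$, giving $\abs{\support(y^*)}\leq k$, and $t^*\leq 0$ because $(x_i,\max_a\sum_j c_j^{(a)}(x_i)_j)$ is feasible. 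Your closing remark is also well taken: running the same basic-solution argument directly on the polytope $\{y\in\Delta(S_i)\mid \sum_j c_j^{(a)}y_j\leq 0 \text{ for all } a\}$ only bounds the support by $k+1$, and the auxiliary minimized variable $t$ is exactly what recovers the bound $k$ stated in the lemma.
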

We next give a generalization of the auxiliary game construction of
Gatti~et~al.\ that also allows us to treat Pareto optimal Nash
equilibria at the same time.
\begin{definition}[cf.\ Gatti~et~al~\cite{AAMAS:GattiRS13}]
  Let $\calG$ be an $m$-player game with strategy sets $S_i$ and
  utility functions $u_i$. Let $x$ be a strategy profile of $\calG$
  and let $B_1 \dot\cup B_2 \dot\cup B_3 = [m]$ be a partition of the
  players, let $k_i=\abs{B_i}$ and $k=k_1+k_2$. For $\eps>0$ consider
  the $(m+1)$-player auxiliary game
  $\calG'=\calG'_{x,\eps,(B_1,B_2,B_3)}$ defined as follows. For
  $i \in B_1 \cup B_2$ the strategy set of Player~$i$ is
  $S'_i=S_i$. For $i \in B_3$ the strategy set of Player~$i$ is
  $S_i=\{\bot\}$. Finally, the strategy set of Player~$m+1$ is
  $B_1 \cup B_2$. The utility function of Player~$m+1$ is defined as
  as follows. Let $a=(a'_1,\dots,a'_m,j)$ be a pure strategy profile
  of $\calG'$. Define the strategy profile $x^a$
  of $\calG$ letting $x^a_i=a_i$ for $i \in B_1 \cup B_2$ and
  $x^a_i=x_i$ for $i \in B_3$. We then let
  $u'_{m+1}(a)=u_j(x)-u_j(x^a)+\eps$ for $j\in B_1$ and
  $u'_{m+1}(a)=u_j(x)-u_j(x^a)$ for $j \in B_2$.
\end{definition}
The following is immediate from the definition of $\calG'$.
\begin{lemma}
  There exist a strategy profile $x'$ in $\calG$ that satisfies
  $u_i(x')>u_i(x)$ when $i \in B_1$, $u_i(x') \geq u_i(x)$ when
  $i \in B_2$, and $x'_i=x_i$ when $i \in B_3$ if and only if there
  exist $\eps>0$ and a strategy $x'$ in
  $\calG'_{x,\eps,(B_1,B_2,B_3)}$ of the first $m$~players such that
  $u'_{m+1}(x',j)\leq 0$ for all $j \in B_1 \cup B_2$.
  \label{LEM:MinMaxFormulation}
\end{lemma}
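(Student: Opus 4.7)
The plan is to prove both directions of the biconditional by directly unpacking the definition of $\calG'$.

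For the forward direction, assume $x'$ is a strategy profile of $\calG$ with $u_i(x') > u_i(x)$ for $i\in B_1$, $u_i(x')\geq u_i(x)$ for $i\in B_2$, and $x'_i=x_i$ for $i\in B_3$. I would define a strategy profile $\tilde x'$ of the first $m$~players of $\calG'$ by setting $\tilde x'_i=x'_i$ for $i\in B_1\cup B_2$ and $\tilde x'_i=\bot$ for $i\in B_3$. The key observation is that because $x'_i=x_i$ on $B_3$, extending $\tilde x'$ back to a strategy profile of $\calG$ via the map $a\mapsto x^a$ recovers exactly $x'$, so $u_j(x^{\tilde x'})=u_j(x')$ for every $j$. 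Now set $\eps:=\min_{j\in B_1}(u_j(x')-u_j(x))$ if $B_1\neq\emptyset$ and take any positive $\eps$ otherwise; the strict inequalities on $B_1$ ensure $\eps>0$. Then $u'_{m+1}(\tilde x',j)=u_j(x)-u_j(x')+\eps\leq 0$ for every $j\in B_1$ and $u'_{m+1}(\tilde x',j)=u_j(x)-u_j(x')\leq 0$ for every $j\in B_2$, as required.

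For the backward direction, assume $\eps>0$ together with a strategy profile $x'$ of the first $m$~players of $\calG'$ satisfy $u'_{m+1}(x',j)\leq 0$ for every $j\in B_1\cup B_2$. In $\calG'$, players in $B_3$ are forced to play the unique action~$\bot$, so I can lift $x'$ to a strategy profile $\hat x'$ of $\calG$ by keeping $\hat x'_i=x'_i$ for $i\in B_1\cup B_2$ and setting $\hat x'_i=x_i$ for $i\in B_3$. Then by construction $x^{x'}=\hat x'$, so the hypothesis $u'_{m+1}(x',j)\leq 0$ translates (pure action by pure action of Player~$m+1$, then combined by linearity over mixed strategies of the first $m$~players) to $u_j(\hat x')\geq u_j(x)+\eps>u_j(x)$ for $j\in B_1$ and $u_j(\hat x')\geq u_j(x)$ for $j\in B_2$. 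Together with $\hat x'_i=x_i$ on $B_3$, this gives precisely the desired strategy profile in $\calG$.

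The argument is routine and I do not anticipate any real obstacle. The only mild subtlety is that the payoff function of Player~$m+1$ is defined piecewise in its own pure action $j$, so the single condition $u'_{m+1}(x',j)\leq 0$ must be read off separately for each $j\in B_1$ (yielding a strict-improvement statement with slack $\eps$) and each $j\in B_2$ (yielding only a weak inequality); the slack $\eps$ in the auxiliary payoff for $B_1$-players is the mechanism that converts the backward direction's weak minmax-style inequality into the required strict improvement in $\calG$.
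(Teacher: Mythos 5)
Your proof is correct and simply spells out the definition-unpacking that the paper treats as immediate (it offers no proof, stating the lemma follows directly from the definition of $\calG'$). Both directions, including the choice $\eps=\min_{j\in B_1}(u_j(x')-u_j(x))$ and the use of multilinearity to pass from pure actions of Player~$m+1$ to the mixed profile, are exactly the intended argument.
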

The task of deciding if a strategy $x$ is Pareto optimal amounts to
checking the condition of Lemma~\ref{LEM:MinMaxFormulation} for
$B_1=\{i\}$ and $B_2=[m]\setminus\{i\}$ for all~$i$ and to decide
whether $x$ is a strong Nash equilibrium amounts to checking the
condition for all nonempty $B_1\subseteq [m]$ while letting
$B_2=\emptyset$.

According to Lemma~\ref{LEM:SmallSupportMinmax} we may restrict our
attention to strategies $x'$ in $\calG'$ of supports of size at
most~$m$. Fixing such a set of supports $T_i \subseteq S_i$ for
$i \in B_1 \cup B_2$, we may formulate the question of existence of a
strategy $x'$, with $\support(x'_i) \subseteq T_i$ for
$i \in B_1 \cup B_2$ that satisfies the conditions of
Lemma~\ref{LEM:MinMaxFormulation} as an existentially quantified
first-order formula over the reals. For a fixed $x$ we need only
$1+m^2$ existentially quantified variables to describe $\eps$ and the
strategy $x'$. Since this is a constant number of variables, when as in
our case $m$ is a constant, the general decision procedure of Basu,
Pollack, and Roy~\cite{BasuPollackRoy08} runs in polynomial time in
the bitsize of coefficients, number of polynomials, and their
degrees, resulting in an overall polynomial
time algorithm. Now, adding a step of simply enumerating over all
nonempty $B_1 \subseteq [m]$ and all support sets of size~$m$ we
obtain the result of Gatti~el~al.\ that deciding whether a given
strategy profile $x$ is a strong Nash equilibrium can be done in
polynomial time. The same holds in a similar way for checking that a
strategy profile is a Pareto optimal Nash equilibrium.

In our case, when proving $\cETR$ membership the only input is the
game $\calG$, whereas the strategy profile $x$ will be given by a
block of existentially quantified variables. We then need to show how to
express that $x$ is a Pareto optimal or a strong Nash equilibrium by a
quantifier free formula over the reals with free variables $x$. This
will be possible by the fact that quantifier elimination, rather than
just decision, is possible for the first order theory of the
reals. The quantifier elimination procedure of
Basu~et~al.~\cite{BasuPollackRoy08} runs in time exponential in the
number of free variables, so we cannot apply it directly.

Instead we express the condition of Lemma~\ref{LEM:MinMaxFormulation}
for a strategy profile $x'$ that is constrained by
$\support(x'_i) \subseteq T_i$ for $i \in B_1 \cup B_2$ in terms of
additional free variables $\widetilde{u}'$ that take the place of the
values of the utility function $u'$ of $\calG'$. Since the supports of
$x'$ are restricted to size $m$, just $m^{m+1}$ variables are needed
to represent the utility to Player~$m+1$ on every such pure strategy
profile. For constant $m$, this is a constant number of variables, and
thus the quantifier elimination procedure of Basu~et~al.\ runs in
polynomial time and outputs a quantifier free formula over the reals
with free variables $\widetilde{u}'$ that expresses the condition of
Lemma~\ref{LEM:MinMaxFormulation} when the utilities $u'$ are given by
$\widetilde{u}'$. After this we substitute expressions for the
utilities $u'$ in terms of the variables $x$ for the
variables~$\widetilde{u}'$. The final formula is obtained, in an
analogous way to the decision question, by enumerating over the
appropriate sets $B_1$ and $B_2$ as well as all possible supports
$T_i$, obtaining a formula for each such choice and combining them to
a single formula with free variables $x$ expressing either that $x$ is
Pareto optimal or that $x$ is a strong Nash equilibrium. To the former
we add the simple conditions of $x$ being a Nash equilibrium. Finally
we existentially quantify over $x$ and obtain a formula expressing
either that $\calG$ has a Pareto optimal Nash equilibrium or that
$\calG$ has a strong Nash equilibrium. Since this formula was computed
in polynomial time given $\calG$ we obtain the following result.
\begin{proposition}
  $\ExistsStrongNE$ and $\ExistsParetoOptimalNE$ both belong to
  $\cETR$.
\label{PROP:StrongParetoETRMembership}
\end{proposition}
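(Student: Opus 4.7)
The plan is to show that in each case the relevant property of a strategy profile $x$ can be expressed by a quantifier free first-order formula over $\RR$ with free variables $x$, computable in polynomial time from $\calG$. Once this is done, the existence of a Pareto optimal NE or a strong NE is an explicit $\ETR$-sentence: existentially quantify $x$, conjoin the standard NE conditions (which are already quantifier free polynomial conditions on $x$) with the computed formula.

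To obtain that quantifier free formula, I would use Lemma~\ref{LEM:MinMaxFormulation}: $x$ fails to be Pareto optimal iff for some player~$i$ there is a beneficial deviation with $B_1=\{i\}$ and $B_2=[m]\setminus\{i\}$, and $x$ fails to be a strong NE iff for some nonempty $B_1 \subseteq [m]$ and $B_2=\emptyset$ there is such a deviation; in both cases this is reformulated as existence of $\eps>0$ and a strategy profile $x'$ of the first $m$ players in the auxiliary game $\calG'_{x,\eps,(B_1,B_2,B_3)}$ making the minmax value of the fictitious player non-positive. By Lemma~\ref{LEM:SmallSupportMinmax} the witness $x'$ may be taken with each $\support(x'_i)$ of size at most $m$. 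I would therefore enumerate over the finitely (polynomially, since $m$ is constant) many choices of $(B_1,B_2)$ and of support sets $T_i \subseteq S_i$ of size at most $m$, and for each choice write the condition as an existential first-order statement in $\eps$ and in the $O(m^2)$ probabilities describing $x'$ on the restricted supports.

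The technical obstacle is that naively applying quantifier elimination to these formulas yields running time exponential in the number of free variables, and the variables $x$ describing the candidate equilibrium are not constant in number. To circumvent this I would introduce auxiliary free variables $\widetilde{u}'$, one for each pure profile of $\calG'$ on the restricted supports; since $m$ is constant there are only $O(m^{m+1})$ such variables. The condition of Lemma~\ref{LEM:MinMaxFormulation}, expressed as an existential sentence in $\eps$ and the $O(m^2)$ probabilities of $x'$ with the utilities of the fictitious player treated as the symbols $\widetilde{u}'$, can then be put through the quantifier elimination algorithm of Basu, Pollack, and Roy~\cite{BasuPollackRoy08} in polynomial time, producing a quantifier free formula $\psi(\widetilde{u}')$ of polynomial size. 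After this I substitute, for each $\widetilde{u}'$-variable, the explicit polynomial expression in $x$ giving the corresponding entry $u'_{m+1}$ computed from the utilities of $\calG$.

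Finally I take the disjunction of the resulting formulas over all choices of $(B_1,B_2)$ and of supports $T_i$ to express ``$x$ is not Pareto optimal'' respectively ``$x$ is not a strong NE''; the negation (equivalently, a direct formulation with universal quantifiers over the enumerated choices, all expanded) gives the desired quantifier free condition. Conjoining with the NE condition and existentially quantifying over $x$ yields the $\ETR$-formula for both $\ExistsParetoOptimalNE$ and $\ExistsStrongNE$, and the whole construction runs in polynomial time in $|\calG|$, establishing membership in $\cETR$.
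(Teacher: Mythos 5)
Your proposal follows essentially the same route as the paper: recast deviations via Lemma~\ref{LEM:MinMaxFormulation}, bound witness supports by Lemma~\ref{LEM:SmallSupportMinmax}, introduce the constant-size block of auxiliary utility variables $\widetilde{u}'$ so that the Basu--Pollack--Roy quantifier elimination runs in polynomial time, substitute the polynomial expressions in $x$, enumerate over coalitions and supports, and finally existentially quantify $x$ together with the NE conditions. This matches the paper's argument for Proposition~\ref{PROP:StrongParetoETRMembership}, so the proof is correct as proposed.
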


For showing $\cETR$-hardness we construct a new extension of $\calG_0$.
\begin{definition}
  For $u\geq0$, let $\calH_4=\calH_4(u)$ be the 3-player game given by
  the following matrices, where Player~1 selects the matrix, Player~2
  selects the row, Player~3 selects the column.
  \begin{table}[H]
  \captionsetup[subtable]{labelformat=empty}
  \begin{subtable}[t]{0.49\textwidth}
  \centering
  \begin{tabular}{c|c|c|}
    \multicolumn{1}{c}{} & \multicolumn{1}{c}{$G$} & \multicolumn{1}{c}{$\bot$}\\\cline{2-3}
    $G$ & $(\phantom{-3}\mathllap{2u},-u,-u)$ & $(-3,-3,\phantom{-}0)$\\\cline{2-3}
    $\bot$ & $(-3,\phantom{-u}\mathllap{0},\phantom{-u}\mathllap{-3})$ & $(-2,-2,-2)$\\\cline{2-3}
  \end{tabular}
  \caption{$G$}
\end{subtable}
\begin{subtable}[t]{0.49\textwidth}
  \centering
  \begin{tabular}{c|c|c|}
    \multicolumn{1}{c}{} & \multicolumn{1}{c}{$G$} & \multicolumn{1}{c}{$\bot$}\\\cline{2-3}
    $G$ & $(\phantom{-}0,-3,-3)$ & $(-2,-2,-2)$\\\cline{2-3}
    $\bot$ & $(-2,-2,-2)$ & $(-1,-1,-1)$\\\cline{2-3}
  \end{tabular}
  \caption{$\bot$}
\end{subtable}
\end{table}
\end{definition}
\begin{lemma}
  When $u>0$, the only Nash equilibrium of $\calH_4(u)$ is the pure
  strategy profile $\allbot$. When $u=0$, the only Nash equilibria of
  $\calH_4(u)$ are the pure strategy profiles $\allG$ and $\allbot$.
  Furthermore, when $u=0$, the Nash equilibrium $\allG$ is both a Pareto
  optimal and a strong Nash equilibrium.
  \label{LEM:H4NE}
\end{lemma}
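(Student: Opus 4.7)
The plan is to base everything on a single dominance calculation for Players~2 and~3, after which the NE classification, Pareto optimality, and strong NE property all follow by direct bookkeeping. First I would note that $\allbot$ is an NE for all $u\geq 0$: a unilateral switch from $\bot$ to $G$ sends any single player's payoff from $-1$ to $-2$. Writing $p_i$ for Player~$i$'s probability of playing $G$, a direct expansion using the entries of the tables gives, for Player~2,
\[
  u_2(x\setminus \bot) - u_2(x\setminus G) \;=\; u\, p_1 p_3 + (1 - p_1 p_3) \;=\; 1 + (u-1)\,p_1 p_3,
\]
and by the symmetry of $\calH_4$ between Players~2 and~3 the analogous formula (with $p_2$ in place of $p_3$) governs Player~3. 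For $u>0$ the right-hand side is at least $\min(1,u)>0$, so $\bot$ strictly dominates $G$ for both Players~2 and~3, forcing $p_2=p_3=0$; then Player~1's best reply is $\bot$ (payoff $-1$ versus $-2$), giving the unique NE $\allbot$. For $u=0$ the gap is $1 - p_1 p_3 \geq 0$, with equality iff $p_1=p_3=1$; hence in an NE either both Players~2 and~3 play $\bot$ pure (and $\allbot$ follows as above), or indifference forces $p_1=p_3=1$ and symmetrically $p_2=1$, producing $\allG$, which is directly verified to satisfy all best-reply conditions.

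For Pareto optimality of $\allG$ at $u=0$, the payoff vector is $(0,0,0)$. Reading the tables, the maximum pure payoff of Player~2 is $0$, attained only at $\allG$ and $(G,\bot,G)$. So any mixed profile $x'$ with $u_2(x') \geq 0$ must be supported on these two pure profiles, which forces Players~1 and~3 to play $G$ pure. Player~1's payoff then simplifies to $-3(1-p_2)$, which is nonnegative only when $p_2=1$, i.e.\ $x'=\allG$. Therefore no distinct profile weakly dominates $\allG$.

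For the strong NE claim at $u=0$, I check each nonempty coalition $B \subseteq \{1,2,3\}$ in turn. Singletons cannot improve because $\allG$ is already an NE. For each two-player coalition the excluded player still plays $G$ pure, and restricting to the resulting $2\times 2$ bimatrix one finds each coalition member's expected payoff to be a nonpositive expression (for $B=\{1,2\}$ a short computation yields $-(1-p_2)(p_1+2)$ and $-(1-p_1)(p_2+2)$, both $\leq 0$), so no simultaneous strict improvement over $0$ is possible; the remaining two cases are symmetric. Finally, the grand coalition $\{1,2,3\}$ is ruled out by the Pareto-optimality argument just given. The main obstacle in the proof is really only the careful bookkeeping of many small payoff entries; conceptually the argument is elementary.
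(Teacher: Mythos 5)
Your proof is correct and takes essentially the same route as the paper: strict dominance of $\bot$ over $G$ for Players~2 and~3 when $u>0$, observation that $\allbot$ is always an equilibrium, and direct verification of the $u=0$ claims. In fact you spell out the indifference analysis, the Pareto argument, and the coalition computations that the paper's proof dismisses as ``clear,'' so your write-up is, if anything, more complete than the original.
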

\begin{proof}
  When $u=0$, clearly $\allG$ is a Nash equilibrium, which is both
  Pareto optimal and a strong Nash equilibrium. Likewise, clearly
  $\allbot$ is always a Nash equilibrium. When $u>0$, the action $G$
  is strictly dominated by the action~$\bot$ for Player~2 and
  Player~3, and hence they play~$\bot$ with probability~$1$ in a Nash
  equilibrium. The only best reply of Player~1 is to play~$\bot$ with
  probability~$1$ as well.
\end{proof}
Analogously to Definition~\ref{DEF:G1} we define the game
$\calG_4=\calG_4(\calS)$ to be the game extending $\calG_0$ with
$\calH_4$ replacing the role of $\calH_1$. We next establish $\cETR$-hardness
\begin{theorem}
  $\ExistsParetoOptimalNE$ and $\ExistsStrongNE$ are $\cETR$-complete,
  even for 3-player games.
  \label{THM:ParetoStrongNE}
\end{theorem}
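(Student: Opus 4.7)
The proof will establish membership via Proposition~\ref{PROP:StrongParetoETRMembership} and $\cETR$-hardness by reducing from the promise version of the bilinear feasibility problem in Proposition~\ref{PROP:BilinearSystem}, mapping $\calS$ to $\calG_4(\calS)$. First, paralleling the argument of Proposition~\ref{PROP:G1NE} but using Lemma~\ref{LEM:H4NE} in place of Lemma~\ref{LEM:H1NE}, I will show that the Nash equilibria of $\calG_4$ are exactly $\allbot$ (with payoff $(-1,-1,-1)$) together with the Nash equilibria of $\calG_0$ that give every player payoff $0$. By Proposition~\ref{PROP:G0NE} the latter exist precisely when $\calS$ has a solution satisfying the promise.

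When $\calS$ is infeasible, only $\allbot$ remains, and it is easy to rule out both properties with a single construction: with Player~1 playing $z$ uniform on $S_1$ and Players~2 and~3 playing any $\calG_0$-supported strategies, Lemma~\ref{LEM:G0payoff} yields payoff $0$ to everyone, which strictly dominates $(-1,-1,-1)$ and serves as the witnessing grand-coalition deviation.

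When $\calS$ is feasible, the candidate Pareto optimal and strong equilibrium is the profile $\sigma=(z,x,y)$ from Proposition~\ref{PROP:G0NE} with payoff $(0,0,0)$. The key observation, enabled by the design of $\calH_4$, is that the sum $u_1+u_2+u_3$ is $0$ on pure profiles that lie entirely in $\calG_0$ (by zero-sum) and equals $-6$ or $-3$ on every pure profile involving $\bot$. Thus $\sum_i u_i(\sigma') \leq 0$ for every $\sigma'$, which rules out Pareto domination of $(0,0,0)$ and also blocks the grand coalition for the strong Nash condition; single-player deviations are blocked by $\sigma$ being a Nash equilibrium. For two-player coalitions restricted to $\calG_0$-actions, the identity $u_1 = -2u_2 = -2u_3$ in $\calG_0$ blocks any coalition containing Player~1 (one payoff being positive forces the other negative), while Lemma~\ref{LEM:G0payoff} pins $u_2=u_3=0$ for the $\{2,3\}$ coalition because Player~1 still plays the uniform $z$.

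The main obstacle, and the reason $\calH_4$ has the specific entries it does, is handling two-player coalitions that place positive probability on $\bot$. I plan to decompose each coalition member's expected payoff by conditioning on which subset of the coalition plays $\bot$, plug in the four resulting $\calH_4$-payoff vectors, and verify that the linear system asserting that all coalition members strictly improve is infeasible; concretely, for a coalition deviation $(z',x')$ against fixed $y$ one writes $u_i = a V_i - 3\beta_i - 2\delta$ for appropriate nonnegative coefficients $a,\beta_i,\delta$, and the inequalities $u_1>0$ and $u_2>0$ combined with $V_1=-2V_2$ force a sum of nonnegative terms to be negative. Once this case analysis is carried out, together with the preceding observations the reduction is complete, giving $\cETR$-completeness of both $\ExistsParetoOptimalNE$ and $\ExistsStrongNE$ for 3-player games.
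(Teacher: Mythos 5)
Your proposal is correct and follows essentially the same route as the paper: membership via Proposition~\ref{PROP:StrongParetoETRMembership}, hardness via the game $\calG_4(\calS)$, with $\allbot$ being neither Pareto optimal nor strong, every other Nash equilibrium being a payoff-$(0,0,0)$ equilibrium of $\calG_0$, and the Player-1-uniform profile serving as the Pareto optimal and strong witness. Your explicit decomposition of coalition deviations that mix in $\bot$ (using $V_1=-2V_2$ and the nonpositive $\bot$-contributions) is just a more detailed rendering of the paper's brief observation that all $\bot$-involving payoff vectors leave some coalition member with negative payoff, and if anything it closes that step more carefully for mixed deviations.
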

\begin{proof}
  In $\calG_4$, the strategy profile $\allbot$, with payoff profile
  $(-1,-1,-1)$, is a Nash equilibrium that is neither Pareto optimal
  or a strong Nash equilibrium, since by Lemma~\ref{LEM:G0payoff} a
  strategy profile in $\calG_0$ in which Player~1 plays an action
  according to the uniform distribution has payoff profile $(0,0,0)$.
  
  Similarly to the proof of Theorem~\ref{THM:G1NE}, any Nash
  equilibrium $x$ in $\calG_4$ different from $\allbot$ must by
  Lemma~\ref{LEM:H4NE} be a Nash equilibrium of $\calG_0$ with payoff
  profile $(0,0,0)$. Since $\calG_0$ is a zero-sum game, any strategy
  that is Pareto dominating $x$ must involve the strategy~$\bot$ and
  is thus ruled out by Lemma~\ref{LEM:H4NE}. Therefore $x$ is
  Pareto-optimal. Now, $x$ is not necessarily a strong Nash
  equilibrium, but by Lemma~\ref{LEM:G0payoff}, letting Player~1
  instead play an action of $\calG_0$ according to the uniform
  distribution is also a Nash equilibrium of $\calG_0$ with payoff
  profile $(0,0,0)$, that furthermore ensures that any strategy
  profile of Player~2 and Player~3 in $\calG_0$ does not improve their
  payoffs. Also, by Lemma~\ref{LEM:G0payoff}, no coalition involving
  Player~1 can improve their payoff without playing the action~$\bot$.
  No coalition can however improve their payoff by a strategy profile
  involving the action~$\bot$, since all such payoff profiles result
  in a player receiving negative payoff. Thus $x'$ is a strong Nash
  equilibrium.

  We conclude that Proposition~\ref{PROP:G0NE} gives a reduction
  showing $\cETR$-hardness of both $\ExistsParetoOptimalNE$ and
  $\ExistsStrongNE$, thereby together with
  Proposition~\ref{PROP:StrongParetoETRMembership} completing the
  proof.
\end{proof}

\subsection{Irrational and Rational Nash Equilibria}
\label{SEC:IrrationalRationNE}
Starting with a quadratic system in which every solution must involve
an irrational valued variable allows us to obtain $\cETR$-hardness for $\ExistsIrrationalNE$.
\begin{theorem}
  $\ExistsIrrationalNE$ is $\cETR$-hard, even for 3-player zero-sum games.
\label{THM:ExistsIrrationalNE}
\end{theorem}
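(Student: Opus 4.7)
The plan is to reduce from a $\cETR$-hard bilinear feasibility problem (as in Proposition~\ref{PROP:BilinearSystem}) \emph{augmented} so that every solution is forced to have an irrational coordinate, and then feed the augmented system $\calS'$ into the $\calG_1$ construction. By Propositions~\ref{PROP:G0NE} and~\ref{PROP:G1NE}, the Nash equilibria of $\calG_1(\calS')$ are precisely $\allbot$, which is pure and hence rational, together with strategy profiles in which Player~2 and Player~3 play mixed strategies $(x,y) \in \Simplex^{n+1}\times\Simplex^{n+1}$ that solve $\calS'$. If every solution of $\calS'$ has an irrational coordinate, then each non-trivial equilibrium of $\calG_1(\calS')$ assigns an irrational probability to the corresponding action, so $\calG_1(\calS')$ admits an irrational Nash equilibrium iff $\calS'$ is satisfiable.

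To build $\calS'$ from an arbitrary $\cETR$-hard instance $\calS$ of the bilinear feasibility problem on $\Simplex^n \times \Simplex^n$, I would extend the index set by one coordinate with new variables $x_{n+2}, y_{n+2}$ and append the two homogeneous bilinear equations
\begin{equation*}
  \sum_{j=1}^{n+2} x_{n+2} y_j - \sum_{i=1}^{n+2} x_i y_{n+2} = 0, \qquad 2 x_{n+2} y_{n+2} - \sum_{i,j=1}^{n+2} x_i y_j = 0 .
\end{equation*}
Using $\sum_i x_i = \sum_j y_j = 1$, these collapse to $x_{n+2} = y_{n+2}$ and $x_{n+2} y_{n+2} = 1/2$, which together force $x_{n+2} = y_{n+2} = 1/\sqrt{2}$. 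By bilinearity of the $q_k$, a solution $(\tilde{x},\tilde{y})$ of $\calS$ lifts to a solution of $\calS'$ by setting $x_i = (1-1/\sqrt{2})\tilde{x}_i$ for $i \leq n+1$ and $x_{n+2} = 1/\sqrt{2}$ (and similarly for $y$), and any solution of $\calS'$ restricts and rescales to a solution of $\calS$ since $1 - 1/\sqrt{2} \neq 0$. Thus $\calS'$ is $\cETR$-hard and every one of its solutions carries the irrational value $1/\sqrt{2}$ on the new coordinate.

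The main obstacle is designing the irrationality-forcing gadget within the restrictive class of \emph{homogeneous bilinear} equations on the standard simplex: no constants are available, so the constant~$1$ must be simulated via $(\sum_i x_i)(\sum_j y_j) = 1$ in order to express both the equality $x_{n+2} = y_{n+2}$ and the product constraint $x_{n+2}y_{n+2} = 1/2$ as homogeneous bilinear forms with integer coefficients. Once the gadget is in place, the remaining verifications are routine: the 3-player zero-sum game $\calG_1(\calS')$ from Definitions~\ref{DEF:G0} and~\ref{DEF:G1} has an irrational Nash equilibrium iff the $\cETR$-hard instance $\calS$ is satisfiable, giving the claimed hardness.
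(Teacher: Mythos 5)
Your reduction is correct, but the irrationality-forcing step is genuinely different from the paper's. The paper injects irrationality at the very start of the chain: it adds the quadratic equation $x^2-2=0$ to the original \QUAD\ instance and then argues that non-rationality of all solutions survives the reduction of Proposition~\ref{PROP:QUAD-CornerSimplex} (because the solution set of the produced system is the preimage of the original solution set under a rational affine map) and the homogenization of Proposition~\ref{PROP:BilinearSystem}, before applying $\calG_0$ and $\calG_1$. You instead work directly at the level of homogeneous bilinear systems over the simplex, appending one coordinate and two integer-coefficient homogeneous bilinear equations that, using $\sum_i x_i=\sum_j y_j=1$ to simulate the constant $1$, force $x_{n+2}=y_{n+2}=1/\sqrt{2}$; your lifting/restriction argument (scaling by $1-1/\sqrt{2}$ and using bilinearity of the $q_k$) is sound, so feasibility is preserved and every solution of the augmented system is irrational valued. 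What your route buys is self-containedness: you only need the plain $\cETR$-hardness statement of Proposition~\ref{PROP:BilinearSystem} (the promise is irrelevant here, and your gadget indeed destroys it, harmlessly), and you do not need to open up the proof of Proposition~\ref{PROP:QUAD-CornerSimplex} to extract the rational affine correspondence the paper relies on. What the paper's route buys is that the same trick ports immediately to other source systems (e.g.\ it is reused for the rational/symmetric variants) without designing a simplex-compatible gadget. Both arguments share the final step and its small implicit verification: a payoff-$0$ equilibrium of $\calG_0$ with Player~1 uniform remains an equilibrium of $\calG_1$ (deviations to $\bot$ yield payoff $0$), while $\allbot$ is rational, so an irrational Nash equilibrium of the zero-sum game $\calG_1(\calS')$ exists iff the bilinear system is satisfiable.
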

\begin{proof}
  The proof of Proposition~\ref{PROP:QUAD-CornerSimplex} constructs a
  polynomial time computable function that takes a system $\calS$ of
  quadratic equations and produces at new system $\calS'$ of quadratic
  equations $\calS'$. From this construction it follows that there is
  an affine function $F$ given by a matrix and a vector with rational
  entries such that the set of solutions of $\calS'$ is the inverse
  image under $F$ of the set of solutions of $\calS$.  Adding to
  $\calS$ the equation $x^2-2=0$, where $x$ is a new variable, ensures
  that every solution of $\calS$ and hence $\calS'$ is not rational
  valued. This also holds for the homogeneous bilinear system of
  equations $\calS''$ obtained from $\calS'$ by
  Proposition~\ref{PROP:BilinearSystem}.  By
  Proposition~\ref{PROP:G0NE} any Nash equilibrium of
  $\calG_0(\calS'')$ with payoff profile $(0,0,0)$ is thereby not
  rational valued. We conclude that Proposition~\ref{PROP:G1NE} gives
  a reduction showing $\cETR$-hardness of $\ExistsIrrationalNE$, since
  the Nash equilibrium $\allbot$ of $\calG_1(\calS'')$ is  a
  rational valued strategy profile.
\end{proof}
While Theorem~\ref{THM:ExistsIrrationalNE} shows that deciding
whether a Nash equilibrium that is not rational valued exists is
$\cETR$-hard, we do not know whether the problem $\ExistsIrrationalNE$
is even decidable.

We next consider the question of deciding whether a given game has a
rational valued Nash equilibrium. This problem is naturally
expressible in the existential theory of the rationals $\ETQ$, which
is however not known to be decidable. It is natural to ask whether the
problem $\ExistsRationalNE$ is also $\cETQ$-hard. An obstacle for
such a result however, is that we do not know a bound on the magnitude
of coordinates of rational solutions to quadratic equations similar to
the case of real numbers. We can however start from a promise version
of $\QUAD_\QQ$ and construct a reduction to $\ExistsRationalNE$. We
sketch the construction below.
\begin{definition}
  Let $\QUAD_\QQ(\Ball(\allzeros,1))$ denote the promise problem
  given by  $\QUAD_\QQ$ together with the promise that if the
  given quadratic system has a solution over $\QQ$, then a solution
  over $\QQ$ exists in the unit ball $\Ball(\allzeros,1)$.
  \label{DEF:QBallPromiseProblem}
\end{definition}
A simple scaling and translation give a reduction from the promise
problem of Definition~\ref{DEF:QBallPromiseProblem} to the analogue
over $\QQ$ of the promise problem of
Proposition~\ref{PROP:QUAD-CornerSimplex} and then further to the
analogue over $\QQ$ of the promise problem of
Proposition~\ref{PROP:BilinearSystem}.  We shall then construct a
modification of $\calG_1$ in which the Nash equilibrium $\allbot$ is
replaced by an irrational valued Nash equilibrium. Several examples of
3-player games are known that are without rational valued Nash
equilibria. We give below a simple 3-player \emph{zero-sum} game with
a unique Nash equilibrium that is irrational valued.
\begin{definition}
  Let $\calH_5$ be the 3-player zero-sum game where each player has
  the action set $\{1,2\}$, and the payoff vectors are given by the
  following two matrices, where Player~1 selects the matrix, Player~2
  selects the row, Player~3 selects the column.
  \begin{table}[H]
    \captionsetup[subtable]{labelformat=empty}
  \begin{subtable}[t]{0.49\textwidth}
  \centering
  \begin{tabular}{c|c|c|}
    \multicolumn{1}{c}{} & \multicolumn{1}{c}{$1$} & \multicolumn{1}{c}{$2$}\\\cline{2-3}
    $1$ & $(-4,2,2)$ & $(-2,1,1)$\\\cline{2-3}
    $2$ & $(-2,1,1)$ & $(\phantom{-}0,0,0)$\\\cline{2-3}
  \end{tabular}
  \caption{$1$}
\end{subtable}
\begin{subtable}[t]{0.49\textwidth}
  \centering
  \begin{tabular}{c|c|c|}
    \multicolumn{1}{c}{} & \multicolumn{1}{c}{$1$} & \multicolumn{1}{c}{$2$}\\\cline{2-3}
    $1$ & $(\phantom{-}0,0,0)$ & $(-2,1,1)$\\\cline{2-3}
    $2$ & $(-2,1,1)$ & $(-6,3,3)$\\\cline{2-3}
  \end{tabular}
  \caption{$2$}
\end{subtable}
\end{table}
\end{definition}
We omit the straightforward but tedious analysis of the game $\calH_5$.
\begin{lemma}
  The unique Nash equilibrium of $\calH_5$ has Player~1 playing
  action~$1$ with probability~$1-1/\sqrt{6}$, and both Player~2 and
  Player~3 playing the action~$1$ with probability~$3-\sqrt{6}$. The
  Nash equilibrium payoff profile is 
  $(-4(3-\sqrt{6}),2(3-\sqrt{6}), 2(3-\sqrt{6}))$.
\end{lemma}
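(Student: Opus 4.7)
The plan is to parametrize candidate Nash equilibria by triples $(p,q,r) \in [0,1]^3$, where $p$, $q$, $r$ denote the probabilities that Players~$1$, $2$, and~$3$ play action~$1$, respectively. A useful preliminary observation is that every pure payoff vector of $\calH_5$ has the form $(-2t,t,t)$, so the identity $u_1 = -2u_2 = -2u_3$ holds pointwise; in particular Players~$2$ and~$3$ are interchangeable in payoff structure, which halves the case analysis and lets me read Player~$1$'s payoff directly from either of the other two.

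For the fully mixed case, I would expand the expected utilities as polynomials in $(p,q,r)$ and write out the three indifference conditions. A direct calculation turns $u_1(1,q,r)=u_1(2,q,r)$ into $(q-3)(r-3)=6$, the equation $u_2(p,1,r)=u_2(p,2,r)$ into $(p-1)(r-3)=1$, and symmetrically $u_3(p,q,1)=u_3(p,q,2)$ into $(p-1)(q-3)=1$. Dividing the last two equations forces $q=r$, whereupon $(q-3)^2=6$; the root $q = 3+\sqrt{6}$ lies outside $[0,1]$, leaving $q = r = 3 - \sqrt{6}$, and back-substitution yields $p = 1 - 1/\sqrt{6}$. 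Both values lie strictly in $(0,1)$, confirming that the candidate is indeed fully mixed. The payoff of Player~$2$ is then $u_2(p,1,r) = 2pr - r + 1 = 2(3-\sqrt{6})$, and the structural identity $u_1 = -2u_2$ gives the first coordinate.

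Uniqueness is established by a case analysis on the supports of the three players. The eight pure profiles are each dismissed by exhibiting a strictly profitable unilateral deviation. If Player~$1$ is pure, then within Matrix~$1$ the payoffs make action~$1$ strictly dominant for Players~$2$ and~$3$, while within Matrix~$2$ action~$2$ is strictly dominant, forcing the other players into a pure profile that one verifies is not a Nash equilibrium. If Players~$2$ and~$3$ are both pure with Player~$1$ mixed, Player~$1$'s indifference singles out at most one pair $(q,r)\in\{0,1\}^2$, and in each such case the best-response constraints on the other players are incompatible. If exactly one of Players~$2$, $3$ mixes while Player~$1$ is mixed, the two indifference conditions combine to give contradictory constraints; for instance, if $q=1$ and Player~$3$ mixes, Player~$3$'s indifference forces $p = 1/2$, while Player~$1$'s indifference forces $r = 0$, contradicting the assumed mixing of Player~$3$.

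The main obstacle is purely bookkeeping: every individual sub-case is elementary, but one needs to be systematic to ensure no support pattern is missed. This is presumably why the authors describe the analysis as straightforward but tedious.
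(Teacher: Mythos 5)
Your proposal is correct and carries out precisely the ``straightforward but tedious'' analysis that the paper omits: I verified the indifference equations $(q-3)(r-3)=6$ and $(p-1)(r-3)=(p-1)(q-3)=1$, the fully mixed solution $p=1-1/\sqrt{6}$, $q=r=3-\sqrt{6}$, the equilibrium payoff $u_2=1-r+2pr=2(3-\sqrt{6})$ with $u_1=-2u_2$, and the support-case eliminations, and they all check out. The only slip is harmless: Player~1's indifference is satisfied by \emph{two} pure pairs $(q,r)\in\{(1,0),(0,1)\}$ rather than ``at most one'', but these are exchanged by the Player~2/Player~3 symmetry you already noted, and your incompatibility argument (the pure players' best-reply conditions force $p\geq 2/3$ and $p\leq 1/2$ simultaneously) dismisses both.
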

We can now provide our hardness statement for $\ExistsRationalNE$.
\begin{theorem}
  \begin{sloppypar}
  There is a polynomial time reduction from the pronmise problem 
  $\QUAD_\QQ(\Ball(\allzeros,1))$ to  $\ExistsRationalNE$, and the output
  of the reduction is a 3-player zero-sum game.
\end{sloppypar}
\label{THM:ExistsRationalNE}  
\end{theorem}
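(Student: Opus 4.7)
The plan is to adapt the construction of $\calG_1$ so that its rational trivial Nash equilibrium $\allbot$ is replaced by a lifted copy of the unique irrational Nash equilibrium of $\calH_5$, leaving all remaining Nash equilibria in one-to-one correspondence with solutions in $\Simplex^n\times\Simplex^n$ of a derived bilinear system.

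First, as the authors indicate, I would use a simple scaling-and-translation argument to reduce $\QUAD_\QQ(\Ball(\allzeros,1))$ to the $\QQ$-analogue of the promise problem of Proposition~\ref{PROP:QUAD-CornerSimplex}, and then via the homogenization-and-bilinearization step preceding Proposition~\ref{PROP:BilinearSystem} to its $\QQ$-analogue. The result is a polynomial-time procedure that, given an instance of $\QUAD_\QQ(\Ball(\allzeros,1))$, outputs a homogeneous bilinear system $\calS''$ with integer coefficients such that, under the promise, $\calS''$ admits a rational solution in $\Simplex^n\times\Simplex^n$ if and only if the input admits a rational solution in $\Ball(\allzeros,1)$; all intermediate maps are rational-valued affine maps, so rationality of solutions is preserved in both directions.

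Next I would define a 3-player zero-sum game $\calG_5(\calS'')$ by modifying $\calG_1(\calS'')$ analogously to the way $\calG_2$ was built from $\calG_1$ in Section~\ref{SEC:ProbabilityRestrictedNE}: split the action $\bot$ of every player into two new actions $(\bot,1)$ and $(\bot,2)$. At pure profiles where at least one player plays a non-$\bot$ action, the payoffs are inherited from $\calG_1(\calS'')$ by identifying each $(\bot,i)$ with the original $\bot$. At pure profiles where every player plays some $(\bot,j)$, the payoffs are those of $\calH_5$ at $(j_1,j_2,j_3)$. Since $\calG_1(\calS'')$ and $\calH_5$ are both zero-sum, so is $\calG_5(\calS'')$. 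I would then prove the following dichotomy for Nash equilibria of $\calG_5(\calS'')$, by reasoning parallel to Proposition~\ref{PROP:G1NE} and the proof of Theorem~\ref{THM:NEWithSmallSupports}: every Nash equilibrium either (a) places all probability on the two $\bot$-actions of every player, with the marginal over $\{(\bot,1),(\bot,2)\}$ being the unique (irrational) Nash equilibrium of $\calH_5$, or (b) is a Nash equilibrium of $\calG_0(\calS'')$ with payoff profile $(0,0,0)$, corresponding by Proposition~\ref{PROP:G0NE} to a solution of $\calS''$ in $\Simplex^n\times\Simplex^n$. Combined with the first step, this gives the theorem: $\calG_5(\calS'')$ admits a rational-valued Nash equilibrium precisely when $\calS''$ admits a rational solution in $\Simplex^n\times\Simplex^n$, which by the promise occurs exactly when the input $\QUAD_\QQ(\Ball(\allzeros,1))$ instance is a yes-instance.

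The main technical hurdle is the dichotomy, specifically ruling out Nash equilibria that genuinely mix the $\bot$-actions with $\calG_0$-actions. I expect to follow the $\calG_2$ template: if a player places positive weight both on a $\calG_0$-action and on a $\bot$-action, then the best-response conditions on the $\calG_0$-actions force the $\calG_0$-payoff to equal $0$ via Equation~\eqref{EQ:G0payoff}, putting us in the $u=0$ regime of Lemma~\ref{LEM:H1NE}; while the best-response conditions on $(\bot,1)$ versus $(\bot,2)$, which depend only on how other players distribute their $\bot$-mass across $\calH_5$ actions, force that distribution to coincide with the unique (irrational) Nash equilibrium of $\calH_5$. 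Checking that these two constraints are jointly consistent only in the two claimed cases is the one genuine calculation required and amounts to a stability statement for the irrational Nash equilibrium of $\calH_5$ under conditioning on the $\bot$-event.
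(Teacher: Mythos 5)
Your overall plan is the paper's: pass to the $\QQ$-analogues of the promise problems, then modify $\calG_1$ in the style of $\calG_2$ so that the rational equilibrium $\allbot$ is replaced by a $\bot$-block whose unique equilibrium is the irrational one of $\calH_5$. The one substantive deviation is where the gap lies: you install the payoffs of $\calH_5$ in the all-$\bot$ block \emph{unscaled}, and then justify the dichotomy by saying you will "follow the $\calG_2$ template". But that template (the proof of Theorem~\ref{THM:NEWithSmallSupports}) rests on the explicit hypothesis that the payoff vector of $\allbot$ is changed by at most~$1$ in each coordinate, which is what keeps the case analysis of Lemma~\ref{LEM:H1NE} and Proposition~\ref{PROP:G1NE} valid verbatim. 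With raw $\calH_5$ payoffs the corner entries range over $(-4,2,2)$, $(-2,1,1)$, $(0,0,0)$, $(-6,3,3)$, i.e.\ Player~1's corner payoff moves by as much as~$4$ away from $-2$ (in particular below $-4$, Player~1's payoff for a $G$-action against two $\bot$-players), so the "perturbation at most~$1$" argument is simply not available to you. This is exactly why the paper instead sets the corner payoffs to $(-2,1,1)$ plus $\tfrac{1}{6}$ times the $\calH_5$ payoffs: the affine rescaling preserves the unique irrational equilibrium of the $\bot$-block while keeping every corner entry within~$1$ of the original, so the earlier analysis applies without change.

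Your construction is likely salvageable, but not by citation: you would have to redo the case analysis of Lemma~\ref{LEM:H1NE} for the larger perturbation, checking for instance that $\bot$-type actions still dominate $G$-type actions for Players~2 and~3 (this works because $\calH_5$ gives them nonnegative payoffs, which beat the $-3$ entries), and that Player~1 facing two $\bot$-players always has a $\bot$-type reply strictly better than the $-4$ obtained from a $G$-action (true here, but only after computing Player~1's best response in $\calH_5$ against an arbitrary product distribution, since some corner entries equal $-6<-4$). You flag this as "the one genuine calculation required" but do not perform it, and as written the appeal to the $\calG_2$ argument is invalid. Either carry out that verification for your unscaled game, or adopt the paper's fix of scaling $\calH_5$ by $\tfrac{1}{6}$ and adding it to the $(-2,1,1)$ corner payoff, after which everything you wrote goes through as stated.
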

\begin{proof}
  Let $\calS$ be a system of quadratic equations in $n$ variables such
  that either $\calS$ has no solutions in $\QQ^n$ or has a solution in
  $\QQ^n \cap \Ball(\allzeros,1))$. As explained above we may in
  polynomial time transform $\calS$ into a system $\calS'$ of
  homogeneous bilinear polynomials in $2(n+1)$ variables such that
  $\calS$ has a solution in $\QQ^n$ if and only if $\calS'$ has a
  solution in
  $(\QQ^{n+1} \times \QQ^{n+1}) \cap
  (\Simplex^n\times\Simplex^n)$. Define the 3-player zero-sum game
  $\calG_5=\calG_5(\calS')$ to be the game obtained from
  $\calG_1(\calS')$ as follows, similarly to the definition
  of~$\calG_2$.

  The action~$\bot$ is for all players replaced by actions $(\bot,1)$
  and $(\bot,2)$. When the players choose the pure strategy profile
  $((\bot,a_1),(\bot,a_2),(\bot,a_3))$ Player~1 receive utility
  $-2+\tfrac{1}{6}u_1(a_1,a_2,a_3)$ and Player~2 and Player~3 both
  receive utility
  $1+\tfrac{1}{6}u_2(a_1,a_2,a_3)=1+\tfrac{1}{6}u_3(a_1,a_2,a_3)$,
  where $u_1$, $u_2$, and $u_3$ are the utility functions of the game
  $\calH_5$. Thus the payoff profile $(-2,1,1)$ of the strategy
  profile $\allbot$ is perturbed by the payoffs of the game $\calH_6$,
  scaled by $\tfrac{1}{6}$ in order to ensure that each entry is
  perturbed by at most~$1$. As in the proof of
  Theorem~\ref{THM:NEWithSmallSupports}, a Nash equilibrium $x$ is
  either a Nash equilibrium of $\calG_0$ in which every player
  receives payoff~$0$, or is such that the players choose the actions
  $(\bot,a)$ according to the unique Nash equilibrium of
  $\calH_5$. Since the latter is irrational valued we conclude that if
  $x$ is a rational valued Nash equilibrium then $x$ is a rational
  valued Nash equilibrium of $\calG_0(\calS')$ in which every player
  receives payoff~$0$, which by Proposition~\ref{PROP:G0NE} implies a
  rational valued solution to $\calS'$. Likewise a rational valued
  solution of $\calS'$ in $\Simplex^n\times\Simplex^n$ gives a
  rational valued Nash equilibrium of $\calG_1(\calS')$, thereby
  completing the proof.  \end{proof}

\section{Decision Problems about Nash Equilibria in Symmetric Games}
\label{SEC:DecisionProblemsSNE}

In this section we consider variations of all the decision problems
considered in Section~\ref{SEC:DecisionProblemsNE}, where the given
input is now a finite strategic form \emph{symmetric} game $\calD$,
where every player share the same set $S$ of pure strategies, together
with auxiliary input. As before, $u$ denotes a rational number, $k$ an
integer, whereas we now consider a single subset $T \subseteq S$ of
actions. The decision problems are described by stating the property
that a \emph{symmetric} Nash equilibrium $x$ whose existence is to be
determined should satisfy. We use the same grouping as the problems of
Section~\ref{SEC:DecisionProblemsNE}, but now we cover all problems in the
same section.
\begin{center}
\begin{tabular}{ll}\toprule
  Problem & Condition\\\midrule
\ExistsSNEWithLargePayoffs & $u_i(x)\geq u$ for all~$i$.\\
\ExistsSNEWithSmallPayoffs & $u_i(x)\leq u$ for all~$i$.\\
\ExistsSNEWithLargeTotalPayoff & $\sum_i u_i(x)\geq u$.\\
\ExistsSNEWithSmallTotalPayoff & $\sum_i u_i(x)\leq u$.\\\midrule
\ExistsSNEInABall & $x_i(a_i)\leq u$ for all~$i$ and $a_i\in S_i$.\\
\ExistsSecondSNE & $x$ is not the only SNE.\\
\ExistsSNEWithLargeSupports & $\abs{\support(x_i)}\geq k$ for all~$i$.\\
\ExistsSNEWithSmallSupports &  $\abs{\support(x_i)}\leq k$ for all~$i$.\\
\ExistsSNEWithRestrictingSupports &  $T \subseteq \support(x_i)$ for all~$i$.\\
  \ExistsSNEWithRestrictedSupports & $\support(x_i) \subseteq T$ for all~$i$.\\\midrule
\ExistsNonParetoOptimalSNE & $x$ is not Pareto optimal.\\
\ExistsNonStrongSNE & $x$ is not a strong NE.\\
\ExistsParetoOptimalSNE & $x$ is Pareto optimal.\\
\ExistsStrongSNE & $x$ is a strong NE.\\\midrule
\ExistsIrrationalSNE & $x_i(a_i) \not\in \QQ$ for some $i$ and $a_i\in S_i$.\\
  \ExistsRationalSNE & $x_i(a_i) \in \QQ$ for all $i$ and $a_i \in S_i$.\\\bottomrule  
\end{tabular}
\end{center}
In addition to the above problems about symmetric Nash equilibria, we
also shall consider the problem $\ExistsNonSymmetricNE$, that given a
finite strategic form symmetric game $\calD$, asks whether $\calD$ has
a Nash equilibrium $x$ that is \emph{nonsymmetric}.

$\cETR$ membership of all these problems, except for those of the last
group above, follows analogously to the case of their non-symmetric
counterparts and will not be discussed further.

\subsection{Symmetrization}

Garg~et~al.~\cite{TEAC:GargMVY18} constructed a \emph{symmetrization}
transformation of 3-player games to symmetric 3-player games. This was
used to give reductions from the two problems
$\ExistsNEWithRestrictingSupports$ and
$\ExistsNEWithRestrictedSupports$ to  their symmetric
counterparts, and these were the first problems about symmetric Nash
equilibria shown to be $\cETR$-complete.  Bilò and
Mavronicolas~\cite{STACS:BiloM17}, then constructed further reductions
starting from $\ExistsSNEWithRestrictedSupports$.

We can apply a different, but similar symmetrization transformation to
the game $\calG_0(\calS)$ of Section~\ref{SEC:DecisionProblemsNE}
obtaining a symmetric game $\calD_0(\calS)$ that will form the base of
further reduction as well as giving a direct proof of
$\cETR$-completeness for the problem $\ExistsSNEWithLargePayoffs$. In
addition to our new results, we give for completeness also proofs of
the previous $\cETR$-completeness results.

The idea of symmetrization is to take a game $\calG$, with strictly
positive payoffs, and construct a new symmetric game $\calD$ in which
the players can take the role of any player of $\calG$. The game
$\calG$ is then played when the players choose distinct roles. The
players are in the construction of
Garg~et~al.~\cite[Lemma~5.1]{TEAC:GargMVY18} incentivized to have this
behavior by the choice of payoffs ($0$ or $1$) in case the roles of
the players overlap. In our case we can simply let the players be
incentivized by the given payoff requirement alone.
\begin{definition}[The symmetric 3-player game $\calD_0$]
  Let $\calG_+=\calG_+(\calS)$ be the game obtained from
  $\calG_0(\calS)$ as follows. Let $u_1$, $u_2$, and $u_3$ be the
  utility functions of $\calG_0$. Let $M$ the the smallest (positive)
  integer such that $-M<u_1(x)<M$ for all pure strategy profiles
  $x$. Define the utility functions $u'_1$ and $u'_2=u'_3$ of
  $\calG_+$ by $u'_1(x)=u_1(x)+M$ and
  $u'_2(x)=u'_3(x)=-u_1(x)+M$. Thus also,
  $u'_2(x)=u'_3(x)=2u_2(x)+M=2u_3(x)+M$.

  \begin{sloppypar}
  For a permutation $\pi$ of $\{1,2,3\}$ we denote by
  $\calG_+^\pi=\calG_+^{(\pi(1),\pi(2),\pi(3))}$ the game where
  Player~$i$ has the set of actions $S_{\pi(i)}$ and the utility
  function  given by  $u'_{\pi(i)}(a_{\pi^{-1}(1)},a_{\pi^{-1}(2)},a_{\pi^{-1}(3)})$,
  where $a_i \in S_{\pi(i)}$ is the action chosen by Player~$i$. Thus
  $\calG_+^{\pi}$ is just a reordering of the players of $\calG_+$
  such that Player~$i$ in $\calG_+^{\pi}$ assumes the role of
  Player~$\pi(i)$ in $\calG_+$.
\end{sloppypar}

Define the game $\calD_0=\calD_0(\calS)$ to be the 3-player
  symmetric form game in which the players have the set of actions
  $S = S_1 \dot\cup S_2 \dot\cup S_3$, which is the \emph{disjoint}
  union of the set of actions $S_1$, $S_2$ and $S_3$ of the players in
  $\calG_0$. We also view $S_1$, $S_2$, and $S_3$ as disjoint sets
  below. When the players play actions $a_1$, $a_2$, and $a_3$, such
  that there exists a permutation $\pi$ of $\{1,2,3\}$ satisfying that
  $a_i \in S_{\pi(i)}$, for all~$i$, then Player~$i$ receives utility
  $u'_{\pi(i)}(a_{\pi^{-1}(1)},a_{\pi^{-1}(2)},a_{\pi^{-1}(3)})$. Otherwise,
  Player~$i$ simply receives utility~$0$. The payoffs vectors of
  $\calD_0$ are illustrated below as a block tensor of payoff vectors,
  where Player~1 selects the matrix slice, Player~2 selects the
  row, and Player~3 selects the column. We let $\mathbf{0}$ denote a
  payoff tensor of any appropriate dimensions in which every payoff
  is~$0$.
  \begin{table}[H]
    \captionsetup[subtable]{labelformat=empty}
\renewcommand{\arraystretch}{1.5}    
  \begin{subtable}[t]{0.49\textwidth}
  \centering
  \begin{tabular}{c|c|c|c|}
    \multicolumn{1}{c}{} & \multicolumn{1}{c}{$S_1$} & \multicolumn{1}{c}{$S_2$} & \multicolumn{1}{c}{$S_3$}\\\cline{2-4}
    $S_1$ & $\mathbf{0}$ & $\mathbf{0}$ & $\mathbf{0}$\\\cline{2-4}
    $S_2$ & $\mathbf{0}$ & $\mathbf{0}$ & $\calG_+^{(1,2,3)}$\\\cline{2-4}
    $S_3$ & $\mathbf{0}$ & $\calG_+^{(1,3,2)}$ & $\mathbf{0}$\\\cline{2-4}
  \end{tabular}
  \caption{$S_1$}
\end{subtable}
  \begin{subtable}[t]{0.49\textwidth}
  \centering
  \begin{tabular}{c|c|c|c|}
    \multicolumn{1}{c}{} & \multicolumn{1}{c}{$S_1$} & \multicolumn{1}{c}{$S_2$} & \multicolumn{1}{c}{$S_3$}\\\cline{2-4}
    $S_1$ & $\mathbf{0}$ & $\mathbf{0}$ & $\calG_+^{(2,1,3)}$\\\cline{2-4}
    $S_2$ & $\mathbf{0}$ & $\mathbf{0}$ & $\mathbf{0}$\\\cline{2-4}
    $S_3$ & $\calG_+^{(2,3,1)}$ & $\mathbf{0}$ & $\mathbf{0}$\\\cline{2-4}
  \end{tabular}
  \caption{$S_2$}
\end{subtable}
\\
\begin{subtable}[t]{\textwidth}
  \centering
  \begin{tabular}{c|c|c|c|}
    \multicolumn{1}{c}{} & \multicolumn{1}{c}{$S_1$} & \multicolumn{1}{c}{$S_2$} & \multicolumn{1}{c}{$S_3$}\\\cline{2-4}
    $S_1$ & $\mathbf{0}$ & $\calG_+^{(3,1,2)}$ & $\mathbf{0}$ \\\cline{2-4}
    $S_2$ & $\calG_+^{(3,2,1)}$ & $\mathbf{0}$ & $\mathbf{0}$\\\cline{2-4}
    $S_3$ & $\mathbf{0}$ & $\mathbf{0}$ & $\mathbf{0}$\\\cline{2-4}
  \end{tabular}
  \caption{$S_3$}
\end{subtable}
\end{table}
\label{DEF:D0}
\end{definition}
We next relate symmetric Nash equilibria in $\calD_0$ to Nash
equilibria in $\calG_0$.
\begin{lemma}
  The games $\calG_0(\calS)$ and $\calG_+(\calS)$ have the same set of
  Nash equilibria. All players receive payoff~$0$ in $\calG_0$ if and
  only if the total payoff of the players in $\calG_+$ is $3M$, which
  is also maximum possible total payoff of the players in $\calG_+$ in
  any Nash equilibrium.
\label{LEM:G+NE}
\end{lemma}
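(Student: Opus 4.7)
The plan is to exploit the fact that the transition from $\calG_0$ to $\calG_+$ is a positive affine transformation of each player's utility function, so Nash equilibria are preserved, and then to compute the total payoff of any Nash equilibrium as an affine function of the single parameter $u$ from Lemma~\ref{LEM:G0payoff}.

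For the first claim, I would note that $u'_1(x) = u_1(x) + M$ differs from $u_1$ only by the additive constant $M$, and that $u'_2 = 2u_2 + M$ and $u'_3 = 2u_3 + M$ are positive affine transformations of $u_2$ and $u_3$. A positive affine transformation of a single player's utility function does not alter that player's best-reply correspondence, because best replies are determined up to affine equivalence of expected utilities. Applying this observation to all three players in turn shows that $\calG_0(\calS)$ and $\calG_+(\calS)$ have exactly the same set of Nash equilibria.

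For the second claim, I would compute the total payoff in $\calG_+$ directly. For an arbitrary strategy profile $x$ we have
\[
u'_1(x) + u'_2(x) + u'_3(x) = u_1(x) + 2u_2(x) + 2u_3(x) + 3M.
\]
Since $\calG_0$ is a zero-sum game, $u_2(x) + u_3(x) = -u_1(x)$, and hence
\[
u'_1(x) + u'_2(x) + u'_3(x) = u_1(x) - 2u_1(x) + 3M = 3M - u_1(x).
\]
By Lemma~\ref{LEM:G0payoff}, any Nash equilibrium of $\calG_0$ (equivalently, of $\calG_+$) has payoff profile $(2u,-u,-u)$ with $u \geq 0$, so $u_1(x) = 2u \geq 0$ and therefore the total payoff in $\calG_+$ equals $3M - 2u \leq 3M$.

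This upper bound is attained precisely when $u = 0$, which is exactly the condition that every player receives payoff~$0$ in $\calG_0$; conversely, if every player receives payoff~$0$ in $\calG_0$ then $u_1(x) = 0$ and the total $\calG_+$-payoff equals $3M$. This establishes the equivalence and simultaneously shows that $3M$ is the maximum total payoff over all Nash equilibria of $\calG_+$. No step is genuinely difficult; the only thing to be careful about is invoking the zero-sum property of $\calG_0$ to collapse the $u_2, u_3$ terms.
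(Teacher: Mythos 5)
Your proof is correct and follows essentially the same route as the paper: positive affine rescaling of each player's utility preserves the Nash equilibria, the total payoff in $\calG_+$ collapses to $3M-u_1(x)$, and Lemma~\ref{LEM:G0payoff} gives $u_1(x)\geq 0$ at equilibrium with equality exactly when all $\calG_0$-payoffs are~$0$. The only difference is cosmetic (you pass through $2u_2+2u_3$ and the zero-sum identity rather than substituting $u'_2=u'_3=-u_1+M$ directly), so there is nothing to add.
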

\begin{proof}
  Since the utility functions of $\calG_+$ are obtained from those of
  $\calG_0$ by scaling with a positive constant and adding a constant,
  the games have the same set of Nash equilibria. Note now that
  $u'_1(x)+u'_2(x)+u'_3(x)=3M-u_1(x)$. Since $u_1(x)\geq 0$ in any
  Nash equilibrium by Lemma~\ref{LEM:G0payoff}, the maximum total
  equilibrium payoff in $\calG_+$ is $3M$. In $\calG_0$ all players
  receive payoff~$0$ if and only if $u_1(x)=0$, from which the
  conclusion follows.
\end{proof}

\begin{proposition}
  Define $K=\tfrac{2M}{9}$, where $M$ is given in
  Definition~\ref{DEF:D0}.  Let $(x_1,x_2,x_3)$ be a Nash equilibrium
  of $\calG_0$ in which every player receive payoff~$0$. Then the
  strategy profile $(y,y,y)$ in which every player chooses
  $i \in \{1,2,3\}$, each with probability~$\tfrac{1}{3}$, and plays an
  action according to~$x_i$ is a symmetric Nash equilibrium in
  $\calD_0(\calS)$ in which all players receive payoff~$K$. Conversely
  let $(y,y,y)$ be a symmetric Nash equilibrium of $\calD_0$ in which
  every player receives payoff~$K$, which is also the maximum possible
  payoff of a symmetric Nash equilibrium of $\calD_0$. Then the total
  probability given to actions of each set~$S_i$ is
  exactly~$\frac{1}{3}$. Define $x_i$ to be the conditional
  probability distribution on $S_i$ obtained from $y$ given that an
  action of $S_i$ is played. Then $(x_1,x_2,x_3)$ is a Nash
  equilibrium of $\calG_0$ in which every player receives payoff~$0$.
  \label{PROP:D0SNE}
\end{proposition}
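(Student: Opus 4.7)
The plan is to study a symmetric profile $(y,y,y)$ via the decomposition $y=\sum_{i=1}^{3}q_i z_i$, where $q_i$ is the total mass $y$ places on $S_i$ and $z_i$ is the conditional distribution on $S_i$. Because $\calD_0$ pays out $\mathbf{0}$ whenever two of the three players choose actions from the same $S$-set, summing the payoff of Player~$1$ over the six permutations of roles yields the common payoff
\[
v \;=\; 2\,q_1 q_2 q_3\sum_{i=1}^{3} u'_i(z_1,z_2,z_3) \;=\; 2\,q_1 q_2 q_3\,\bigl(3M-u_1(z_1,z_2,z_3)\bigr),
\]
using the identity $u'_1+u'_2+u'_3=3M-u_1$ built into Definition~\ref{DEF:D0}. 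For the forward direction I take $q_i=\tfrac13$ and $z_i=x_i$; Lemma~\ref{LEM:G+NE} gives $u_1(x_1,x_2,x_3)=0$, so the formula yields $v=\tfrac{2}{27}\cdot 3M=K$. To check that $(y,y,y)$ is a Nash equilibrium, I compute that an arbitrary deviation $y'=\sum_i p'_i y'_i$ by Player~$1$ gives payoff $2\sum_i p'_i q_j q_k\,u'_i(x_{-i};y'_i)$, with $\{j,k\}=\{1,2,3\}\setminus\{i\}$; since $(x_1,x_2,x_3)$ is a Nash equilibrium of $\calG_+$ (same equilibria as $\calG_0$ by Lemma~\ref{LEM:G+NE}), each $u'_i(x_{-i};y'_i)\leq u'_i(x_1,x_2,x_3)=M$, and the total is at most $2M\cdot 1\cdot\tfrac19=K$.

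For the converse I would case-split on the supports $q_1,q_2,q_3$. If two $q_i$ vanish, no permutation is possible and $v=0<K$. If exactly one vanishes, say $q_3=0$, then $v=0$ in $(y,y,y)$, but Player~$1$ can profitably deviate to any pure $a\in S_3$ for payoff $2q_1q_2\,u'_3(z_1,z_2,\delta_a)>0$, using the strict inequality $u'_3=M-u_1>0$ (which holds since $|u_1|<M$ on every profile). This rules out the one-vanishing case, so in any symmetric Nash equilibrium with payoff~$K$ all three $q_i$ must be strictly positive.

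In the all-positive case, the NE support conditions force $2 q_j q_k\,u'_i(z_1,z_2,z_3)=v$ for each $i$, while arbitrary pure deviations within $S_i$ force $u'_i(z_1,\dots,y'_i,\dots,z_3)\leq u'_i(z_1,z_2,z_3)$, which is exactly the statement that $(z_1,z_2,z_3)$ is a Nash equilibrium of $\calG_+$, and thereby of $\calG_0$ by Lemma~\ref{LEM:G+NE}. Writing $t:=u_1(z_1,z_2,z_3)\geq 0$ (Lemma~\ref{LEM:G0payoff}) and $U_1=M+t$, $U_2=U_3=M-t$, the equations $v=2q_1q_2q_3\sum_l U_l$ and $2q_jq_k U_i=v$ together give $q_i=U_i/\sum_l U_l$ and the closed form
\[
v(t) \;=\; \frac{2(M+t)(M-t)^2}{(3M-t)^2}.
\]
A short derivative check shows $v(t)$ is strictly decreasing on $[0,M)$, so $v\leq v(0)=\tfrac{2M}{9}=K$, with equality iff $t=0$. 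At $t=0$ we obtain $q_1=q_2=q_3=\tfrac13$ and $u_1(z)=0$, so $(x_1,x_2,x_3):=(z_1,z_2,z_3)$ is a Nash equilibrium of $\calG_0$ in which every player receives payoff~$0$, as required. The main technical obstacle is setting up the case analysis cleanly—particularly ruling out the one-vanishing-$q_i$ scenario—and executing the one-variable maximization; both are elementary but somewhat notationally involved.
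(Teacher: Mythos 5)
Your argument is correct, and its skeleton matches the paper's: the same decomposition of $y$ into role masses $q_i$ and conditional distributions $z_i$, the same payoff identity (each player's payoff is $2q_1q_2q_3\sum_i u'_i(z)$), essentially the same forward direction, and the same key observation that when all $q_i>0$ the induced profile $(z_1,z_2,z_3)$ must be a Nash equilibrium of $\calG_+$, hence of $\calG_0$, by Lemma~\ref{LEM:G+NE}. Where you genuinely diverge is the finishing step of the converse. The paper never solves for the $q_i$: it bounds the total payoff $6q_1q_2q_3U$ directly, combining the AM--GM inequality $q_1q_2q_3\leq\tfrac{1}{27}$ (equality iff $q_1=q_2=q_3=\tfrac13$) with $U=3M-u_1(z)\leq 3M$ from Lemma~\ref{LEM:G0payoff}, and reads off the equality conditions; no indifference conditions and no calculus are needed. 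You instead use the equilibrium indifference conditions to get $2q_jq_kU_i=v$, hence $q_i=U_i/\sum_l U_l$, and then maximize the closed form $v(t)=2(M+t)(M-t)^2/(3M-t)^2$ over $t=u_1(z)\in[0,M)$ by a derivative check; this is sound (the derivative's numerator is a positive factor times $t^2-8Mt-M^2<0$ on $[0,M)$, so $v$ is indeed strictly decreasing and $v\leq v(0)=K$ with equality iff $t=0$, forcing $q_i=\tfrac13$). Your route costs more computation but yields sharper information, namely the exact role probabilities and the exact symmetric-equilibrium payoff as a function of $t$; the paper's AM--GM route is lighter and gives the ``maximum possible payoff'' claim just as directly. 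One small simplification: your deviation argument ruling out the case of exactly one vanishing $q_i$ is unnecessary, since whenever some $q_i=0$ the payoff is $0<K$, which already excludes such profiles both from the payoff-$K$ hypothesis and from the maximization.
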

\begin{proof}
  First, let $x=(x_1,x_2,x_3)$ be a Nash equilibrium of $\calG_0$ in
  which every player receive payoff~$0$. By Lemma~\ref{LEM:G+NE} $x$
  is also a Nash equilibrium of $\calG_+$ given total payoff $3M$. Let
  $y$ be the strategy that selects each $i \in \{1,2,3\}$ with
  probability~$\tfrac{1}{3}$ and then chooses an action according to a
  $x_i$. Then $(y,y,y)$ must be a symmetric Nash equilibrium of
  $\calD_0$, since if a player could improve payoff by a change to a
  different strategy $y'$, there would also be a way for one of the
  players to improve the payoff in $\calG_+$. Each player takes part
  in playing $\calG_+$ a total of~6 times, each chosen with
  probability~$\tfrac{1}{27}$, and taking the role of each player
  2~times. The payoff to each player is therefore equal to
  $\tfrac{2}{27}3M=K$ by Lemma~\ref{LEM:G+NE}.

  Assume now that $(y,y,y)$ is a symmetric Nash equilibrium of
  $\calD_0$. Let $p_i$ be the total probability given to actions of
  $S_i$, for $i\in\{1,2,3\}$. Clearly, if $p_i=0$ for some $i$ the
  players receive payoff~$0$ due to the symmetrization construction of
  $\calD_0$. Assume now that $p_i>0$ for all $i$. The conditional
  probability distributions $x_i$, obtained from $y$ given that an
  action of $S_i$ is played, are therefore well defined. The strategy
  profile $x=(x_1,x_2,x_3)$ is a Nash equilibrium of $\calG_+$ as
  otherwise a player could improve the payoff in $\calD_0$ as well. By
  Lemma~\ref{LEM:G+NE} the total payoff $U$ to the players in
  $\calG_+$ is at most $3M$, and equals $3M$ exactly when $x$ gives
  payoff~$0$ to all players in $\calG_0$. The total payoff to the
  players in $\calD_0$ is therefore equal to
  $6p_1p_2p_3U \leq p_1p_2p_318M$. By the AM-GM inequality
  $p_1p_2p_3 \leq (\tfrac{1}{3}(p_1+p_2+p_3))^3 = \tfrac{1}{27}$ with
  equality if and only if $p_1=p_2=p_3=\tfrac{1}{3}$. The maximum
  total payoff of the players is thus $\tfrac{2}{3}M=3K$, and
  obtaining this requires both that $p_1=p_2=p_3=\tfrac{1}{3}$ and
  $U=M$. Thus by Lemma~\ref{LEM:G+NE}, if $(y,y,y)$ give all players
  payoff~$K$ in $\calD_0$ then $(x_1,x_2,x_3)$ give all players
  payoff~$0$ in~$\calG_0$.
\end{proof}
\subsection{Decision Problems for Symmetric Nash Equilibria}

From Proposition~\ref{PROP:D0SNE} together with
Theorem~\ref{THM:NEWithLargePayoffs} we immediately obtain the first
$\cETR$-hardness result about symmetric Nash equilibria.
\begin{theorem}[Bilò and Mavronicolas~\cite{STACS:BiloM17}]
  $\ExistsSNEWithLargePayoffs$ and \linebreak
  $\ExistsSNEWithLargeTotalPayoff$ are $\cETR$-complete, even for
  3-player games.
\end{theorem}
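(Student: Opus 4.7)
The plan is to establish $\cETR$-hardness by combining the reduction from the promise problem of Proposition~\ref{PROP:BilinearSystem} (used in Theorem~\ref{THM:NEWithLargePayoffs}) with the symmetrization construction $\calD_0$, and invoking the correspondence supplied by Proposition~\ref{PROP:D0SNE}. Membership in $\cETR$ is, as already noted just above the theorem statement, analogous to the non-symmetric case and can be expressed by a straightforward existentially quantified first-order formula over the reals encoding the symmetric-NE conditions together with the payoff inequality.

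For hardness, I would start with an instance $\calS$ of the promise problem of Proposition~\ref{PROP:BilinearSystem} and output the symmetric 3-player game $\calD_0(\calS)$ of Definition~\ref{DEF:D0} together with the threshold $u = K = \tfrac{2M}{9}$ for the $\ExistsSNEWithLargePayoffs$ instance, and $u = 3K$ for the $\ExistsSNEWithLargeTotalPayoff$ instance. Both are computable in polynomial time since $M$ is polynomial in the bitsize of the coefficients of $\calS$.

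For correctness, observe the following. If $\calS$ has a solution, then by Proposition~\ref{PROP:G0NE} there is a Nash equilibrium of $\calG_0(\calS)$ in which every player receives payoff $0$; Proposition~\ref{PROP:D0SNE} then lifts this to a symmetric Nash equilibrium of $\calD_0(\calS)$ in which every player receives payoff exactly $K$, which yields total payoff $3K$. Conversely, suppose $\calD_0(\calS)$ has a symmetric Nash equilibrium $(y,y,y)$ in which every player receives payoff at least $K$, or equivalently, a symmetric Nash equilibrium with total payoff at least $3K$. By the maximality statement in Proposition~\ref{PROP:D0SNE}, these payoffs must equal $K$ per player (respectively $3K$ in total), and the same proposition then gives a Nash equilibrium $(x_1,x_2,x_3)$ of $\calG_0(\calS)$ in which every player receives payoff $0$; one more invocation of Proposition~\ref{PROP:G0NE} furnishes a solution to $\calS$.

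There is no real obstacle here: the entire content has already been isolated into Proposition~\ref{PROP:D0SNE}, so the proof is essentially a one-line chain of reductions. The only mild point to state explicitly is the equivalence between the per-player bound $K$ and the total-payoff bound $3K$: since every Nash equilibrium of $\calD_0$ is symmetric with respect to the players' roles (all three players draw from the same distribution $y$), the per-player payoff and the total payoff differ by a factor of exactly $3$, so the two threshold formulations are interchangeable and are both driven by the same maximality argument from Proposition~\ref{PROP:D0SNE}.
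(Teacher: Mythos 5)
Your proposal is correct and follows essentially the same route as the paper, which derives this theorem immediately from Proposition~\ref{PROP:D0SNE} (including its maximality statement for the payoff $K$) combined with the hardness reduction underlying Theorem~\ref{THM:NEWithLargePayoffs}; you merely spell out the details the paper leaves implicit, including the equivalence of the thresholds $K$ and $3K$. One phrasing slip: it is not that ``every Nash equilibrium of $\calD_0$ is symmetric,'' but rather that the problems quantify only over \emph{symmetric} equilibria $(y,y,y)$, in which all players necessarily receive equal payoffs—this is what makes the per-player and total-payoff formulations interchangeable.
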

As done for the game $\calG_0$ we now construct simple extensions of
the game $\calD_0$. We describe these constructions below. For some of
the results we give only a proof sketch.
\begin{definition}[The symmetric 3-player game $\calD_1$]
  Let $\calD_1=\calD_1(\calS)$ be the game obtained from
  $\calD_0(\calS)$ as follows. Each player is given an additional
  action~$\bot$. When no player plays the action $\bot$, the payoffs
  are the same as in $\calD_0$. When exactly one player is
  playing~$\bot$, every player receives payoff $K$. When more than one
  player is playing~$\bot$, every player receives payoff $K+1$.
  \label{DEF:D1}
\end{definition}
\begin{proposition}
  The pure strategy profile $\allbot$ is a symmetric Nash equilibrium
  of $\calD_1$ in which every player receives payoff~$K+1$. Any other
  symmetric Nash equilibrium is also a symmetric Nash equilibrium of
  $\calD_0$ and is such that every player receives payoff~$K$.
  \label{PROP:D1NE}
\end{proposition}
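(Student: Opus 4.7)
The plan is to analyze symmetric strategy profiles $(y,y,y)$ of $\calD_1$ according to the probability $p = y(\bot)$ that the common strategy $y$ places on the new action, splitting into the three cases $p = 1$, $p = 0$, and $0 < p < 1$.

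First I will verify that $\allbot$ is a symmetric Nash equilibrium yielding payoff $K+1$ to every player. This is immediate from Definition~\ref{DEF:D1}: when all three players play $\bot$, more than one player does so, giving every player payoff $K+1$; and any unilateral deviation still leaves at least two players playing $\bot$, so the deviator's payoff remains $K+1$. Hence $\allbot$ is a Nash equilibrium with the stated payoff.

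Next, for an arbitrary symmetric Nash equilibrium $(y,y,y) \neq \allbot$, so $p < 1$, I will handle two subcases. If $p = 0$, then no player ever plays $\bot$, so the expected payoffs in $\calD_1$ for any non-$\bot$ deviation coincide with those in $\calD_0$; hence $(y,y,y)$ is a symmetric Nash equilibrium of $\calD_0$. The unilateral deviation to $\bot$ yields the deviator payoff $K$ (exactly one player plays $\bot$), so the equilibrium payoff must be at least $K$. By Proposition~\ref{PROP:D0SNE} the symmetric equilibrium payoff in $\calD_0$ is at most $K$, forcing the payoff to be exactly $K$, as required.

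The main work lies in ruling out $0 < p < 1$. Here $\bot$ lies in the support of $y$, so indifference forces the payoff of playing pure $\bot$ against $(y,y)$ to equal that of any non-$\bot$ action $a$ in the support. A direct computation, splitting over the number of the other two players that also play $\bot$, yields payoff $K + 1 - (1-p)^2$ for $\bot$ and payoff $K\bigl(1 - (1-p)^2\bigr) + p^2 + (1-p)^2 \tilde{v}_a$ for a non-$\bot$ action $a$, where $\tilde{v}_a$ denotes the payoff of $a$ against $(x_y, x_y)$ in $\calD_0$ and $x_y$ is the normalization of $y$ to the non-$\bot$ actions. Equating these two expressions and simplifying gives $\tilde{v}_a = K + 2p/(1-p)$ for every $a$ in the support of $y$ different from $\bot$; the Nash condition for pure deviations off the support then yields $\tilde{v}_a \leq K + 2p/(1-p)$ for every non-$\bot$ pure action. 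Consequently $(x_y, x_y, x_y)$ will be a symmetric Nash equilibrium of $\calD_0$ with common payoff $K + 2p/(1-p) > K$, contradicting the upper bound in Proposition~\ref{PROP:D0SNE}. The main obstacle is precisely this indifference calculation, but it is a short elementary manipulation in one variable $p$, so I anticipate no serious difficulty.
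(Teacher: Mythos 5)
Your proposal is correct and follows essentially the same route as the paper: condition the symmetric strategy on $\bot$ not being played, observe that the induced profile must be a symmetric Nash equilibrium of $\calD_0$, and use the bound from Proposition~\ref{PROP:D0SNE} that $K$ is the maximum symmetric equilibrium payoff of $\calD_0$ to pin the payoff at $K$ and force $y(\bot)=0$. Your explicit indifference computation for $0<y(\bot)<1$ (yielding the value $K+2p/(1-p)>K$) is simply a more detailed rendering of the paper's brief "otherwise a player could improve by always playing $\bot$" argument.
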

\begin{proof}
  \begin{sloppypar}
  Let $(y,y,y)$ be a symmetric Nash equilibrium of $\calD_1$ that is
  different from $\allbot$. Let $y'$ be the probability distribution
  obtained from $y$ given that $\bot$ is not played. Then $(y',y',y')$
  must be a symmetric Nash equilibrium of $\calD_0$ in which every
  player receive payoff~$K$, since otherwise a player could improve
  the payoff in $\calD_1$ by always playing~$\bot$. Also it follows
  that $\bot$ is actually played with probability~$0$ by $y$, since
  otherwise a player could improve the payoff in $\calD_1$ by always
  playing~$\bot$. Thus $y=y'$ and the result follows. \qedhere
\end{sloppypar}
\end{proof}

The game $\calD_1$ gives, together with Proposition~\ref{PROP:G0NE},
reductions from the promise problem of
Proposition~\ref{PROP:BilinearSystem} to most of the problems under
consideration, showing $\cETR$-completeness. Except for
$\ExistsSNEInABall$, this was shown earlier by
Garg~et~al~\cite{TEAC:GargMVY18} and Bilò and
Mavronicolas~\cite{STACS:BiloM17}.
\begin{theorem}[Garg~et~al~\cite{TEAC:GargMVY18}; Bilò and Mavronicolas~\cite{STACS:BiloM17}]
  The following problems are $\cETR$-complete, even for 3-player games:
\begin{center}
  \begin{tabular}{ll}
$\ExistsSNEWithSmallPayoffs$, &
$\ExistsSNEWithSmallTotalPayoff$, \\
$\ExistsSNEInABall$, &
$\ExistsSecondSNE$, \\
$\ExistsSNEWithLargeSupports$, &
$\ExistsSNEWithRestrictingSupports$, \\
$\ExistsNonParetoOptimalSNE$, &
$\ExistsSNEWithRestrictedSupports$, \\
$\ExistsNonStrongSNE$.
  \end{tabular}
\end{center}
\end{theorem}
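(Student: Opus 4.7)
The plan is to reduce the promise version of bilinear systems from Proposition~\ref{PROP:BilinearSystem} to each listed problem, using $\calD_1(\calS)$ from Definition~\ref{DEF:D1} as the common target, in the spirit of Theorem~\ref{THM:G1NE}. Membership in $\cETR$ is inherited from the corresponding non-symmetric problems, so only hardness needs to be established.

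Chaining Propositions~\ref{PROP:G0NE}, \ref{PROP:D0SNE}, and \ref{PROP:D1NE} yields the key correspondence: a promised solution $(x,x)\in\Simplex^n\times\Simplex^n$ of $\calS$ produces a fully mixed SNE $(y,y,y)$ of $\calD_1$ in which every player selects a class $S_i$ uniformly from $\{1,2,3\}$ and then samples from $x$, so each player receives payoff $K=\tfrac{2M}{9}$ and no action is played with probability exceeding $\tfrac{1}{3}\cdot\tfrac{1}{2}=\tfrac{1}{6}$; conversely, every SNE of $\calD_1$ other than the pure profile $\allbot$ (which always exists and gives every player payoff $K+1$) arises in this way from a solution of $\calS$.

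For the first seven problems I set the auxiliary parameters so that the non-trivial SNE satisfies the stated condition while $\allbot$ does not: $u=K$ for $\ExistsSNEWithSmallPayoffs$, $u=3K$ for $\ExistsSNEWithSmallTotalPayoff$, $u=\tfrac{1}{2}$ for $\ExistsSNEInABall$, $k=2$ for $\ExistsSNEWithLargeSupports$, and $T=S\setminus\{\bot\}$ for both $\ExistsSNEWithRestrictingSupports$ and $\ExistsSNEWithRestrictedSupports$; for $\ExistsSecondSNE$ I feed $\allbot$ as the distinguished SNE, and a promised solution to $\calS$ supplies the required second one.

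For $\ExistsNonParetoOptimalSNE$ and $\ExistsNonStrongSNE$ the forward direction is immediate: the non-trivial SNE with all-$K$ payoff is Pareto-dominated, hence also not strong, by the grand-coalition deviation to $\allbot$ which yields $K+1$ to each player. The main obstacle is the converse: when $\calS$ has no solution and $\allbot$ is the unique SNE, one must argue that $\allbot$ itself is Pareto optimal (resp.~strong). This is delicate in $\calD_1$ because a grand coalition can access pure profiles of $\calD_0$ that may dominate $(K+1,K+1,K+1)$ coordinatewise --- for instance, a pure profile in which the three players occupy distinct classes $S_i$ and the induced $\calG_0$-profile has $u_1=0$ gives each player payoff $M>K+1$. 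I expect to handle this either by enlarging the $\bot$-payoff constant so that $\allbot$ strictly dominates every achievable pure profile of $\calD_0$, or by replacing the role of $\calH_1$ in the symmetric construction by analogous symmetrizations of the games built from $\calH_3$ and $\calH_4$ in Section~\ref{SEC:ParetoStrongNE}, and verifying symmetric-setting analogues of Lemmas~\ref{LEM:H3ParetoNE} and \ref{LEM:H4NE} for the resulting gadget.
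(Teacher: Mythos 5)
For the seven problems with auxiliary parameters your reduction is exactly the paper's: it chains Propositions~\ref{PROP:G0NE}, \ref{PROP:D0SNE} and \ref{PROP:D1NE} through the game $\calD_1$ and sets $u=K$, $u=3K$, $u=\tfrac{1}{2}$, $k=2$ and $T$ equal to the non-$\bot$ actions, just as you do, and your verification of those cases is correct.

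The gap is that you never actually prove the converse direction for $\ExistsNonParetoOptimalSNE$ and $\ExistsNonStrongSNE$; the proposal ends with ``I expect to handle this either by\dots or by\dots''. To be fair, the obstacle you isolate is genuine, and the paper's own one-line proof silently skips it: in $\calD_1$ the profile $\allbot$ pays $K+1=\tfrac{2M}{9}+1$ to each player, while a non-symmetric profile in which the three players occupy the three classes $S_1,S_2,S_3$ and the occupant of the $S_1$-role mixes uniformly over the two signs of a fixed $k$ gives every player expected payoff $M$; hence for $M\geq 2$ the profile $\allbot$ is neither Pareto optimal nor a strong Nash equilibrium, and the map as described sends no-instances of the promise problem to yes-instances of these two problems. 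Of your two proposed repairs, the first is the one that works, provided you enlarge only the payoff for \emph{two or more} players choosing $\bot$ (to some $H>2M$, above every payoff of $\calD_0$) while keeping the payoff for exactly one $\bot$ equal to $K$: the latter is what keeps the non-trivial SNE alive and lets the argument of Proposition~\ref{PROP:D1NE} go through verbatim, and with $H>2M$ the profile $\allbot$ becomes Pareto optimal and strong, while any other SNE still pays $K$ to everyone, is dominated by $\allbot$, and is abandoned by the grand coalition, so the equivalence with solvability of $\calS$ holds. Your second repair, as stated, would not suffice by itself: transplanting the $\calH_3$/$\calH_4$ payoffs at the $K$-scale hits exactly the same wall, because unlike $\calG_0$ the symmetrized game $\calD_0$ is not constant-sum, so Pareto optimality of $\allbot$ cannot be argued as in Lemma~\ref{LEM:H3ParetoNE} or Lemma~\ref{LEM:H4NE}; you would still have to push the $\bot$-payoffs above the range of $\calD_0$. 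So: same route as the paper for seven of the nine problems, a correct diagnosis of the delicate point for the remaining two, but the proof there is left unfinished and needs the explicit modification and verification sketched above.
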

\begin{proof}
  Proposition~\ref{PROP:G0NE}, Proposition~\ref{PROP:D0SNE}, and
  Proposition~\ref{PROP:D1NE} together give a reduction from the
  promise problem of Proposition~\ref{PROP:BilinearSystem} to all the
  problems under consideration thereby showing $\cETR$-hardness, when
  setting the additional parameters as follows. We let $u=K$ for
  $\ExistsSNEWithSmallPayoffs$ and we let $u=3K$ for the similar problem 
  \linebreak$\ExistsSNEWithSmallTotalPayoff$. For
  $\ExistsSNEInABall$ we let $u=\tfrac{1}{2}$ and for
  \linebreak$\ExistsSNEWithLargeSupports$ we let $k=2$. We let $T$ be the set of
  all actions except~$i$ for $\ExistsSNEWithRestrictingSupports$ and
  $\ExistsSNEWithRestrictedSupports$. \linebreak \qedhere \end{proof} We can proceed
in a similar way as Section~\ref{SEC:DecisionProblemsNE} for the
remaining problems concerning symmetric Nash equilibria. In order to
adapt the proof of Theorem~\ref{THM:NEWithSmallSupports}, we need to
replace the Nash equilibrium $\allbot$ in $\calD_1$ by a symmetric
Nash equilibrium with large supports. Bilò and
Mavronicolas~\cite[Lemma~4]{STACS:BiloM17} construct for any $k$ a
symmetric $m$-player zero-sum game with a unique symmetric Nash
equilibrium that is fully mixed on a set of $k$ strategies. We may use
this to perturb the payoff profile of $\allbot$ in $\calD_1$
analogously to the proof of Theorem~\ref{THM:NEWithSmallSupports}
thereby obtaining an alternative proof of $\cETR$-hardness of
$\ExistsSNEWithSmallSupports$.
\pagebreak
\begin{theorem}[Bilò and Mavronicolas~\cite{STACS:BiloM17}]
  \begin{sloppypar}
  $\ExistsSNEWithSmallSupports$ is $\cETR$-complete, even for 3-player
  games. \qedhere
\end{sloppypar}
\end{theorem}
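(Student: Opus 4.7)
The plan is to mimic the construction of Theorem~\ref{THM:NEWithSmallSupports}, replacing the ``trivial'' symmetric Nash equilibrium $\allbot$ of $\calD_1$ (which has support~$1$) by a symmetric Nash equilibrium with large support, while keeping the characterization of all other symmetric Nash equilibria intact. To do this, I would invoke the result of Bilò and Mavronicolas~\cite[Lemma~4]{STACS:BiloM17} to obtain, for a suitably chosen $k$, a symmetric 3-player zero-sum game $\calE$ with a unique symmetric Nash equilibrium $(z,z,z)$ that is fully mixed on all $k$ actions.

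Concretely, let $N$ be the total number of actions in $\calD_0$ and set $k=N+1$. Define a symmetric game $\calD_2 = \calD_2(\calS)$ by replacing the single action $\bot$ of each player in $\calD_1$ by $k$ new actions $(\bot,1),\ldots,(\bot,k)$. When no player plays a $(\bot,\cdot)$ action, the payoffs agree with those of $\calD_0$. When at least one player plays a $(\bot,\cdot)$ action, the payoffs agree with those of $\calD_1$ (after collapsing every $(\bot,i)$ to $\bot$), but in the case that \emph{all} players play actions of the form $(\bot,i)$, those payoffs are additionally perturbed by the payoffs of $\calE$ scaled by a constant $\tfrac{1}{C}$ chosen small enough that every entry moves by at most $1$.

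Next I would show, in analogy with the proof of Theorem~\ref{THM:NEWithSmallSupports}, that any symmetric Nash equilibrium $(y,y,y)$ of $\calD_2$ falls into one of two cases. If at least one player plays an action outside $(\bot,\cdot)$ with positive probability, then a conditioning argument (mirroring Proposition~\ref{PROP:D1NE}) shows that $(y,y,y)$ corresponds to a symmetric Nash equilibrium of $\calD_0$; the small perturbation by $\tfrac{1}{C}\calE$ is absorbed in the slack between the payoffs $K$ and $K+1$ of the two equilibria found in $\calD_1$, so the case analysis of Proposition~\ref{PROP:D1NE} goes through verbatim. Otherwise, every player places all probability on $(\bot,\cdot)$ actions, and the restricted subgame is, up to an additive constant, exactly $\calE$; by uniqueness of its symmetric equilibrium, each $(\bot,i)$ is played with probability $1/k$, so $(y,y,y)$ has support of size exactly $k$. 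Combining with Proposition~\ref{PROP:D0SNE} and Proposition~\ref{PROP:G0NE} then gives a reduction from the promise problem of Proposition~\ref{PROP:BilinearSystem} to $\ExistsSNEWithSmallSupports$ with parameter $k-1$: a symmetric Nash equilibrium of support at most $k-1=N$ exists iff $\calD_0$ has a symmetric Nash equilibrium with all payoffs equal to~$K$, iff $\calS$ has a solution in $\Simplex^n\times\Simplex^n$.

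The main obstacle will be confirming that the perturbation does not introduce spurious symmetric Nash equilibria that mix actions from $\calD_0$ with actions $(\bot,i)$; this needs a careful check that, because the gap $|K{+}1-K|=1$ strictly dominates the $\tfrac{1}{C}$-scaled payoff changes, a player always strictly prefers either to stay inside the $\calD_0$ block (yielding the conditioning argument above) or to play purely within the $(\bot,\cdot)$ block (yielding the $\calE$-equilibrium). Membership in $\cETR$ is routine and follows from the standard existential formulation as for the other support problems, so no further work is needed there.
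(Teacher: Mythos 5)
Your proposal is essentially the paper's own argument: the paper likewise replaces the action $\bot$ of $\calD_1$ by a block of actions whose all-$\bot$ payoffs are perturbed by a small scaled copy of the Bilò--Mavronicolas symmetric zero-sum game with a unique fully mixed symmetric Nash equilibrium, and then runs the dichotomy analogous to the proof of Theorem~\ref{THM:NEWithSmallSupports} via Propositions~\ref{PROP:G0NE}, \ref{PROP:D0SNE}, and~\ref{PROP:D1NE}. Your write-up even supplies slightly more detail (the choice of $k$, the scaling, and the mixed-support case) than the paper's sketch, and the point you flag about the perturbation being strictly smaller than the gap between $K$ and $K+1$ is handled simply by taking the scaling constant large enough.
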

For the problems $\ExistsParetoOptimalSNE$ and $\ExistsStrongSNE$ we
define the game $\calD_4=\calD_4(\calS)$ extending $\calD_0$ in an
analogous way to the game $\calG_4$. Namely, each player is given an
additional action~$\bot$. When no player plays the action~$\bot$, the
payoffs are the same as in $\calD_0$. When exactly one player is
playing~$\bot$, that player receives payoff~$K$, whereas the other two
players receive payoff~$K-3$. When exactly two players are
playing~$\bot$, every player receives payoff~$K-2$. Finally, when all
players are playing~$\bot$, every player receives payoff~$K-1$. Thus
the utilities of the players when a player is playing the
action~$\bot$ are those of $\calH_4$ added to~$K$. In an analogous way
to the proof of Theorem~\ref{THM:ParetoStrongNE} we may then obtain
the following result.
\begin{theorem}
  $\ExistsParetoOptimalSNE$ and $\ExistsStrongSNE$ are
  $\cETR$-complete, even for 3-player games.
\end{theorem}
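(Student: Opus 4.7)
The plan is to mirror the proof of Theorem~\ref{THM:ParetoStrongNE} with the symmetric game $\calD_4$ playing the role of $\calG_4$, using the composition Proposition~\ref{PROP:G0NE}, Proposition~\ref{PROP:D0SNE}, and a $\calD_4$-analogue of Proposition~\ref{PROP:D1NE} to transport the reduction from the non-symmetric to the symmetric setting. Membership in $\cETR$ is inherited almost verbatim from Proposition~\ref{PROP:StrongParetoETRMembership}: a symmetric Nash equilibrium is parametrized by a single mixed strategy $y$ shared across all players, so the block of existentially quantified variables simply shrinks, and the auxiliary-game construction, Lemma~\ref{LEM:SmallSupportMinmax}, and the Basu--Pollack--Roy quantifier-elimination procedure on a constant number of free variables all apply without change.

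For $\cETR$-hardness I reduce from the promise problem of Proposition~\ref{PROP:BilinearSystem}. The pure profile $\allbot$ is always a symmetric Nash equilibrium of $\calD_4$ with payoff vector $(K-1,K-1,K-1)$; invoking Nash's theorem for $\calG_0$ together with Proposition~\ref{PROP:D0SNE} furnishes a symmetric strategy in $\calD_0 \subseteq \calD_4$ yielding payoff $K$ per player, so the coalition of all three players witnesses that $\allbot$ is neither Pareto optimal nor a strong Nash equilibrium. Conversely, the argument of Proposition~\ref{PROP:D1NE}, adapted to the $\calH_4$-style payoff perturbations used in defining $\calD_4$, shows that every other symmetric Nash equilibrium of $\calD_4$ is a symmetric Nash equilibrium of $\calD_0$ in which every player receives payoff exactly $K$, and by Proposition~\ref{PROP:D0SNE} these correspond to Nash equilibria of $\calG_0$ giving payoff $0$ to every player, hence by Proposition~\ref{PROP:G0NE} to solutions of $\calS$.

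The main obstacle will be verifying that any such $(y,y,y)$, viewed as a profile of $\calD_4$, is simultaneously Pareto optimal and a strong Nash equilibrium. For deviations placing positive probability on the action $\bot$ this is immediate: every $\bot$-involving cell of $\calD_4$ leaves at least one player with payoff strictly less than $K$ (the relevant payoffs are $K$, $K-3$, $K-2$, $K-1$ depending on how many players choose $\bot$), so such profiles can neither Pareto-dominate $(y,y,y)$ nor provide a profitable coalitional deviation. For deviations that remain inside $\calD_0$, the crucial tool is the bookkeeping from the proof of Proposition~\ref{PROP:D0SNE}, which expresses the total payoff at any profile as $6 p_1 p_2 p_3 U$ with $p_1 p_2 p_3 \leq 1/27$ by AM--GM and $U \leq 3M$ when the conditional profile is a Nash equilibrium of $\calG_+$; combined with the alternative representation of $(y,y,y)$ in which each player's conditional distribution on the copy of $S_1$ is uniform over the pairs $(s,k)$ (the symmetric-game analogue of the ``Player~1 plays uniformly'' trick in the proof of Theorem~\ref{THM:ParetoStrongNE}), this allows one to rule out profitable coalitional deviations and to conclude that $(K,K,K)$ is not Pareto-dominated within $\calD_0$. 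Together with $\cETR$-membership this yields $\cETR$-completeness of both $\ExistsParetoOptimalSNE$ and $\ExistsStrongSNE$.
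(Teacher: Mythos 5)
Your overall route is the paper's: extend $\calD_0$ to $\calD_4$ by a $\bot$-action with $\calH_4$-style payoffs around $K$, compose Propositions~\ref{PROP:G0NE} and~\ref{PROP:D0SNE} with a $\calD_4$-analogue of Proposition~\ref{PROP:D1NE}, and inherit $\cETR$-membership from the nonsymmetric argument; the membership part is fine. However, your justification that $\allbot$ is neither Pareto optimal nor strong is broken exactly where it is needed. Proposition~\ref{PROP:D0SNE} converts only Nash equilibria of $\calG_0$ \emph{in which every player receives payoff~$0$} into payoff-$K$ symmetric equilibria, and by Proposition~\ref{PROP:G0NE} such equilibria exist only when $\calS$ is solvable; Nash's theorem merely gives some equilibrium of $\calG_0$, whose payoff profile is $(2u,-u,-u)$ with possibly $u>0$ (Lemma~\ref{LEM:G0payoff}), to which Proposition~\ref{PROP:D0SNE} does not apply. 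In a no-instance -- precisely the case in which the correctness of the reduction hinges on $\allbot$ not being Pareto optimal or strong -- your witness need not exist. This slip is repairable: the symmetric profile in which each player picks a block uniformly and plays the uniform distribution on the copy of $S_1$ earns exactly $K$ per player by Lemma~\ref{LEM:G0payoff} and the computation in the proof of Proposition~\ref{PROP:D0SNE}, regardless of whether $\calS$ has a solution.

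The serious gap is the step you yourself call the main obstacle, and the tools you list do not close it. Pareto optimality and strong equilibrium quantify over \emph{arbitrary} -- in particular nonsymmetric -- strategy profiles and coalitional deviations, whereas the bookkeeping $6p_1p_2p_3U$ with AM--GM constrains only symmetric profiles, and your $\bot$-cell case analysis ignores deviations that stay inside the $\calD_0$ part. Consider the nonsymmetric profile of $\calD_4$ in which the players coordinate on the three distinct blocks, say the player in the role of Player~$1$ mixes uniformly over the copy of $S_1$ and the other two play anywhere inside the copies of $S_2$ and $S_3$: by Lemma~\ref{LEM:G0payoff} the realized $\calG_+$ payoffs are $(M,M,M)$, and $M=\tfrac{9}{2}K>K$. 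This profile Pareto-dominates any payoff-$K$ symmetric equilibrium of $\calD_4$ and is a strictly improving deviation of the grand coalition, so the per-cell observation about $\bot$-payoffs and the uniform-role-$1$ representation do not suffice; the zero-sum argument that works for $\calG_4$ in Theorem~\ref{THM:ParetoStrongNE} (``any Pareto improvement must use $\bot$'') has no analogue in $\calD_0$, whose payoffs are nonnegative and reach $2M$. As written, your argument does not establish that the candidate symmetric equilibrium is Pareto optimal or strong, and this is exactly the point at which ``analogous to Theorem~\ref{THM:ParetoStrongNE}'' stops being routine: a complete proof must explicitly handle coordinated nonsymmetric deviations (e.g., by adjusting the construction so that block-coordinated profiles cannot give every player more than the equilibrium payoff), not just symmetric ones and $\bot$-involving ones.
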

We now turn to irrational and rational valued symmetric Nash
equilibria. Analogously to the proof of
Theorem~\ref{THM:ExistsIrrationalNE}, starting with a quadratic system
$\calS$ in which every solution must involve an irrational valued
variable gives via the game $\calD_1$ a reduction showing
$\cETR$-hardness for $\ExistsIrrationalSNE$.
\begin{theorem}
  $\ExistsIrrationalSNE$ is $\cETR$-hard, even for 3-player games.
\end{theorem}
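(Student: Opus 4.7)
The plan is to mimic the proof of Theorem~\ref{THM:ExistsIrrationalNE} but route the construction through the symmetrization machinery of Section~\ref{SEC:DecisionProblemsSNE} rather than the direct game $\calG_1$. Concretely, I would start from an arbitrary instance $\calS$ of $\QUAD$, append the equation $x^2-2=0$ for a fresh variable $x$, and then apply the transformations of Proposition~\ref{PROP:QUAD-CornerSimplex} and Proposition~\ref{PROP:BilinearSystem} to obtain a homogeneous bilinear system $\calS''$. Exactly as in the proof of Theorem~\ref{THM:ExistsIrrationalNE}, the fact that these transformations relate solution sets by rational affine maps, combined with the equation $x^2=2$ having no rational roots, forces every solution of $\calS''$ in $\Simplex^n \times \Simplex^n$ to be irrational valued.

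Next I would build the symmetric 3-player game $\calD_1(\calS'')$ of Definition~\ref{DEF:D1} and claim that $\calS$ has a solution if and only if $\calD_1(\calS'')$ admits an irrational valued symmetric Nash equilibrium. For the forward direction, combine Proposition~\ref{PROP:G0NE} and Proposition~\ref{PROP:D0SNE}: a solution $(x_1,x_2,x_3)$ of $\calS''$ yields a Nash equilibrium of $\calG_0(\calS'')$ with payoff profile $(0,0,0)$, which in turn yields a symmetric equilibrium $(y,y,y)$ of $\calD_0(\calS'')$ (hence of $\calD_1(\calS'')$ by Proposition~\ref{PROP:D1NE}) whose probabilities are of the form $\tfrac{1}{3}x_i(a)$; since at least one $x_i(a)$ is irrational, so is the corresponding $y(a)$.

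For the converse, take any symmetric Nash equilibrium $(y,y,y)$ of $\calD_1(\calS'')$. Proposition~\ref{PROP:D1NE} says it is either $\allbot$ or is a symmetric Nash equilibrium of $\calD_0(\calS'')$ in which every player receives payoff~$K$. The equilibrium $\allbot$ is rational, so an irrational symmetric NE must fall into the second case. Then Proposition~\ref{PROP:D0SNE} extracts conditional distributions $x_1,x_2,x_3$ forming a Nash equilibrium of $\calG_0(\calS'')$ with payoff profile $(0,0,0)$, which by Proposition~\ref{PROP:G0NE} is a solution of $\calS''$, and hence $\calS$ has a solution.

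The only subtle point, which I would flag as the main thing to verify rather than a genuine obstacle, is the preservation of irrationality in both directions of the symmetrization. In the forward direction I rely on the fact that the rescaling $y(a) = \tfrac{1}{3} x_i(a)$ takes an irrational number to an irrational number; in the reverse direction I need that if some $y(a)$ is irrational then so is the associated $x_i(a) = 3\,y(a)$, because the total mass $p_i$ on each block $S_i$ equals the rational number $\tfrac{1}{3}$ at an equilibrium with payoff $K$ (this uses the equality case of AM–GM in the proof of Proposition~\ref{PROP:D0SNE}). Everything else is a direct composition of the previously established reductions, and $\cETR$-hardness follows.
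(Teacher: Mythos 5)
Your proposal is correct and follows essentially the same route as the paper, which only sketches this result by saying that the argument of Theorem~\ref{THM:ExistsIrrationalNE} (adding $x^2-2=0$ to force irrational solutions) carries over via the symmetric game $\calD_1$ using Propositions~\ref{PROP:G0NE}, \ref{PROP:D0SNE} and \ref{PROP:D1NE}. Your spelled-out version, including the observation that scaling by $\tfrac{1}{3}$ preserves irrationality and that $\allbot$ is rational, is exactly the intended argument.
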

To make a symmetric analogue of Theorem~\ref{THM:ExistsRationalNE} we
need a 3-player symmetric game with unique Nash equilibrium that is
irrational valued. Rather than giving an explicit example, we note that
the symmetrization transformation of Garg~et~al.~\cite{TEAC:GargMVY18}
applied to, say, the game $\calH_5$ gives precisely such a symmetric
game. Using that to extend $\calD_1$ and perturb the payoff profile of
$\allbot$ we may obtain the following hardness result.
\begin{theorem}
  \begin{sloppypar}
  There is a polynomial time reduction from the promise problem 
  $\QUAD_\QQ(\Ball(\allzeros,1))$ to  $\ExistsRationalNE$.
\qedhere\end{sloppypar}
\end{theorem}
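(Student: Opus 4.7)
The plan is to construct a symmetric 3-player game $\calD_5=\calD_5(\calS')$ extending $\calD_1(\calS')$ in direct analogy with the game $\calG_5$ built in the proof of Theorem~\ref{THM:ExistsRationalNE}, and to argue that the rational-valued Nash equilibria of $\calD_5$, whether symmetric or not, correspond exactly to rational solutions of the original system.

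First, by the same scaling and translation as used in the proof of Theorem~\ref{THM:ExistsRationalNE}, I would reduce $\QUAD_\QQ(\Ball(\allzeros,1))$ to the rational analogue of the promise problem of Proposition~\ref{PROP:BilinearSystem}, obtaining a homogeneous bilinear system $\calS'$ such that $\calS'$ has a solution in $(\QQ^{n+1}\times\QQ^{n+1})\cap(\Simplex^n\times\Simplex^n)$ if and only if the input quadratic system has a rational solution in the unit ball.

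Next, following the hint preceding the theorem, let $\calH'_5$ denote the symmetric 3-player game obtained by applying the symmetrization of Garg~et~al.~\cite{TEAC:GargMVY18} to $\calH_5$; this is a symmetric game whose unique Nash equilibrium (symmetric or otherwise) is irrational valued. Construct $\calD_5$ from $\calD_1(\calS')$ by replacing the single added action $\bot$ of each player with a copy of the common action set of $\calH'_5$, call these the $\bot$-type actions. When no player plays any $\bot$-type action the payoffs agree with those of $\calD_0(\calS')$; when some but not all players play $\bot$-type actions the payoffs agree with those of $\calD_1(\calS')$ (identifying every $\bot$-type action with $\bot$); and when all three players play $\bot$-type actions the payoff vector is the constant $\allbot$-payoff of $\calD_1$, namely $(K+1,K+1,K+1)$, plus $\tfrac{1}{N}$ times the payoff vector of $\calH'_5$, where $N$ is chosen large enough that this perturbation has absolute value at most $1$ per coordinate.

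The analysis of the Nash equilibria of $\calD_5$ mirrors that in the proof of Theorem~\ref{THM:ExistsRationalNE}. The goal is to show that any Nash equilibrium $x$ of $\calD_5$ falls into exactly one of two cases: (i) no player places positive probability on any $\bot$-type action, in which case $x$ induces a Nash equilibrium of $\calD_0(\calS')$ yielding payoff $K$ to every player, which by Proposition~\ref{PROP:D0SNE} requires a solution to $\calS'$ in $\Simplex^n\times\Simplex^n$; or (ii) every player places all probability mass on $\bot$-type actions, in which case the induced strategy profile must be the unique, irrational Nash equilibrium of $\calH'_5$. Since case~(i) yields a rational Nash equilibrium iff the corresponding solution to $\calS'$ is rational, while case~(ii) is always irrational, $\calD_5$ has a rational Nash equilibrium iff $\calS$ has a rational solution in the unit ball, completing the reduction. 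The main obstacle is establishing the dichotomy rigorously for \emph{all} Nash equilibria of $\calD_5$, including nonsymmetric ones where players may play asymmetric mixtures over $\calD_0$ actions and $\bot$-type actions; this requires a best-reply argument analogous to Proposition~\ref{PROP:D1NE} and Theorem~\ref{THM:NEWithSmallSupports}, exploiting that the inherited payoff gap of $\calD_1$ at $\allbot$ strictly exceeds the perturbation $\tfrac{1}{N}$ so that indifference between the two regimes is impossible.
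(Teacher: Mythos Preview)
Your construction is exactly the one the paper sketches: apply the Garg~et~al.\ symmetrization to $\calH_5$ to obtain a symmetric 3-player game with a unique, irrational-valued Nash equilibrium, and use it to perturb the $\allbot$ payoff profile of $\calD_1(\calS')$ in the same way $\calH_5$ was used to perturb $\calG_1$ in the proof of Theorem~\ref{THM:ExistsRationalNE}. The paper gives no proof beyond that one-paragraph sketch, so at the level of the construction you match it.

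Where you diverge is in scope. The theorem sits in the section on \emph{symmetric} Nash equilibria and is introduced as ``a symmetric analogue of Theorem~\ref{THM:ExistsRationalNE}''; the intended target problem is $\ExistsRationalSNE$, not $\ExistsRationalNE$ (the printed name appears to be a slip). Consequently only \emph{symmetric} Nash equilibria of $\calD_5$ need to be analyzed, and for those the dichotomy you want is immediate from the argument of Proposition~\ref{PROP:D1NE} together with the perturbation trick of Theorem~\ref{THM:NEWithSmallSupports}: a symmetric equilibrium either avoids the $\bot$-type actions entirely and is a symmetric equilibrium of $\calD_0$ with payoff $K$, or it is supported entirely on $\bot$-type actions and coincides with the irrational equilibrium of $\calH'_5$. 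Your ``main obstacle'' of handling nonsymmetric mixtures therefore does not arise.

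Indeed, your own argument would not survive the general-NE interpretation: in case~(i) you invoke Proposition~\ref{PROP:D0SNE}, but that proposition is stated and proved only for \emph{symmetric} equilibria of $\calD_0$. A nonsymmetric equilibrium of $\calD_0$ (for instance one where each player commits to a single block $S_{\pi(i)}$ and thereby plays $\calG_+$ directly) need not have payoff $K$ nor correspond to a solution of $\calS'$, so the dichotomy breaks down. Restrict to symmetric equilibria and the proof goes through cleanly.
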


\subsection{A Decision Problem about Nonsymmetric Equilibria}
Our final result is concerned with the existence of a non-symmetric
Nash equilibrium in a symmetric game. Our hardness proof is based by a
modification of the games $\calD_0$ and $\calD_1$. We note that the
game $\calD_0$ was defined to be a symmetrization of the game
$\calG'_0$, used in the $\cETR$-hardness proof of Theorem~2 of for the
problem $\ExistsNEWithLargeTotalPayoff$, with $M$ added to every
payoff in order to make all payoffs strictly positive. This is the
appropriate choice for studying symmetric Nash equilibria, since in a
symmetric Nash equilibria of $\calD_0$ each player takes the role of
every player of $\calG_0$, thereby accumulating the payoffs of each
player (scaled appropriately). For studying nonsymmetric Nash
equilibria the idea is force the players to take on the role of just
one player of $\calG_0$.

Define $\calG'_+=\calG'_+(\calS)$ to be the game obtained from
$\calG_0$ by adding $M$ to all payoffs, where $M$ is the smallest
positive integer such that $-M < u_1(x) < M$. Define $\calD'_0$
analogously to $\calD_0$ with the game $\calG'_+$ taking the role of
$\calG_+$. Next, define the game $\calD'_1=\calD'_1(\calS)$ obtained
from $\calD'_0(\calS)$ by giving each player an additional
action~$\bot$, and defining the utility function as follows. When no
player plays the action~$\bot$, the payoffs are the same as in
$\calD'_0$. When exactly one player is playing~$\bot$, every player
receives payoff~$M$. When exactly two players are playing~$\bot$,
every player receives payoff~$M+1$. Finally, when all players are
playing~$\bot$, every player receives payoff~$M+2$.

\pagebreak 
\begin{theorem}
  $\ExistsNonSymmetricNE$ is $\cETR$-complete, even for 3-player games.
\end{theorem}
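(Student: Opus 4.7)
The plan is to reduce from the promise problem of Proposition~\ref{PROP:BilinearSystem} to $\ExistsNonSymmetricNE$ via the map $\calS \mapsto \calD'_1(\calS)$, showing that $\calS$ has a solution if and only if $\calD'_1(\calS)$ admits a nonsymmetric Nash equilibrium. Membership in $\cETR$ follows by the standard ETR encoding, with nonsymmetry expressed as a disjunction of disequalities among the coordinates of the three players' strategies. For the forward direction, suppose $\calS$ has a solution $(x,y) \in \Simplex^n\times\Simplex^n$ satisfying the promise. By Proposition~\ref{PROP:G0NE} the profile $(z,x,y)$, with $z$ uniform on $S_1$, is a Nash equilibrium of $\calG_0$ with all payoffs~$0$, hence of $\calG'_+$ with all payoffs~$M$. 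Viewed in $\calD'_1$, with the three strategies supported on the disjoint blocks $S_1,S_2,S_3 \subseteq S$, this is a nonsymmetric profile; I would verify it is a Nash equilibrium by noting that deviating to an action in a different block makes the configuration unsorted so pays~$0$, deviating within the same block is blocked by the $\calG'_+$ best-response property, and deviating to~$\bot$ yields exactly~$M$ because then exactly one player plays~$\bot$, matching the current payoff.

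For the converse, let $(y_1,y_2,y_3)$ be a nonsymmetric Nash equilibrium of $\calD'_1$ and set $q_i = y_i(\bot)$. The payoff to Player~$i$ from deviating to $\bot$ depends only on how many of the other two players also play $\bot$, and a short computation gives $M + q_j + q_k$, so $\sum_i v_i \geq 3M + 2(q_1+q_2+q_3)$. Writing $B$ for the number of players playing $\bot$ and using Lemma~\ref{LEM:G+NE} to bound the total $\calD'_0$ payoff by $3M$, the expected total payoff also satisfies $\sum_i v_i \leq 3M + 3\Pr[B=2] + 6\Pr[B=3]$. Using the identity $(1-q_1)(1-q_2)(1-q_3) = 1 - s + p - t$ for the elementary symmetric polynomials $s,p,t$ of the $q_i$, these bounds combine to $\Exp[B] \geq 3\Pr[B\geq 1]$; since $B \leq 3$ gives the reverse inequality trivially, equality must hold, forcing $B \in \{0,3\}$ almost surely. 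By independence of the $y_i$ this in turn forces $(q_1,q_2,q_3) \in \{(0,0,0),(1,1,1)\}$; the second case is the symmetric profile $\allbot$, so in a nonsymmetric Nash equilibrium no player plays $\bot$.

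With all $q_i = 0$, the $\calD'_0$ upper bound $\sum v_i \leq 3M$ and the deviation lower bound $\sum v_i \geq 3M$ force each $v_i = M$, which in turn requires the row-stochastic matrix $P$ with entries $p_{i,j} = \Pr[y_i \in S_j]$ to satisfy $\mathrm{perm}(P) = 1$ (otherwise a positive mass would lie on unsorted configurations, which pay~$0$). A short argument---expanding $1 = \prod_i \sum_j p_{i,j}$ and noting that every nonpermutation summand must vanish---shows that any $3\times 3$ row-stochastic matrix with permanent~$1$ is in fact a permutation matrix. So there is a permutation~$\pi$ with Player~$i$ playing deterministically from $S_{\pi(i)}$, and the strategies $x_j = y_{\pi^{-1}(j)}$ form a Nash equilibrium of $\calG_0$ with all payoffs~$0$; Proposition~\ref{PROP:G0NE} then extracts a solution to $\calS$. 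The main obstacle I anticipate is the coupled inequality analysis for the $q_i$: combining the deviation lower bound with the case-analysis upper bound on expected total payoff, and collapsing the result via the symmetric-function identity to the clean statement $\Exp[B] = 3\Pr[B \geq 1]$, is what rules out mixed nonsymmetric equilibria involving partial $\bot$-play, which would otherwise be difficult to exclude by direct strategy-profile analysis.
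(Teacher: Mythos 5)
Your proposal is correct, and it uses exactly the paper's reduction (the game $\calD'_1(\calS)$, with the forward direction handled just as in the paper via Proposition~\ref{PROP:G0NE}); where you differ is in how the converse is argued. The paper passes to the strategy profile of $\calD'_0$ obtained by conditioning on no player choosing $\bot$ and asserts in one line that this conditional profile must be a Nash equilibrium of $\calD'_0$ with payoff $M$ to every player, leaving the exclusion of partial $\bot$-play implicit (in the spirit of Proposition~\ref{PROP:D1NE}); it likewise asserts without detail that total payoff $3M$ in $\calD'_0$ forces the supports to lie in distinct blocks. You instead rule out $\bot$-play outright by a global accounting argument: summing the best-reply bound $v_i \geq M + q_j + q_k$ against the deviation to $\bot$ and comparing with the configuration-wise upper bound on expected total payoff, which (as you say) collapses to $\Exp[B]\geq 3\Pr[B\geq 1]$ and hence $\Pr[B=1]=\Pr[B=2]=0$; independence then leaves only $\allbot$ or no $\bot$-play, and in the latter case $v_i=M$ for each $i$ forces $\Pr[\text{sorted}]=1$, whence your permanent argument recovers the block-permutation structure and a payoff-$0$ Nash equilibrium of $\calG_0$ (using Lemma~\ref{LEM:G+NE} and Proposition~\ref{PROP:G0NE}). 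Both bounds and the row-stochastic-permanent-one-implies-permutation step check out. What your route buys is a fully explicit treatment of mixed equilibria that put partial mass on $\bot$ — precisely the case the paper's terse "since otherwise $x$ would not be a Nash equilibrium of $\calD'_1$" glosses over — together with the slightly stronger structural fact that no nonsymmetric Nash equilibrium of $\calD'_1$ uses $\bot$ at all; what the paper's conditioning argument buys is brevity and uniformity with its earlier symmetric-case analysis.
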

\begin{proof}
  We show $\cETR$-hardness by reduction from the promise problem of
  Proposition~\ref{PROP:BilinearSystem} by the game $\calD'_1(\calS)$.
  Consider a strategy profile $x'=(x_1,x_2,x_3)$ in the game
  $\calD'_0$. Since $\calG_0$ is a zero-sum game, the total payoff
  received by the players is at most $3M$. Furthermore, this is by the
  construction of $\calD'_0$ achievable only when there is a
  permutation $\pi$ of $\{1,2,3\}$ such that
  $\support(x'_i) \subseteq S_{\pi(i)}$, where $S_1$, $S_2$, and $S_3$ are
  the strategy sets of the players in $\calG_0$. Thus when the total
  payoff of the players is~$3M$ we may view the strategy profile $x'$
  as a strategy profile of $\calG_0(\calS)$.

  If there exists a strategy profile $x'$ in $\calG_0$ in which every
  player receives payoff~$0$, we may conversely view this as a
  (nonsymmetric) strategy profile of $\calD_0(\calS)$ in which every player
  receives payoff~$M$. This is also a Nash equilibrium in $\calD_1$
  which is nonsymmetric.
  
  Conversely, consider a Nash equilibrium $x$ of $\calD'_1$ that is
  nonsymmetric, and therefore different from $\allbot$. No player can
  play~$\bot$ with probability~$1$, since then $\bot$ would be the
  unique best reply of the other players. Thus we may consider the
  strategy profile $x'$ of $\calD'_0$ obtained from $x$ conditioned on
  that no player is playing~$\bot$. This must be a Nash equilibrium of
  $\calD'_0$ in which every player receives payoff~$M$, since
  otherwise $x$ would not be a Nash equilibrium of $\calD'_1$. As
  argued above this means that $x'$ gives a Nash equilibrium of
  $\calG_0$ in which every player receives payoff~$0$, thereby
  completing the proof using Proposition~\ref{PROP:G0NE}.
\end{proof}

\bibliographystyle{abbrv}
\bibliography{multiplayernash}

\begin{thebibliography}{10}

\bibitem{PJM:Aumann60}
R.~J. Aumann.
\newblock Acceptable points in games of perfect information.
\newblock {\em Pacific J. Math.}, 10(2):381--417, 1960.

\bibitem{BasuPollackRoy08}
S.~Basu, R.~Pollack, and M.-F. Roy.
\newblock {\em Algorithms in Real Algebraic Geometry}.
\newblock Springer, Berlin, Heidelberg, 2nd edition, 2008.

\bibitem{SAGT:BerthelsenH19}
M.~L.~T. Berthelsen and K.~A. Hansen.
\newblock On the computational complexity of decision problems about
  multi-player {Nash} equilibria.
\newblock In {\em {SAGT} 2019}, volume 11801 of {\em Lecture Notes in Computer
  Science}, pages 153--167. Springer, 2019.

\bibitem{TOCS:BiloM14}
V.~Bil{\`o} and M.~Mavronicolas.
\newblock Complexity of rational and irrational {Nash} equilibria.
\newblock {\em Theory of Computing Systems}, 54(3):491--527, 2014.

\bibitem{STACS:BiloM16}
V.~Bil{\`o} and M.~Mavronicolas.
\newblock A catalog of $\exists\mathbb{R}$-complete decision problems about
  {Nash} equilibria in multi-player games.
\newblock In N.~Ollinger and H.~Vollmer, editors, {\em {STACS} 2016}, volume~47
  of {\em LIPIcs}, pages 17:1--17:13. Schloss Dagstuhl - Leibniz-Zentrum
  f{\"u}r Informatik, 2016.

\bibitem{STACS:BiloM17}
V.~Bil\'o and M.~Mavronicolas.
\newblock $\exists\mathbb{R}$-complete decision problems about symmetric {Nash}
  equilibria in symmetric multi-player games.
\newblock In H.~Vollmer and B.~Vall\'e, editors, {\em {STACS} 2017}, volume~66
  of {\em LIPIcs}, pages 13:1--13:14. Schloss Dagstuhl--Leibniz-Zentrum f{\"u}r
  Informatik, 2017.

\bibitem{BAMS:BlumSS89}
L.~Blum, M.~Shub, and S.~Smale.
\newblock On a theory of computation and complexity over the real numbers:
  {NP}-completeness, recursive functions and universal machines.
\newblock {\em Bull. Amer. Math. Soc.}, 21(1):1--46, 1989.

\bibitem{FOCM:BurgisserC09}
P.~B{\"u}rgisser and F.~Cucker.
\newblock Exotic quantifiers, complexity classes, and complete problems.
\newblock {\em Foundations of Computational Mathematics}, 9(2):135--170, 2009.

\bibitem{JCSS:BussFS99}
J.~F. Buss, G.~S. Frandsen, and J.~O. Shallit.
\newblock The computational complexity of some problems of linear algebra.
\newblock {\em Journal of Computer and System Sciences}, 58(3):572 -- 596,
  1999.

\bibitem{STOC:Canny88}
J.~F. Canny.
\newblock Some algebraic and geometric computations in {PSPACE}.
\newblock In J.~Simon, editor, {\em Proceedings of the 20th Annual {ACM}
  Symposium on Theory of Computing (STOC 1988)}, pages 460--467. ACM, 1988.

\bibitem{FOCS:ChenDeng06}
X.~Chen and X.~Deng.
\newblock Settling the complexity of two-player {Nash} equilibrium.
\newblock In {\em 47th Annual {IEEE} Symposium on Foundations of Computer
  Science (FOCS 2006)}, pages 261--272. IEEE Computer Society Press, 2006.

\bibitem{GEB:ConitzerS08}
V.~Conitzer and T.~Sandholm.
\newblock New complexity results about {Nash} equilibria.
\newblock {\em Games and Economic Behavior}, 63(2):621--641, 2008.

\bibitem{SICOMP:DaskalakisGP09}
C.~Daskalakis, P.~W. Goldberg, and C.~H. Papadimitriou.
\newblock The complexity of computing a {Nash} equilibrium.
\newblock {\em SIAM J. Comput.}, 39(1):195--259, 2009.

\bibitem{SICOMP:EtessamiY10}
K.~Etessami and M.~Yannakakis.
\newblock On the complexity of {N}ash equilibria and other fixed points.
\newblock {\em SIAM J. Comput.}, 39(6):2531--2597, 2010.

\bibitem{TEAC:GargMVY18}
J.~Garg, R.~Mehta, V.~V. Vazirani, and S.~Yazdanbod.
\newblock $\exists\mathbb{R}$-completeness for decision versions of
  multi-player (symmetric) {Nash} equilibria.
\newblock {\em ACM Trans. Econ. Comput.}, 6(1):1:1--1:23, 2018.

\bibitem{AAMAS:GattiRS13}
N.~Gatti, M.~Rocco, and T.~Sandholm.
\newblock On the verification and computation of strong {Nash} equilibrium.
\newblock In M.~L. Gini, O.~Shehory, T.~Ito, and C.~M. Jonker, editors, {\em
  {AAMAS} 2013}, pages 723--730. IFAAMAS, 2013.

\bibitem{GEB:GilboaZemel89}
I.~Gilboa and E.~Zemel.
\newblock Nash and correlated equilibria: Some complexity considerations.
\newblock {\em Games and Economic Behavior}, 1(1):80--93, 1989.

\bibitem{TCS:Hansen19}
K.~A. Hansen.
\newblock The real computational complexity of minmax value and equilibrium
  refinements in multi-player games.
\newblock {\em Theory of Computing Systems}, 63(7), 2019.

\bibitem{WINE:HansenHMS08}
K.~A. Hansen, T.~D. Hansen, P.~B. Miltersen, and T.~B. S{\o}rensen.
\newblock Approximability and parameterized complexity of minmax values.
\newblock In C.~H. Papadimitriou and S.~Zhang, editors, {\em {WINE} 2008},
  volume 5385 of {\em Lecture Notes in Computer Science}, pages 684--695.
  Springer, 2008.

\bibitem{AnnMath:Koenigsmann16}
J.~Koenigsmann.
\newblock Defining $\mathbb{Z}$ in $\mathbb{Q}$.
\newblock {\em Annals of Mathematics}, 183(1):73--93, 2016.

\bibitem{SAGT:MehtaVY15}
R.~Mehta, V.~V. Vazirani, and S.~Yazdanbod.
\newblock Settling some open problems on 2-player symmetric {Nash} equilibria.
\newblock In M.~Hoefer, editor, {\em {SAGT} 2015}, volume 9347 of {\em Lecture
  Notes in Computer Science}, pages 272--284. Springer, 2015.

\bibitem{AM:Nash51}
J.~Nash.
\newblock Non-cooperative games.
\newblock {\em Annals of Mathematics}, 2(54):286--295, 1951.

\bibitem{JSC:Renegar92}
J.~Renegar.
\newblock On the computational complexity and geometry of the first-order
  theory of the reals, part {I-III}.
\newblock {\em J. Symb. Comput}, 13(3):255--352, 1992.

\bibitem{GD:Schaefer09}
M.~Schaefer.
\newblock Complexity of some geometric and topological problems.
\newblock In D.~Eppstein and E.~R. Gansner, editors, {\em {GD} 2009}, volume
  5849 of {\em LNCS}, pages 334--344. Springer, 2010.

\bibitem{TOCS:SchaeferS15}
M.~Schaefer and D.~\v{S}tefankovi\v{c}.
\newblock Fixed points, {Nash} equilibria, and the existential theory of the
  reals.
\newblock {\em Theory of Computing Systems}, 60:172--193, 2017.

\end{thebibliography}

\end{document}